\documentclass{article}

\usepackage[utf8]{inputenc}
\usepackage{lmodern} 
\usepackage[bookmarks,colorlinks,breaklinks]{hyperref}
\hypersetup{urlcolor=blue, colorlinks=true, citecolor=green!50!black, linkcolor=blue}
\usepackage{multirow}
\usepackage[T1]{fontenc} %
\usepackage{amsmath}
\usepackage{amsfonts}
\usepackage{amssymb}
\usepackage{amsthm}
\usepackage{thmtools}
\usepackage{pgffor}
\usepackage{xcolor}
\usepackage[lined,boxed,ruled,norelsize,algo2e,linesnumbered,noend]{algorithm2e}
\usepackage{cleveref}
\usepackage{graphicx}
\usepackage{geometry}
\usepackage{enumitem}
\usepackage{url}
\usepackage{tikz}
\usetikzlibrary{quotes}

\declaretheorem[numberwithin=section,refname={Theorem,Theorems},Refname={Theorem,Theorems}]{theorem}
\declaretheorem[numberlike=theorem]{lemma}

\declaretheorem[numberlike=theorem]{corollary}
\declaretheorem[numberlike=theorem]{definition}
\declaretheorem[numberlike=theorem]{property}

\declaretheorem[numberlike=theorem]{fact}

\newcommand{\R}{\mathbb{R}}

\newcommand{\N}{\mathbb{N}}

\newcommand{\F}{\mathbb{F}}

\renewcommand{\P}{\mathbb{P}}

\renewcommand{\tilde}{\widetilde}
\renewcommand{\hat}{\widehat}

\newcommand{\rank}{\operatorname{rank}}
\newcommand{\poly}{\operatorname{poly}}
\newcommand{\polylog}{\operatorname{polylog}}
\newcommand{\sign}{\operatorname{sign}}%
\newcommand{\nnz}{\operatorname{nnz}}%

\foreach \x in {A,...,Z}{%
	\expandafter\xdef\csname m\x\endcsname{\noexpand\mathbf{\x}}
}

\foreach \x in {A,...,Z}{%
	\expandafter\xdef\csname om\x\endcsname{\noexpand\overline{\noexpand\mathbf{\x}}}
}

\foreach \x in {A,...,Z}{%
	\expandafter\xdef\csname c\x\endcsname{\noexpand\mathcal{\x}}
}

\newcommand{\otau}{\noexpand{\overline{\tau}}}
\newcommand{\omu}{\noexpand{\overline{\mu}}}
\newcommand{\osigma}{\noexpand{\overline{\sigma}}}%
\newcommand{\os}{\noexpand{\overline{s}}}
\newcommand{\ot}{\noexpand{\overline{t}}}
\newcommand{\og}{\noexpand{\overline{g}}}
\newcommand{\ou}{\noexpand{\overline{u}}}
\newcommand{\ov}{\noexpand{\overline{v}}}
\newcommand{\ow}{\noexpand{\overline{w}}}
\newcommand{\ox}{\noexpand{\overline{x}}}
\newcommand{\oy}{\noexpand{\overline{y}}}
\newcommand{\oz}{\noexpand{\overline{z}}}
\newcommand{\oc}{\noexpand{\overline{c}}}

\makeatletter
\renewcommand{\paragraph}{%
	\@startsection{paragraph}{4}%
	{\z@}{1.25ex \@plus 1ex \@minus .2ex}{-1em}%
	{\normalfont\normalsize\bfseries}%
}
\makeatother

\makeatletter
\renewcommand{\subparagraph}{%
	\@startsection{paragraph}{4}%
	{\z@}{1.25ex \@plus 1ex \@minus .2ex}{-1em}%
	{\normalfont\normalsize\bfseries}%
}
\makeatother

\def\arxiv{1}
\ifdefined\DEBUG
	\usepackage{showlabels}

\else
\fi

\newcommand{\sankowski}{{1.446}}
\newcommand{\bns}{{1.405}}
\newcommand{\columnUpdate}{{1.528}}

\title{Dynamic Rank, Basis, and Matching}

\author{Jan van den Brand\footnote{Georgia Institute of Technology.} \and Vishal Kumar\footnote{Teachers College, Columbia University. Work done while at Georgia Institute of Technology.} \and Daniel J. Zhang\footnote{Georgia Institute of Technology.}}
\date{}

\begin{document}

\pagenumbering{roman}
\maketitle

\begin{abstract}
    
We study dynamic algorithms for maintaining fundamental algebraic properties of matrices, specifically, rank, basis, and full-rank submatrices, with applications to maximum matching on dynamic graphs.
Prior dynamic algorithms for rank achieve subquadratic update times but scale with the matrix dimension $n$, and could not always maintain the corresponding objects such as a basis or maximum full-rank submatrix.

We present the first dynamic rank algorithms whose update time scales with the matrix rank 
$r$, achieving $\tilde O(r^\bns)$ time per entry-update and $\tilde O(r^\columnUpdate + z)$ per column-update, where 
$z$
is the number of changed entries. 
This extends to $\tilde O(|M|^\bns)$ edge-update time to maintain the size $|M|$ of a maximum matching.
We also give %
dynamic algorithms for maintaining a column-basis subject to column-updates and a maximum full-rank submatrix subject to entry-updates.

\end{abstract}

\tableofcontents

\clearpage

\pagenumbering{arabic}

\section{Introduction}
\label{sec:intro}

Dynamic algebraic algorithms maintain properties of matrices and vector spaces while the input undergoes changes. 
These algorithms, when used as subroutines, have led to improvements in various areas such as convex optimization \cite{CohenLS21,Brand21,JiangKLP020,JiangLSW20}, network-flows \cite{DongGGLPSY22,GaoLP21,BrandLLSSSW21,BrandGJLLPS22,BrandZ23},
online algorithms \cite{DengSW22,HarshawSSZ19},
failure-sensitive oracles \cite{GrandoniV20,WeimannY13,BrandS19,GuR21}.

Many dynamic graph problems (where the goal is to maintain certain graph properties while the graph undergoes edge insertions and deletions) such as
reachability \cite{Sankowski04,GoranciKMP25,KarczmarzS23c}, strong-connectivity \cite{KarczmarzS23b} shortest path \cite{Sankowski05,KarczmarzS23,BrandK23,BrandN19}, matching \cite{Sankowski07},
near additive spanner \cite{BergamaschiHGWW21,BrandFN22}, 
have also been addressed using dynamic algebraic techniques.
In fact, it has been shown that achieving subquadratic upper bounds for many of these problems requires the use of algebraic techniques. \cite{AbboudW14,BergamaschiHGWW21}.

This work focuses on two of the most fundamental properties of matrices and vector spaces: \emph{rank} and \emph{basis}.
Formally, the problem is defined as follows:
we are given an $n\times n$ matrix (i.e., a collection of vectors) that undergoes updates (e.g., column replacements). After each update, we must return the new rank or basis. The ``\emph{update-time}'' refers to the time required to compute and return the new rank or basis after an update.
We consider both \emph{column-updates} (where an entire column/vector is replaced) and \emph{entry-updates} (where a single matrix entry is modified).

\subparagraph{Dynamic Rank \& Matching:}

Given a graph $G$, the size of a maximum matching is equal to half the rank of the graph's Tutte-matrix\footnote{The Tutte-matrix \cite{tutte1947factorization} is the symbolic adjacency matrix where the signs are flipped along the diagonal, i.e. $\mA_{i,j}=-\mA_{j,i}$.}.
Hence, by maintaining the rank of a dynamic matrix under entry-updates, we can maintain the size of a maximum matching in a graph that undergoes edge insertions and deletions. The latter problem is known as  dynamic maximum matching.
Via further reductions, this also extends to other graph problems such as dynamic vertex-capacitated max-flow.

Dynamic maximum matching is a well-studied problem in the dynamic graph area with substantial amount of work in recent years for the approximate setting \cite{Behnezhad23,Kiss23,BlikstadK23,BhattacharyaKS23,AzarmehrBR24,BhattacharyaKSW24,AssadiKK25}.
In the exact setting, the upper bounds for $n$-vertex graphs are $O(n^{\columnUpdate})$ \cite{FrandsenF09}, $O(n^{\sankowski})$ \cite{Sankowski07} and $O(n^\bns)$ \cite{BrandNS19}.
These upper bounds for exact matching are achieved via the reduction from matching to rank, i.e., despite the graph application, these dynamic algorithms are entirely algebraic and maintain the rank of the Tutte-matrix.
The algebraic approach and slower progress in the exact setting can be explained via conditional lower bounds.
The two highest corresponding conditional lower bounds for dynamic matching are $\Omega(n^{\omega-1}) \sim \Omega(n^{1.372})$ based on the triangle detection conjecture%
\footnote{Triangle conjecture: Given an $n$-vertex graph, triangles cannot be detected in $O(n^{\omega-\epsilon})$ time.} \cite{AbboudW14},
and $\Omega(n^{\bns})$ \cite{BrandNS19} based on the hinted-Mv conjecture%
\footnote{The hinted-Mv conjecture \cite{BrandNS19} is a variation of the OMv conjecture \cite{HenzingerKNS15}, parametrized by the fast matrix multiplication exponent. The lower bound $\Omega(n^{\bns})$ is for current bounds on rectangular matrix multiplication \cite{AlmanDWXXZ25}.}.
These lower bounds imply that any upper bound must be algebraic (i.e., make use of fast matrix multiplication) to beat the naive $\tilde O (n^2)$ upper bound\footnote{We use $\tilde O(\cdot)$ to hide $\polylog(n)$ factors.} of recomputing the matching from scratch on dense graphs \cite{BrandLNPSSSW20,ChenKLPGS23}.
Further, depending on which conjecture one believes, the current $O(n^{\bns})$ update time is conditionally optimal or nearly so (with only a $o(n^{0.04})$ gap to triangle detection for current bounds on $\omega$). While these conditional lower bounds seemingly rule out substantial further progress in the exact setting,
they only holds for distinguishing between matching-size (or rank) $n$ vs $n-1$. 
Thus, the lower bounds do not apply to graphs with small maximum matchings (or matrices with small rank), leaving open the possibility of significant improvements for the low-rank/small-matching case.

Such improvements for low-rank matrices have previously been obtained in the static setting.
While computing the rank of an $n\times n$ matrix $\mA$ generally takes $O(n^\omega)$ time, \cite{CheungKL13} showed how to compute the rank in only $\tilde O(\nnz(\mA)+r^\omega)$ time, where $\nnz(\mA)$ is the number of non-zeros (i.e., time to read the input matrix $\mA$) and $r$ is the rank of the input matrix.
So for low-rank matrices (e.g., $r < n^{1/2}$), the rank can be computed in nearly-linear time.

On the one hand, this highlights a gap in the dynamic setting: for sparse low-rank matrices, na\"ive recomputation of the rank from scratch via \cite{CheungKL13} is already faster than the current $O(n^\bns)$ dynamic algorithms. So no non-trivial dynamic algorithm is known for such matrices.
On the other hand, the result of \cite{CheungKL13} suggests that it may be possible to improve the dynamic complexity by having it depend on rank $r$ rather than dimension $n$.
Our first result shows this is indeed possible: The rank $r$ 
can be maintained in $\tilde O(r^\bns)$ time per entry-update. This also implies a dynamic algorithm for maximum matching size with $O(|M|^\bns)$ edge-update time, where $|M|$ is the size of the maximum matching.

\subparagraph{Dynamic Basis and Full-Rank Submatrix:}

The rank of a set of vectors (or a matrix) is the size of the maximum linearly independent subset (or size of the maximum full-rank submatrix).
Previous fastest dynamic rank data structures \cite{Sankowski07,BrandNS19,BrandFNP24} can maintain the rank (i.e., \emph{size} of basis/submatrix), but not the actual basis or submatrix itself.
These algorithms work via a black-box reduction by \cite{Sankowski07}, which reduces dynamic rank to dynamic matrix inverse (here one must maintain the inverse while the input matrix changes over time).
Thus improvements for dynamic matrix inverse \cite{BrandNS19,BrandFNP24,AnandBGZ24} have directly translated to better algorithms for dynamic rank and matching size.

However, when it comes to maintaining an actual basis, the only known algorithm is that by \cite{FrandsenF09} which, rather than a black-box reduction, was a dynamic algorithm specifically designed for maintaining the rank\footnote{\label{foot:frandsen}\cite{FrandsenF09} state their results for dynamic rank. It was not explicitly stated that they maintain basis or submatrix, but internally their algorithm maintains the positions of leading 1s of the row-echelon form. This allows to extract the basis.}.
Thus, this algorithm has not benefited from recent advances in dynamic matrix inversion, and thus the update time remains at $O(n^{\columnUpdate})$, which is slower than the $O(n^\bns)$ update time for rank. %
If there was a black-box reduction from basis/full-rank submatrix to dynamic matrix inverse, similar to how there is a reduction from rank, then these issues would be resolved.

The ability to maintain the size, rather than the corresponding object, is a common limitation of dynamic algebraic algorithms. This issue extends to the applications of dynamic algebraic algorithms such as maintaining reachability, strong-connectivity, maximum matching, or shortest paths in dynamic graphs.
\cite{AbboudW14} have proven that solving these problems in the dynamic setting faster than the trivial $\tilde O(n^2)$ requires the use of fast matrix multiplication, i.e., algebraic techniques.
However, given the algebraic nature, most dynamic algebraic algorithms maintain numbers/quantities rather than the respective combinatorial objects.
For example,
\cite{Sankowski05,BrandN19,KarczmarzS23,BrandFN22} maintain distances (i.e., length of shortest path) rather than the paths.
\cite{Sankowski07} maintains the size of the maximum matching rather than the matching.
For reachability, \cite{Sankowski04,DemetrescuI05} maintain boolean reachability information, but not the respective paths.
\cite{BrandNS19} maintains a boolean output whether the graph is strongly-connected, but not the strongly-connected components.
This inability to maintain the underlying object is a major draw-back of dynamic algebraic methods and has been stated as open problems in \cite{BrandN19,BrandNS19}.
Recently, studying how to maintain objects via dynamic algebraic techniques has developed to an active area of research with first progress for paths:
\cite{KarczmarzMS22} extended dynamic reachability to path reporting, and \cite{BergamaschiHGWW21,AlokhinaB24} can maintain shortest paths.
\cite{KarczmarzS23b} extends strong-connectivity to maintaining the strongly-connected components.
However, for maintaining a maximum full-rank submatrix (rather than just the rank), or a maximum matching (rather than just its size), the question remains open.

In this work, we give the first reduction from basis (and maximum full-rank submatrix) to dynamic rank.
The latter reduces to dynamic matrix inversion \cite{Sankowski07}, thus, all recent improvements to dynamic matrix inverse now also extend to dynamic basis.
As a full-rank submatrix corresponds to the vertex set used by the maximum matching, this is also the first fully dynamic algorithm that maintains information about the exact maximum matching beyond its size.

\subsection{Our Results}

\begin{figure}
    \centering
    {\renewcommand{\arraystretch}{1.2}
    \begin{tabular}{c|r|l|r|l|}
         & \multicolumn{2}{l|}{Entry Update} & \multicolumn{2}{l|}{Column Update} \\
         Output & New & Previous & New & Previous \\
        \hline
        Rank & ${r^{\bns}}$ & $n^\bns$ \cite{Sankowski07,BrandNS19} & ${z+r^{\columnUpdate}}$ & $n^\columnUpdate$ \cite{Sankowski07,BrandNS19}\\
        Column-Basis & $n^{\bns}$ and $r^{\columnUpdate}$ & $n^\columnUpdate$ \cite{FrandsenF09}$^{\ref{foot:frandsen}}$ & ${z+r^{\columnUpdate}}$  \\
        Submatrix & $n^{\bns}$ and $r^{\columnUpdate}$ 
    \end{tabular}
    }
        \caption{New and previous upper bounds for dynamically maintaining rank, basis, full-rank submatrix. Parameter $n$ bounds the width and height of the input matrix, $r$ is its rank, and $z$ is the number of entries changed by the column update (i.e., input size of the update).}
    \label{fig:results}
\end{figure}

\subparagraph{Maintaining the Rank}

Our first result are improved time complexities for maintaining the rank of a dynamic $n\times n$ matrix, scaling with the rank of the matrix rather than its dimension. In the following, $\nnz(\mA)$ is the number of non-zero entries in $\mA$, when given in sparse representation.

\begin{theorem}\label{thm:dynamicrank}\label{thm:main_rank}
    There are randomized dynamic algorithms that maintain, with high probability, the rank of a dynamic $n\times n$ matrix $\mA$. Both have preprocessing time $\tilde O(\nnz(\mA) + \rank(\mA)^\omega)$ and their update times are:
    \begin{itemize}
        \item $\tilde O(\rank(\mA)^{1.405})$ time per entry-update,
        \item $\tilde O(\rank(\mA)^{1.528}+z)$ time per column-update, where $z$ is the number of entries changed in the column.
    \end{itemize}
    The dynamic algorithms are robust against output-adaptive adversaries\footnote{Here the adversary sees the output of the data structure, but not the internal randomness.}.
\end{theorem}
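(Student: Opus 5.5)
The plan is to compress the dynamic $n\times n$ matrix into a small dynamic matrix of dimension $O(r)$ with the same rank, and then run the known dimension-dependent dynamic rank algorithms on it. These are \cite{Sankowski07} combined with the dynamic inverse of \cite{BrandNS19}, giving $O(k^{\bns})$ per entry-update and $O(k^{\columnUpdate})$ per column-update on a $k\times k$ matrix. The compression follows \cite{CheungKL13}: take random sparse sketching matrices $\mL\in\F^{k\times n}$ and $\mR\in\F^{n\times k}$ over a field of size $\poly(n)$. For $k\ge 2r$ (say), $\rank(\mL\mA\mR)=\min(\rank(\mA),k)$ with high probability. The $\nnz$-friendly magical-graph/Vandermonde-type construction of \cite{CheungKL13} is a natural choice, because each column of $\mL$ (row of $\mR$) then has only $\tilde O(1)$ nonzeros. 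Preprocessing computes $\mB=\mL\mA\mR$ in $\tilde O(\nnz(\mA)+r^2)$ time and initializes the inverse-based rank structure on $\mB$ in $\tilde O(k^\omega)=\tilde O(r^\omega)$ time.

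Next I translate the updates. An entry-update $\mA_{ij}\mathrel{+}=\delta$ changes $\mB$ by $\delta\,\mL_{:,i}\mR_{j,:}$. With sparse sketches this touches only $\tilde O(1)\times \tilde O(1)$ entries, so it becomes $\tilde O(1)$ entry-updates to $\mB$. If instead one uses dense sketches, it is a rank-one update, which can be handled as $O(1)$ column-updates after a suitable basis change. I would prefer sparse sketches to keep the $r^{\bns}$ bound. A column-update with $z$ changed entries changes $\mB$ by $\mL\,\Delta a\,\mR_{j,:}$. This is a rank-one update whose left vector $\mL\Delta a$ costs $\tilde O(z)$ to compute. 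If $\mR_{j,:}$ is sparse, the update is $\tilde O(1)$ column-updates of $\mB$, giving total time $\tilde O(z+r^{\columnUpdate})$.

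The main obstacle is that $r$ changes over time while $k$ is fixed. I will keep $k\in[2r,8r]$ by a doubling/halving scheme. When $\rank(\mB)$ reaches $k/2$, or drops below $k/8$, I rebuild with a new $k$ and fresh sketches in $\tilde O(\nnz(\mA)+r^\omega)$ time. Charging this cost directly is problematic because of $\nnz(\mA)$. So I would maintain $\mL\mA\mR'$ for the larger sketch lazily, or keep $O(\log n)$ scales of sketches $k=2^\ell$ in parallel, with only the relevant scales active. Rebuilding a scale from $\mL\mA\mR$ of a larger fixed sketch then costs only $\poly(r)$. The $\Omega(r)$ updates between rebuilds amortize $r^\omega/r=r^{\omega-1}\le r^{\bns}$. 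Standard background rebuilding then de-amortizes this.

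Correctness under output-adaptive adversaries follows because the only randomness is in the sketches and in the rank-to-inverse reduction. The output (the rank) is determined by $\mA$ whenever the whp event holds, so it reveals nothing about that randomness. A union bound over $\poly(n)$ updates, using a field of size $\poly(n)$ and Schwartz–Zippel, completes the argument.
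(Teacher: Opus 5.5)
Your proposal is correct and follows essentially the paper's route. Two-sided sparse sketches from \cite{CheungKL13} turn each entry-update (resp.\ column-update) of $\mA$ into $O(1)$ entry-updates (resp.\ two column-updates) of an $O(k)\times O(k)$ matrix. All scales $k=2^i$ are maintained in parallel at $O(z\log n)$ cost, and only the rank structures with $k=\Theta(\rank(\mA))$ are active, so activation costs amortize (and then de-amortize) to $P(\rank(\mA))/\rank(\mA)$.

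One detail needs fixing. A single sparse sketch preserves the rank only with probability $1-O(1/k+k/|\F|)$, which is merely constant for small $k$. So, as the paper does, you should run $O(\log n)$ independent sketches per scale and report the maximum rank. This is valid because sketching never increases the rank.
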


The term $z$ in the column-update time accounts for reading the input (i.e., change of the column). For dense updates to low-rank matrices, the column-updates are nearly-linear time.
The complexity dependence on $\rank(\mA)$ is conditionally optimal, i.e., any further improvement would also improve the high-rank time complexities of $O(n^\bns)$ and $O(n^\columnUpdate)$.
These are also the first non-trivial dynamic algorithms for low-rank matrices with $\nnz(\mA) < n^{\bns}$ or $n^{1.528}$ respectively.

\Cref{thm:main_rank} is obtained by developing a black-box reduction from maintaining the rank of $n\times n$ matrices to $r\times r$ matrices, where $r = \tilde O(\rank(\mA))$, via techniques of \cite{CheungKL13}.

\begin{restatable}{theorem}{thmSmallRankReduction}\label{thm:rankreduction}
    Assume there is a data structure that maintains the rank of a dynamic $n\times n$ matrix $\mA$ with $U(n,z)$ column-update time, where $z$ is the number of changed entries. Let $P(n)$ be the preprocessing time.

    Then there is a randomized dynamic algorithm for maintaining the rank of $\mA$ after $\tilde O(\nnz(\mA)+P(\rank(\mA)))$ time preprocessing with
    $$\tilde O(z + U(\rank(\mA), z)+P(\rank(\mA))/\rank(\mA))$$
    update time for column updates that change $z$ entries. The output is correct with high probability and the dynamic algorithm is robust against output-adaptive adversaries.
\end{restatable}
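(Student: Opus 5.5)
Following \cite{CheungKL13}, I would compress $\mA$ with random sparse matrices into a small matrix of equal rank w.h.p., maintain that small matrix under updates, and feed it to the assumed $r\times r$ data structure. Concretely, let $k \ge \rank(\mA)$ be a current rank estimate (initially computed via \cite{CheungKL13} in $\tilde O(\nnz(\mA)+\rank(\mA)^\omega)$ time) and set $r = 2k$. Take random matrices $\mL\in\F^{r\times n}$ and $\mR\in\F^{n\times r}$ with $O(\log n)$ non-zeros per column of $\mL$ (resp.\ per row of $\mR$), drawn from a field of size $\poly(n)$, or a suitable extension. By the magnification/compression lemma of \cite{CheungKL13}, $\rank(\mL\mA\mR)=\min(\rank(\mA),r)$ w.h.p. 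Since $\rank(\mA)\le k<r$, the rank of the small $r\times r$ matrix $\mB=\mL\mA\mR$ equals $\rank(\mA)$.

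\textbf{Handling a column update.} Suppose column $j$ of $\mA$ changes by $\delta$ with $z$ non-zeros. Then $\mB$ changes by $(\mL\delta)(e_j^\top\mR)$. The vector $\mL\delta$ costs $\tilde O(z)$ to compute, and $e_j^\top\mR$ has only $O(\log n)$ non-zeros. So the update to $\mB$ is a sum of $O(\log n)$ column updates, each touching column $i$ of $\mB$ for some $i$ in the support of row $j$ of $\mR$. Each of these changes at most $\min(r,\tilde O(z))$ entries, for a total of $\tilde O(z+U(r,z))$. Robustness against output-adaptive adversaries follows because $\mL,\mR$ are hidden and the output, being the true rank w.h.p., reveals nothing about them. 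The remaining concern is a union bound over polynomially many updates, which holds per time step since the adversary's choices are independent of the randomness given correct outputs.

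\textbf{Tracking a changing rank.} Here $r$ must stay $\Theta(\rank(\mA))$, which I expect to be the main obstacle. A column update changes the rank by at most one. If the rank reaches $k$, i.e.\ approaches $r/2$, I double: set $k\gets 2k$, resample $\mL,\mR$, and rebuild the structure. If the rank drops below $k/4$, I halve similarly. Rebuilding needs $\mL\mA\mR$, which naively costs $\nnz(\mA)$ and is too expensive. I would first try an amortization with a background rebuild: since $\Omega(k)$ updates separate consecutive rebuilds, I compute the new $\mL'\mA\mR'$ and the new preprocessing incrementally over those updates while the old structure answers queries. Maintaining $\mL'\mA\mR'$ under updates via the same rank-one formula costs $\tilde O(z)$ per update, and the preprocessing $P(r)$ spread over $\Theta(r)$ updates gives the $P(\rank(\mA))/\rank(\mA)$ term.

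The remaining difficulty is the initial product $\mL'\mA$ with a fresh $\mL'$, which depends on $\nnz(\mA)$ rather than on $r$. To avoid it, I would pre-sample a nested family of sketches, for instance with rows of $\mL$ (and columns of $\mR$) for all dyadic sizes up to $n$ maintained lazily: keep $\mL_{\max}\mA$ in sparse form updated in $\tilde O(z)$ per update, so that the restriction to the first $r$ rows is available for free. I expect this nested-sketch and amortization bookkeeping, together with checking that \cite{CheungKL13}'s guarantee holds for prefixes of one random matrix, to be the crux. The rank bounds themselves are immediate from the lemma.
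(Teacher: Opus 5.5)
\textbf{Gap: how the sketched product is obtained when the scale changes.} Your architecture matches the paper: sparse rank-preserving sketches at dyadic scales $\Theta(\rank(\mA))$, each column update of $\mA$ turned into $O(1)$ column updates of the small matrix, and the activation cost $P(r)$ spread over $\Theta(r)$ updates. But the step you yourself flag as the crux is left open, and the resolution you sketch fails.

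A prefix of one sparse sketch is not a sketch. Suppose $\mL_{\max}\in\F^{N\times n}$ has $O(\log n)$ non-zeros per column spread over $N$ rows. For $r\ll N/\log n$, most columns of its first $r$ rows are identically zero, so $\mL_{[r]}\mA$ simply discards most rows of $\mA$. For a diagonal $\mA$ of rank $r/2$, $\rank(\mL_{[r]}\mA)$ is only $O(r^2\log n/N)$ in expectation. A sparse sketch of size $r$ has to hash all $n$ coordinates into $r$ buckets, and prefixes do not do that. Moreover, the restricted product would still have to be multiplied by a size-$r$ right sketch, which is not free either. Coarsening one big sketch by summing groups of rows would give a valid sketch, but forming the coarsened product from a maintained big product again costs up to $\nnz(\mA)$. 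So your background rebuild still pays $\nnz(\mA)/r$ per update.

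\textbf{The missing observation.} Nothing ever needs to be rebuilt or resampled.
\begin{itemize}
\item Sample \emph{independent} sketches $\mM_i,\mN_i$ as in Lemma~\ref{lem:rankprojection} for every scale $k=2^i$, $i\le\log n$.
\item Compute all products $\mM_i^\top\mA\mN_i$ once during preprocessing, at $O(\nnz(\mA))$ each.
\item Update \emph{all} of them eagerly (not lazily) on every column update, whether or not the rank structure at that scale is running. Each row of $\mM_i,\mN_i$ has two non-zeros, so this costs $O(z)$ per scale and copy, i.e.\ $\tilde O(z)$ in total. That is exactly the $\tilde O(z)$ term in the bound.
\end{itemize}
A change of scale then never touches $\mA$. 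Activating scale $i$ only runs the assumed preprocessing $P(2^i)$ on an already-current $O(2^i)\times O(2^i)$ matrix. This can be spread over $\Theta(2^i)$ updates, with the queued updates replayed two per step. This is the inactive/active structure of Theorem~\ref{thm:boundedrank}.

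\textbf{Two smaller corrections.}
\begin{itemize}
\item Lemma~\ref{lem:rankprojection} only gives constant success probability. High probability comes from $O(\log n)$ independent copies per scale and returning the maximum rank, which is valid because each copy can only underestimate $\min(\rank(\mA),r)$.
\item Computing the initial rank via \cite{CheungKL13} costs $\rank(\mA)^\omega$, which need not be $O(P(\rank(\mA)))$. Instead, activate scales $2^i$ in increasing order until the reported rank stops growing; this costs $\sum_i P(2^i)=O(P(\rank(\mA)))$.
\end{itemize}
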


Applying this reduction to the best-known dynamic rank algorithms ($O(n^\columnUpdate)$ for column-updates, and $O(n^\bns)$ for entry-updates \cite{Sankowski05,BrandNS19}), results in \Cref{thm:main_rank}.
These upper bounds depend on the current best bounds for fast rectangular matrix multiplication \cite{AlmanDWXXZ25}.

Since the rank of a graph's Tutte-matrix is twice the size of a maximum matching, \Cref{thm:main_rank} directly implies algorithms for dynamic maximum matching size.
For bipartite graphs, a reduction by \cite{KaoLST01} allows us to maintain the weight of the maximum weight matching as well.

\begin{restatable}{theorem}{thmMatching}\label{thm:main_matching}
    There are randomized dynamic algorithms that maintain with high probability
    \begin{itemize}
        \item the weight $k$ of a maximum weight matching in a bipartite graph with $O(W\cdot k^{\bns})$ time per edge-update, where $W$ is the weight of the updated edge.
        \item the size $|M|$ of the maximum matching $M$ in a general graph in $O(|M|^{\bns})$ time per edge-update.
    \end{itemize}
\end{restatable}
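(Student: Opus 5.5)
Both items reduce to \Cref{thm:main_rank} (entry-update version) applied to a suitable Tutte-type matrix; an edge update then becomes a constant number of entry updates.

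\textbf{General graphs.} Take the Tutte matrix $\mT$ of $G$ and substitute independent uniformly random values from $\F_p$, with $p=\poly(n)$, for the indeterminates. By Schwartz--Zippel, with high probability $\rank(\mT)=2|M|$. I will take the random values from the data structure's internal randomness, so the adversary never sees them. This matters because the adversary is output-adaptive, and the exact rank identity makes the output a deterministic function of the graph whenever the answer is correct.

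An edge insertion or deletion of $\{u,v\}$ changes exactly the two entries $\mT_{u,v}$ and $\mT_{v,u}$, giving two entry updates. For an insertion I draw a fresh random value $x_{uv}$ and set $\mT_{u,v}=x_{uv}$, $\mT_{v,u}=-x_{uv}$; for a deletion I set both entries to $0$. \Cref{thm:main_rank} then yields update time $\tilde O(\rank(\mT)^{\bns})=\tilde O(|M|^{\bns})$. Preprocessing starts from the empty graph, or costs $\tilde O(\nnz+|M|^\omega)$.

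The subtle point is the high-probability guarantee over $\poly(n)$ updates. The random evaluation must be correct for every graph in the sequence simultaneously. Since each graph's evaluation is fresh with respect to the adversary's choices, a union bound over the $\poly(n)$ time steps suffices, because the failure probability per step is $n/p$. Hence it suffices to take $p$ a large enough polynomial. Items stated as $O(\cdot)$ rather than $\tilde O$ presumably absorb logs, or the field operations are counted as $O(1)$.

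\textbf{Bipartite weighted.} I will use the reduction of \cite{KaoLST01}: the maximum weight matching value of a bipartite graph with integer weights equals
\[
\sum_{w\ge1}\bigl(\text{maximum matching size of } G_w\bigr),
\]
where $G_w$ consists of suitably unfolded copies of edges. Concretely, an edge of weight $W$ contributes to $W$ layered graphs (equivalently, the unfolded graph with vertex copies). The standard statement is that $\mathrm{mwm}(G)=\mathrm{mm}(G^{\mathrm{unfold}})$, where each edge of weight $W$ creates $W$ edges among vertex copies, and the maximum matching in the unfolded graph has size $k$.

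So I will maintain a single rank structure on the bipartite (Edmonds) matrix of the unfolded graph. An update to an edge of weight $W$ becomes $O(W)$ entry updates, each costing $\tilde O(k^{\bns})$ by \Cref{thm:main_rank}, since the rank equals $k$. The total is $O(W\cdot k^{\bns})$.

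The main obstacle is the unfolded graph's dimension, which is $n\cdot W_{\max}$. This is harmless because \Cref{thm:main_rank} depends only on the rank and $\nnz$ beyond logarithmic factors. I will also need to verify that the matching size of the unfolded graph equals the weight (the \cite{KaoLST01} lemma), and keep the random substitutions hidden as in the general case.
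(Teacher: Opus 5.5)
Your proposal is correct and follows the paper's proof. For general graphs, you use a randomized Tutte matrix with two entry updates per edge update and apply \Cref{thm:main_rank}. For the weighted bipartite case, you use the unfolded graph of \cite{KaoLST01}, where a weight-$W$ edge update touches $W$ entries. The only cleanup is to drop your ``$\sum_{w}$ of matching sizes of layers'' description: read literally it is not a valid identity, and all you actually use is $\mathrm{mwm}(G)=\mathrm{mm}(H)$ for the single unfolded graph $H$, exactly as the paper does.
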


The previous best bounds for these problems were $O(n^\bns)$ for maximum matching and $O(W^{2.405}\cdot n^\bns)$ for the weighted case \cite{Sankowski05,BrandNS19}.

\subparagraph{Maintaining a Basis}

The dynamic algorithms of \Cref{thm:main_rank}, and those of \cite{Sankowski05,BrandNS19} can only maintain the rank of a matrix (i.e., set of vectors) but not a corresponding basis.
We extend the results to maintain a column-basis:

\begin{theorem}\label{thm:main_basis}
    There is a randomized dynamic algorithm that maintains with high probability a column-basis of a dynamic $n\times n$ matrix $\mA$ in $\tilde O(z+\rank(\mA)^{1.528})$ time per column update. 
    Here $z$ is the number of entries changed during the column update.
    
    The dynamic algorithm is robust against output-adaptive adversaries.
\end{theorem}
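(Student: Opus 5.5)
Our plan is to reduce maintaining a column-basis to maintaining the rank, via \Cref{thm:main_rank}, together with the column-update version of \Cref{thm:rankreduction}. We keep a current basis $B\subseteq[n]$ of column indices and handle a column update of column $j$ with a constant number of rank queries on modified matrices. Each query is realized as one or two column-updates to an auxiliary rank data structure. Let $\mA$ be the matrix before the update and $\mA'$ after.

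\textbf{Case 1: $j\notin B$.} Let $\mA_B$ denote $\mA$ with all non-basis columns zeroed. We maintain a second rank structure on $\mA_B$, alongside the one on $\mA$. The columns of $B$ are unchanged, so $B$ stays independent. It stays a basis unless $\rank(\mA')=|B|+1$, in which case we add $j$ to $B$. One update of the structure for $\mA$ decides this, and if $j$ is added we insert column $j$ into $\mA_B$.

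\textbf{Case 2: $j\in B$.} Then $B\setminus\{j\}$ is independent. Its span $S$ has dimension $|B|-1$ and contains all of $\mA'$ except possibly the new column $j$ and the old non-basis columns.

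First we test whether $\mA'_{:,j}\notin S$. We update column $j$ of $\mA_B$ to the new vector and compare the rank; if it is $|B|$, we keep $j$. If the new column lies in $S$, we must find a replacement column $k\notin B$ with $\mA_{:,k}\notin S$. One exists precisely when $\rank(\mA')=|B|$. This search is the main obstacle: querying each candidate costs too much.

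To handle it, we would use a random linear combination trick. Let $\mP$ be a projection whose kernel is exactly $S$. Instead of computing $\mP$ explicitly, we perform a group-testing binary search over the non-basis columns. Partition the non-basis columns into halves $H_1,H_2$. Then $\rank(\mA_{B\setminus\{j\}\cup H_1})>|B|-1$ if and only if some column of $H_1$ escapes $S$.

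Each such test needs many column updates, so this is too slow directly. Instead, we maintain in the auxiliary matrix one extra column $c=\sum_{k\notin B}\lambda_k\mA_{:,k}$ with random $\lambda_k$. By Schwartz--Zippel, with high probability $c\notin S$ iff some non-basis column escapes $S$. Column-updates to $\mA$ change $c$ by at most $z$ entries, so $c$ can be maintained in $O(z)$ time per update.

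For the binary search, we use a segment tree of $O(\log n)$ levels. Each node stores the random combination of the non-basis columns in its range. An update touches $O(\log n)$ nodes with $O(z)$ entries each. Descending the tree, we test each node's child vector against $S$ by placing it as one column in the $\mA_B$ structure, using $O(\log n)$ column-updates. The leaf reached gives $k$, which replaces $j$; this in turn updates the tree, since $k$ leaves the non-basis set and $j$ joins it.

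Thus each update uses $O(\log n)$ column-updates on a matrix of rank $O(\rank(\mA))$, with $\tilde O(z)$ entries changed each. Combining these via \Cref{thm:rankreduction} and \Cref{thm:main_rank} gives $\tilde O(z+\rank(\mA)^{1.528})$ per update. The node vectors are dense, but \Cref{thm:rankreduction}'s sketching handles a dense column in $O(n)$ time, which is not $O(z)$. To avoid this, the sketch should be applied to the tree vectors linearly and maintained incrementally, so each inserted column costs only $\tilde O(r)$.

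Robustness to output-adaptive adversaries is the remaining subtle point, because the reported basis depends on $\lambda$. We would fix it by resampling the $\lambda$ along the searched path after each search, so that the revealed information is independent of the future coins.
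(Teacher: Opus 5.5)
Your mechanism is the same as the paper's (\Cref{lem:Av} and the binary search in the proof of \Cref{thm:basis_reduction}): one fixed random combination per dyadic interval of columns, a descent that checks whether appending a node's vector to the current basis matrix raises its rank, and a left sketch applied incrementally. However, your case analysis has a hole. In Case 2, when the new column $j$ escapes $S$, you keep $B$ and stop. But $\rank(\mA')$ can now be $|B|+1$, because a non-basis column whose representation used the old column $j$ need not lie in the span of the new basis. Concretely, take $\mA$ with columns $e_1,e_1$ and $B=\{1\}$, and update column $1$ to $e_2$. Then $S=\{0\}$ and $e_2\notin S$, so you output $B=\{1\}$ although $\rank(\mA')=2$. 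The paper avoids this by always running one search for a column outside the span of the current basis after handling the updated column. One search suffices, since a column update changes the rank by at most one.

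Restricting the tree to non-basis columns is also a mistake. Every swap then forces you to add or subtract a whole sketched column at $O(\log n)$ nodes, for every sketch scale. You must keep all scales, since rebuilding a tree when a scale is activated costs $\Omega(\nnz(\mA))$. So a swap costs $\Omega(\nnz(\mA_{:,k}))$, which can be $\Omega(n)$ rather than $\tilde O(z+\rank(\mA)^{1.528})$. Letting the combination range over all columns, as the paper does, costs nothing in correctness: every basis column, and a removed column $j$, lies in the span you test against. It also makes the tree independent of $B$, so an update changes only $O(z)$ entries per node.

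Your robustness argument is also off. The premise that the reported basis depends on $\lambda$ is false outside a small failure event. Suppose the descent always tests the left child first. Whenever all Schwartz--Zippel and sketching tests succeed (a union bound over polynomially many tests on any fixed update sequence), the algorithm outputs exactly what a deterministic rule would: drop the updated column if it became dependent, then add the leftmost column independent of the current basis. The output is therefore pseudo-deterministic and reveals nothing about $\lambda$ or the sketches. An output-adaptive adversary is then no stronger than an oblivious one; this is the paper's argument. Your proposed fix, resampling $\lambda$ along the search path, is unnecessary. As stated it also does not fit the time bound, since nodes near the root combine up to $n$ columns and refreshing their coefficients means recomputing those combinations.
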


In case of column-updates, the time complexity matches that of \Cref{thm:main_rank} for maintaining the rank. For entry-updates (i.e., when $z=1$) there is a small slow-down compared to the $O(n^\bns)$ update time for maintaining the rank.
It still improves over the previous best bound $O(n^\columnUpdate)$ for maintaining a basis under entry-updates \cite{FrandsenF09}.

Our result is the first to maintain a basis subject to column-updates.
Column-updates to a matrix correspond to adding and removing vectors to some set $V=\{v_1,...,v_n\}$ and maintaining a maximum linear independent subset of them. In this context, column-updates are arguably more natural than entry-updates.

\Cref{thm:main_basis} is obtained by developing a reduction and using the $O(n^{1.528})$ column-update time dynamic rank algorithm from \cite{BrandNS19}.

\begin{restatable}{theorem}{thmBasisReduction}\label{thm:basis_reduction}
    Given a dynamic algorithm for maintaining the rank of a dynamic matrix $\mA \in \F^{n\times n}$ with column-update time $U(n)$ and processing time $P(n)$.

    Then there is dynamic algorithm maintaining a column-basis of $\mA \in \F^{n\times n}$ with column-update time $\tilde O(P(\rank(\mA))/\rank(\mA) + U(\rank(\mA))+z)$. Here $z$ is number of entries that changed in the updated column. 

    The dynamic algorithm is robust against output-adaptive adversaries.
\end{restatable}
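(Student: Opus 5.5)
We reduce to an $r\times n$ compressed matrix and then maintain the basis greedily. First, as in \Cref{thm:rankreduction}, we use the technique of \cite{CheungKL13}. Pick a random sparse (or random Vandermonde-like) matrix $\mS\in\F^{k\times n}$ with $k=\tilde O(\rank(\mA))$. With high probability $\rank(\mS\mA)=\rank(\mA)$, and the column dependencies of $\mS\mA$ coincide with those of $\mA$. This holds because for a fixed matrix, $\mS$ preserves the column space injectively on $\operatorname{im}(\mA)$ w.h.p. A column-update of $\mA$ changing $z$ entries becomes a column-update of $\mS\mA$ costing $\tilde O(z)$ time to compute if $\mS$ has $\polylog$ nonzeros per column. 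When the rank grows past $k/2$ or shrinks below $k/4$, we rebuild with a new $k$ in $P(\rank(\mA))$ time. Since at least $\Omega(\rank(\mA))$ updates pass between rebuilds, this amortizes to $P(r)/r$; standard de-amortization gives worst case. Output-adaptivity is fine: $\mS$ is only used to certify linear (in)dependence, which is determined by $\mA$ itself, so the output (a basis of $\mA$) reveals nothing about $\mS$ beyond $\mA$.

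So it suffices to maintain a column basis of a $k\times n$ matrix $\mB=\mS\mA$ under column updates. The basis itself has at most $k$ columns. Keep a current basis set $\cB$ of column indices, and maintain the rank data structure on the $k\times k$ matrix $\mB_{\cB}$, padded with zero columns. Also run a second instance on $\mB_{\cB}$ augmented with one test slot column.

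We handle an update to column $j$ by cases. If $j\notin\cB$, we insert the new column into the test slot and query whether the rank increases. If it does, we add $j$ to $\cB$ in a free slot; otherwise we do nothing. This uses $O(1)$ rank updates. If $j\in\cB$, we first remove $j$ (zero its slot). Then we reinsert the new column $j$ if it increases the rank. The basis may now be one short, since the removal may drop the rank of $\mB$ only if no other column can replace $j$.

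The main obstacle is finding a replacement column $i\notin\cB$ with $\mB_{:,i}\notin\operatorname{span}(\mB_{\cB})$ without scanning all $n$ columns. My plan is to compress the non-basis columns as well. Maintain $\mB\mR$ for a random $n\times k$ matrix $\mR$, and take the random combination $\mB\mR$ restricted to columns outside $\cB$; its rank certifies whether some outside column escapes the span. To locate the witness, use a binary-search/segment-tree over column indices. Each tree node stores a random combination $\mB_{I}\mathbf{g}_I$ of its column range $I$, and every column update changes $O(\log n)$ node vectors at $\tilde O(z)$ cost. Testing "does node $I$ contain a column escaping the span of $\mB_\cB$" is one rank query (insert the node vector in the test slot). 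Walking down the tree costs $O(\log n)$ rank queries, each $U(\rank(\mA))$, giving total $\tilde O(U(r)+z+P(r)/r)$. Correctness w.h.p. follows since a random combination of vectors escapes a subspace iff some vector does, with failure probability $\le n/|\F|$, which is handled by working over a large extension field.
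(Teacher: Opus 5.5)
Your architecture matches the paper's: a \cite{CheungKL13} row-sketch that preserves the column matroid, the rank structure run on the padded current basis, and a descent through dyadic column ranges of precomputed random combinations (the paper's \Cref{lem:Av}) to find an escaping column with $O(\log n)$ rank updates. Your single test when $j\notin\cB$ is a valid shortcut. The step that fails as written is the rescaling. You say switching to a new $k$ costs $P(\rank(\mA))$, but a fresh sketch $\mS'$ requires the new $\mB=\mS'\mA$ and all node vectors $\mB_I\mathbf{g}_I$ again. Computing these from scratch takes time proportional to $\nnz(\mA)$. The lazy alternative also fails: keeping uncompressed combinations $\mA_I\mathbf{g}_I$ and sketching them on demand costs up to $n$ per query. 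For a dense matrix of rank $O(1)$, rebuilds can occur every $O(1)$ updates and each costs $\Theta(n^2)$, far above $\tilde O(P(r)/r+U(r)+z)$.

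The missing idea, which the paper uses, is to fix one sketch $\mS_j$ per scale $2^j$ at initialization. You then keep $\mS_j\mA$ and all compressed node vectors $(\mS_j\mA)_I\mathbf{g}_I$ current for every $j=1,\ldots,\log n$ at all times. A column update costs $\tilde O(z)$ per scale, hence $\tilde O(z)$ in total. Changing scale then only initializes the rank structure on the already available $(\mS_j\mA)_{\cB}$, which is exactly where the amortized $P(\rank(\mA))/\rank(\mA)$ term comes from.

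There are also two smaller repairs. First, \Cref{lem:rankprojection} gives rank preservation only with probability $1-O(1/k+k/|\F|)$, so a single sparse $\mS$ is not w.h.p. You can fix this by stacking $O(\log n)$ independent such sketches into one $\tilde O(k)\times n$ matrix. It still has $O(\log n)$ non-zeros per column, and its rank is at least that of every block. This is arguably tidier than the paper's ``run $O(\log n)$ copies and take the maximum''. Second, output-adaptivity needs the output to be w.h.p.\ a fixed function of the update sequence, not merely ``a basis of $\mA$''. That requires a canonical descent: test the left child first, so you always return the leftmost escaping column. Then the output reveals nothing about $\mS$ or the $\mathbf{g}_I$'s, as in the paper's argument.
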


\subparagraph{Maintaining a Full-Rank Submatrix}

Given a matrix $\mA \in \F^{n\times n}$, computing a full-rank submatrix corresponds to first constructing a column-basis (i.e., subset of columns of $\mA$) 
and row-basis (i.e., subset of rows of $\mA$), and then taking the submatrix induced by the row and column set\footnote{This only works when we have a column- and row-\emph{basis}. It does not work when these are merely linearly independent column (or row) vectors. 
\ifdefined\arxiv See \Cref{app:basis} for details.
\else See full version for details.
\fi}.

Running \Cref{thm:main_basis} or \cite{FrandsenF09} on $\mA$ and $\mA^\top$ would allow to maintain a maximum full-rank submatrix in $O(\rank(\mA)^{\columnUpdate})$ or $O(n^\columnUpdate)$ time.
However, maintaining the rank can be done in $O(\rank(\mA)^\bns)$ time via \Cref{thm:main_rank} (or $O(n^\bns)$ via \cite{Sankowski07,BrandNS19}).

To address these issues, we develop a new approach for maintaining a maximum full-rank submatrix.
We show that maintaining such a submatrix (and thus also the rank of a matrix) reduces to merely detecting when a dynamic matrix becomes singular (i.e., non-invertible; $\det(\mA)=0$) for the first time.

\begin{restatable}{theorem}{thmSubmatrix}\label{thm:submatrixreduction}
    Assume there is a data structure that detects when a dynamic $n\times n$ matrix $\mA$ becomes singular for the first time, with $U(n)$ time per entry-update. Let $P(n)$ be the preprocessing time.

    Then there is a dynamic algorithm for maintaining a maximum full-rank submatrix (i.e., sets $I,J\subset[n]$ with $\det(\mA_{I,J})\neq 0$ and $|I|=|J|=\rank(\mA)$) after $O(P(n)+n^\omega)$ time preprocessing with
    $\tilde O(U(n))$ entry-update time.

    \noindent
    The reduction is randomized, correct with high probability, and works against an output-adaptive adversary.
\end{restatable}

A reduction from rank to singularity-detection was previously given by \cite{Sankowski07}, but it could only maintain the rank, and not a corresponding object such as a basis or full-rank submatrix. 

By applying our reductions to dynamic determinant algorithm of \cite{BrandNS19}, we obtain the following result:
\begin{theorem}\label{thm:main_submatrix}
    There is a dynamic algorithm that maintains with high probability a $\rank(\mA)\times\rank(\mA)$-sized full-rank submatrix of a dynamic $n\times n$ matrix $\mA$. After each update, the data structure returns the set of row/column-indices that define the submatrix. The preprocessing time is $O(n^\omega)$ and the update time is $\tilde O(n^{1.405})$ per entry update.
\end{theorem}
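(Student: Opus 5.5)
Theorem~\ref{thm:main_submatrix} follows by instantiating the reduction of Theorem~\ref{thm:submatrixreduction} with the dynamic determinant (singularity-detection) data structure of \cite{BrandNS19}. That data structure maintains $\det(\mA)$, and hence whether $\mA$ is singular, for a dynamic $n\times n$ matrix. It has $O(n^\omega)$ preprocessing and $O(n^{1.405})$ time per entry-update, using the current bounds on rectangular matrix multiplication \cite{AlmanDWXXZ25}. Detecting the first time the matrix becomes singular is immediate: after each entry-update, check whether the maintained determinant (or its random-evaluation surrogate) is zero. This gives a singularity detector with $U(n)=O(n^{1.405})$ and $P(n)=O(n^\omega)$.

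The steps are as follows.
\begin{enumerate}
\item Verify that the determinant structure of \cite{BrandNS19} meets the interface Theorem~\ref{thm:submatrixreduction} assumes. It must support entry-updates on the matrices the reduction constructs, which are presumably $\mA$ augmented or padded with random entries so that the relevant matrix starts out nonsingular. Its randomness must also tolerate the output-adaptive adversary, which holds because the only output it exposes is the singular/nonsingular bit, and that bit is correct with high probability.
\item Plug in the bounds. Preprocessing becomes $O(P(n)+n^\omega)=O(n^\omega)$, and the update time becomes $\tilde O(U(n))=\tilde O(n^{1.405})$ per entry update.
\item Observe that the reduction returns the index sets $I,J$ with $|I|=|J|=\rank(\mA)$ and $\det(\mA_{I,J})\neq 0$ explicitly. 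These sets change by $\tilde O(1)$ elements per update, since the reduction's update time is $\tilde O(U(n))$ total, so they can be reported after each update.
\end{enumerate}

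The main obstacle is step~1. Theorem~\ref{thm:submatrixreduction} may internally need the singularity detector to be restarted or rolled back, for instance to test several candidate swaps of rows or columns and undo them. Each undo is itself an entry-update, so this costs only $O(1)$ extra updates per test. Any periodic reinitialization costing $O(n^\omega)$ must be amortized over enough updates, or deamortized with the standard two-copy technique, so that it stays within $\tilde O(n^{1.405})$. I would first check that \cite{BrandNS19}'s structure already works in worst-case time with rebuilding built in, in which case the composition is direct.
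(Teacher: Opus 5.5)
Your proposal is correct and matches the paper's proof, which obtains the theorem exactly by plugging the dynamic determinant/singularity-detection structure of \cite{BrandNS19} into \Cref{thm:submatrixreduction}; that structure is run on the $O(n)\times O(n)$ gadget matrix $\mH$, so $U$ and $P$ are evaluated at $O(n)$. One small nit: in your step~3, the claim that $I,J$ change by $\tilde O(1)$ elements does not follow from the reduction's update time, but you do not need it, since $I,J$ can be written out explicitly in $O(n)$ time per update, which is well within $\tilde O(n^{1.405})$.
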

This result also implies that we can maintain the vertex set of a maximum matching in a dynamic graph undergoing edge updates.
This improves over previous algorithms, which could only maintain the size of the matching \cite{Sankowski07,BrandNS19,BrandFNP24}.

\subparagraph{Remark: Combinatorial Approach}

A combinatorial approach for maintaining a matching (besides recomputation from scratch in $\tilde O(n^2)$ time \cite{BrandLNPSSSW20,ChenKLPGS23}) is to find an augmenting path in $O(n^2)$ time after each update.
Beating quadratic time requires algebraic techniques \cite{AbboudW14}.

Given how there is an $O(n^2)$ combinatorial and $O(n^{1.405})$ algebraic algorithm, it is natural to ask if there is also an $O(|M|^2)$ combinatorial algorithm matching the $O(|M|^{1.405})$ proven in \Cref{thm:main_matching}.
We observe that there is an $O(|M|^2)$ time combinatorial algorithm based on augmenting paths using vertex sparsification from \cite{GuptaP13}.

\ifdefined\arxiv
\begin{restatable*}{theorem}{thmCombinatorial}\label{thm:main_combinatorial}
    There is a deterministic algorithm that maintains the maximum matching $M \subset E$ in a dynamic bipartite graph $G=(V,E)$ subject to edge updates in $O(|M|^2)$ time.
\end{restatable*}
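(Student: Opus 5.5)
We prove \Cref{thm:main_combinatorial} by maintaining a maximum matching $M$ explicitly and repairing it with at most one augmenting-path search per update, run on a sparsified graph with $O(|M|)$ vertices and $O(|M|^2)$ edges. In a bipartite graph, inserting or deleting one edge changes the maximum matching size by at most one. So after each update it suffices to do the following, in order. If the update deletes an edge of $M$, remove it from $M$. Then search once for an augmenting path with respect to the current $M$, and augment along it if one is found. By Berge's theorem, $M$ is maximum exactly when no augmenting path exists. After a deletion of an edge of $M$, at most one augmentation can be needed, since the maximum shrank by at most one relative to $|M|$. After an insertion, likewise at most one augmentation is needed. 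The whole cost is therefore the cost of one augmenting-path search, which we must bring down from $O(n^2)$ to $O(|M|^2)$.

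The key observation, in the spirit of the vertex sparsification of \cite{GuptaP13}, concerns the set $V(M)$ of matched vertices, which has size $2|M|$. Every unmatched vertex has all its neighbors in $V(M)$, because otherwise $M$ would not be maximal. An augmenting path alternates between matched and unmatched edges. Its internal vertices are all matched, and only its two endpoints are free. Hence the path is determined by the following data: the alternating path inside $G[V(M)]$, plus, at each end, the fact that the end vertex $u\in V(M)$ has some free neighbor, with the two free neighbors being distinct. In bipartite graphs the two endpoints lie on opposite sides, so distinct free neighbors are automatic.

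To exploit this, we maintain for every $u\in V(M)$ a counter $f(u)$ of its free neighbors. A single augmenting-path search then proceeds as follows. Start a BFS in $G[V(M)]$ from all left vertices $u$ with $f(u)>0$. Traverse non-matching edges from left to right and matching edges from right to left. Stop when reaching a right vertex $v$ with $f(v)>0$. This search touches only $O(|M|^2)$ edges, provided the adjacency among matched vertices is stored as a $2|M|\times 2|M|$ structure, for example with adjacency lists restricted to $V(M)$. Recovering the actual free endpoints takes $O(|M|)$ time if, for each $u$, we also keep a list of its free neighbors.

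The main obstacle is maintenance under augmentation. When we augment, two free vertices $a$ and $b$ become matched. Their neighbors have their counters $f$ decreased, and $a$ and $b$ join $V(M)$, so their edges to other matched vertices must be added to the restricted adjacency. Each of $a$ and $b$ has at most $2|M|$ neighbors, all of them matched. So this costs $O(|M|)$, although $a$'s free-neighbor list must be built by scanning its neighbors, which are all in $V(M)$. Deleting a matched edge $(x,y)$ makes $x$ and $y$ free, and that update requires $O(n)$ work on the counters $f$ of their neighbors. This is the delicate point. We resolve it by charging to $\deg(x)$: as observed above, every neighbor of a free vertex is matched, but $x$ itself, while matched, can have up to $n$ free neighbors. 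To avoid this cost we switch to a lazy criterion. Instead of maintaining the counters $f$, we keep for each free vertex the list of its neighbors, which has size at most $2|M|$. When the search needs a matched vertex $u$ with a free neighbor, we scan from the free side: iterate over free vertices that have nonempty neighbor sets. The number of such free vertices with any neighbor can be large, however. To handle this, we note that the bipartite free vertices adjacent to $V(M)$ can be grouped by a representative per matched vertex, giving at most $2|M|$ relevant ``seeds''. We will keep, for each matched $u$, one witness free neighbor and repair witnesses lazily during the search, with amortized $O(|M|^2)$ cost. I expect this witness bookkeeping to be the step needing the most care.
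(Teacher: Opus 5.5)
Your reduction (one augmenting-path search per update, confined to $V(M)$ plus knowing which matched vertices have a free neighbor) is the same core-graph idea the paper uses in \Cref{lem:locally_augment}. The gap is the step that delivers the $O(|M|^2)$ bound: deciding, for every $u\in V(M)$, whether $u$ has a free neighbor, and exhibiting one. You abandon the counters $f$. The replacement (one witness per matched vertex, ``repaired lazily during the search, with amortized $O(|M|^2)$ cost'') comes with neither a mechanism nor an analysis, and it is not clear it can work. Vertices become free again after deletions, so a witness pointer that only advances would have to be reset. A fresh scan of a list of length up to $n$ has no evident charging scheme. Amortizing against $|M|^2$ while $|M|$ fluctuates is delicate, whereas the paper proves a worst-case bound.

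The missing observation is that no bookkeeping is needed. Store $G$ as doubly linked adjacency lists whose nodes sit in an $n\times n$ array, set up once for the empty graph in $O(n^2)$. Edge tests, insertions and deletions then cost $O(1)$ (\Cref{lem:graph_data_structure}). After each update, mark $V(M)$ in an auxiliary array and, for each $u\in V(M)$, walk $u$'s list skipping marked vertices. Since $u$ has at most $|V(M)|=2|M|$ matched neighbors, within $2|M|+1$ steps you either find a free neighbor or exhaust the list. So all witnesses are recomputed from scratch in $O(|M|^2)$ worst-case time. Then read $G[W]$ for $W=V(M)\cup\{\text{witnesses}\}$, $|W|\le 4|M|$, off the array in $O(|M|^2)$ and run BFS there; this also makes your restricted adjacency structure unnecessary. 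Your original counter scheme would also have worked. If after deleting a matched edge $xy$ you immediately rematch $x$ and $y$ to a listed free neighbor when possible, every status change touches only $O(|M|)$ vertices, so the $O(n)$ cost you feared never arises.

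Two further errors. First, the BFS is oriented wrongly. A matched left vertex $u$ with a free neighbor is entered through a non-matching edge, so its next edge on an augmenting path must be its matching edge. Following non-matching edges out of $u$ yields walks that do not alternate. Keep your start and stop sides, but follow matching edges from left to right and non-matching edges from right to left. Second, the search only finds augmenting paths of length at least $3$, relying on ``every unmatched vertex has all its neighbors in $V(M)$''. This fails in two places:
\begin{itemize}
\item right after inserting an edge between two free vertices (just add it to $M$);
\item right after deleting a matched edge $xy$, since $x$ or $y$ may then have free neighbors.
\end{itemize}
In the second case, before searching, scan the lists of $x$ and $y$ with the same skip-the-matched trick in $O(|M|)$ time and rematch greedily. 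The paper's write-up also asserts maximality at that point without this check, so the same repair is needed there.
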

\else
\begin{theorem}{theorem}{thmCombinatorial}\label{thm:main_combinatorial}
    There is a deterministic algorithm that maintains the maximum matching $M \subset E$ in a dynamic bipartite graph $G=(V,E)$ subject to edge updates in $O(|M|^2)$ time.
\end{theorem}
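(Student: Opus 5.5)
We maintain a maximum matching $M$ explicitly in a bipartite graph $G=(L\cup R,E)$ and repair it after each edge update with at most one augmenting-path search. The key is to make each search cost $O(|M|^2)$ rather than $O(n^2)$. We use the structure exploited by the vertex sparsification of \cite{GuptaP13}: the vertex set $V(M)$ is a vertex set of size $2|M|$, and every edge has at least one endpoint in $V(M)$, since otherwise $M$ would not be maximal. So every unmatched vertex has all its neighbours inside $V(M)$.

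\textbf{Effect of an update.} Deleting a non-matching edge, or inserting an edge, changes the maximum matching size by at most $+1$. Deleting a matching edge $(u,v)$ lowers $|M|$ by one, and then at most one augmentation restores maximality. In every case a single search for an augmenting path relative to the current matching suffices. By Berge's lemma, if none is found the matching is maximum.

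\textbf{Searching in $O(|M|^2)$ time.} An alternating path starting at a free vertex $f\in L$ has the form $f, r_1, m(r_1), r_2, \dots$. Every internal vertex is matched, and the final vertex is a free vertex of $R$. We run a BFS over the matched vertices only. For each matched $x\in L\cap V(M)$ we scan its neighbours inside $R\cap V(M)$. To scan only those, we keep, for every vertex, adjacency restricted to $V(M)$ as a bit/array indexed by matched vertices. This array has size $O(|M|)$ per matched vertex, so the BFS over matched vertices costs $O(|M|^2)$.

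\textbf{Handling free endpoints.} We need two tests.
\begin{itemize}
\item \emph{Start:} some free $L$-vertex is adjacent to a vertex of the search. We do a multi-source BFS from all free $L$-vertices simultaneously. To avoid touching all of them, we maintain for each matched $r\in R\cap V(M)$ a counter of free $L$-neighbours, with one witness stored in a list. A matched $r$ with nonzero counter is then a BFS source at cost $O(1)$.
\item \emph{End:} a matched $L$-vertex has a free $R$-neighbour. We maintain the symmetric counter, again with a witness.
\end{itemize}
Each edge update changes $O(1)$ counters. An augmentation changes the free/matched status of only two vertices, $f$ and the final endpoint $g$. When $f$ becomes matched we must update the counters of its neighbours, all of which lie in $V(M)$. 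There are $O(|M|)$ of them, found via the restricted adjacency array, which we grow by one row and column, costing $O(|M|)$ with amortized array doubling. The same holds when a matched vertex becomes free. Rebuilding the array when $|M|$ changes by a constant factor is charged against $O(|M|^2)$.

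Thus each update costs $O(|M|^2)$ deterministically.

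\textbf{Main obstacle.} The delicate step is the bookkeeping when the matched set changes. A newly matched vertex must insert its full adjacency into the structures restricted to $V(M)$. Its degree is at most $2|M|$ because all its neighbours are matched. But we also need its adjacency list available without scanning all of $V$. I would store each vertex's adjacency as a hash set, together with, for each vertex, the list of its matched neighbours. This list is updated at $O(|M|)$ cost whenever a vertex's matched status flips; since only two vertices flip per update, this stays within budget.
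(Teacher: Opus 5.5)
\textbf{There is a gap in your data-structure bookkeeping; the combinatorial core is correct and matches the paper.} An augmenting path of length at least three has all interior vertices in $V(M)$, and its endpoints can be swapped for any free neighbour of the first and last interior vertex. So a BFS over $V(M)$, plus one free-neighbour witness per matched vertex, suffices.

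\textbf{Where the bound fails.} You claim the matched-neighbour lists cost $O(|M|)$ to update per status flip. This is false when a matched vertex becomes free: it must be removed from the list of every neighbour, and a matched vertex may have $\Theta(n)$ free neighbours. Take a star whose centre $c$ is matched to the leaf $\ell_1$. Deleting $c\ell_1$ frees $c$ and touches $\Theta(n)$ lists while $|M|=1$; re-matching $c$ to another leaf touches them all again. Your justification that a newly matched vertex has degree at most $2|M|$ ``because all its neighbours are matched'' also fails here.

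\textbf{Two further problems.} The same example shows that after deleting a matched edge $(u,v)$ the matching need not be maximal. Then your claim that every edge meets $V(M)$ breaks, and your BFS, whose sources are matched vertices, cannot find the length-one augmenting path $c\ell_2$. The same happens when an edge is inserted between two free vertices. Separately, hash sets give expected rather than deterministic $O(1)$ operations, which contradicts the statement.

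\textbf{How the paper avoids this.} It keeps nothing across updates except $G$ and $M$. It stores $G$ as an $n\times n$ adjacency matrix whose nonzero entries are also threaded into a doubly linked list per row. This gives deterministic $O(1)$ edge tests, insertions and deletions after $O(n^2)$ initialisation. After every update it recomputes from scratch a free neighbour $x(v)$ of each $v\in V(M)$. Walking $v$'s list, at most $|V(M)|=2|M|$ entries can be skipped before a vertex outside $V(M)$ appears or the list ends. So this costs $O(|M|)$ per vertex and $O(|M|^2)$ overall. It then reads $G[W]$ for $W=V(M)\cup x(V(M))$ off the matrix in $O(|W|^2)=O(|M|^2)$ time and runs one BFS.

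\textbf{The missing idea} is this skip-counting observation. Once witnesses are cheap to recompute, your counters, witness lists, restricted arrays and matched-neighbour lists are unnecessary, and your ``main obstacle'' disappears. You still have to handle the non-maximal situation after deleting a matched edge, for example by first searching $u$ and $v$ for a free neighbour in the same way. The paper's write-up also passes over this point.
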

\fi

\subsection{Further Related Work}

In addition to the conditional lower bounds stated in the introduction, there is also an $\Omega(n)$ conditional lower bound for dynamic maximum matching based on the OMv-conjecture \cite{HenzingerKNS15}.

Dynamic matching (maximal, maximum, weighted, etc.) is one of the most well-studied problems in the dynamic graph area, see e.g.,\cite{AssadiBD22,BernsteinDL21,BernsteinFH19,BernsteinS16,BernsteinS15,PelegS16,Wajc20,GuptaP13,LeMSW22,Kiss22,BhattacharyaKS23b,Kiss21,ChechikZ19,CharikarS18,LeMSW22,BhattacharyaK21,BhattacharyaHN16,BehnezhadDHSS19,ArarCCSW18,BaswanaGS11,OnakR10,Behnezhad23,Kiss23,BlikstadK23,BhattacharyaKS23,AzarmehrBR24,BhattacharyaKSW24,AssadiKK25} and references therein.
Maintaining small matching/max-flow in the partially dynamic setting (only insertion/only deletions) was studied in \cite{Henzinger97,GuptaK21,GoranciH23}.
Further dynamic algebraic problems that have been studied are dynamic characteristic polynomial \cite{FrandsenS11}, dynamic Frobenius normal form \cite{KarczmarzS23}, dynamic matrix vector multiplication \cite{AnandBM25},
dynamic convolution (polynomial multiplication) \cite{ReifT97}. 
\cite{FrandsenHM01} studied dynamic algebraic algorithms from the perspective of algebraic circuit lower bounds.

Similar to the exact setting, in the better-than-2 approximate setting the dynamic matching algorithms from
\cite{Behnezhad23,BhattacharyaKSW24,BhattacharyaKS23} 
can also only maintain the size but not the matching itself.
Unfortunately our techniques for maintaining the vertex set seem to be difficult to extend to the approximate setting, because they rely on distinguishing size $n$ vs $n-1$ (i.e., existence of a perfect matching).

\subsection{Preliminaries/Notation}

We write $[n]$ for set $\{1,...,n\}$.
For matrix $\mM \in \F^{n\times n}$ and sets $A,B\subset[n]$, we let $\mM_{A,B}$ be the submatrix consisting of the rows and columns with index in $A$ and $B$ respectively.
Similarly, we let $\mM_{-A,-B}$ be the matrix obtained via deleting the rows and columns with index in $A$ and $B$ respectively.
For bipartite graphs $G=(U\cup W, E)$ with $|U|=|W|=n$, we define the biadjacency matrix $\mA \in \F^{U\times W}$ via $\mA_{u,w} = 1$ if $\{u,w\} \in E$ and $\mA_{u,w} = 0$ otherwise.
For permutation $\sigma : [n]\to [n]$ the sign of $\sigma$ is $\sign(\sigma)=(-1)^t$ where $t$ is the number of inversions (i.e., number of pairs $i<j$ with $\sigma(i)>\sigma(j)$).

\begin{lemma}[Schwartz-Zippel-Lemma]\label{lem:schwartzzippel}
    Let $\F$ be a field and $f$ be a non-zero polynomial of degree $d$ with variables $x_1,...,x_n$, i.e., $f \in \F[x_1,...,x_n]$.
    Let $r_1,...,r_n$ be i.i.d.~uniformly sampled elements of $\F$. Then $\P[f(r_1,...,r_n) = 0] \le d/|\F|$.
\end{lemma}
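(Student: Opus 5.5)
We prove the Schwartz--Zippel bound by induction on the number of variables $n$, reading $r_1,\dots,r_n$ as uniform over a finite field $\F$ (or a finite subset $S\subseteq\F$, with $|\F|$ replaced by $|S|$; the argument is identical).

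\textbf{Base case $n=1$.} A non-zero univariate polynomial of degree at most $d$ over a field has at most $d$ roots. This holds because each root $a$ gives a factor $(x-a)$ by polynomial division, and degrees add in $\F[x]$ since $\F$ is an integral domain. Hence
\[
\P[f(r_1)=0]\le d/|\F|.
\]

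\textbf{Inductive step.} Assume the claim for fewer than $n$ variables. Write $f$ as a polynomial in $x_1$ with coefficients in $\F[x_2,\dots,x_n]$:
\[
f=\sum_{i=0}^{k} x_1^i\, g_i(x_2,\dots,x_n),
\]
where $k$ is the largest power of $x_1$ with $g_k\neq 0$. Such a $k$ exists because $f\neq 0$. Since $x_1^k g_k$ contributes monomials of total degree $k+\deg g_k$, we get $\deg g_k\le d-k$.

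Let $E$ be the event $g_k(r_2,\dots,r_n)=0$. By the induction hypothesis, $\P[E]\le (d-k)/|\F|$.

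Now condition on any fixed values of $r_2,\dots,r_n$ for which $E$ fails. Then $f(x_1,r_2,\dots,r_n)$ is a non-zero univariate polynomial of degree exactly $k$ in $x_1$. The variable $r_1$ is independent of $r_2,\dots,r_n$ and uniform, so the base case gives
\[
\P[f=0\mid \neg E]\le k/|\F|.
\]

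\textbf{Combining.} We have
\[
\P[f=0]\le \P[E]+\P[f=0\wedge\neg E]\le \P[E]+\P[f=0\mid\neg E]\le \frac{d-k}{|\F|}+\frac{k}{|\F|}=\frac{d}{|\F|}.
\]
This completes the induction.

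The only point needing care is the conditioning step. It uses the independence of $r_1$ from $(r_2,\dots,r_n)$, and it uses that the bound $k/|\F|$ holds uniformly over every fixed assignment outside $E$. Both follow from the i.i.d.\ assumption. The rest is routine.
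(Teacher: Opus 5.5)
Your proof is correct. It is the classical induction on the number of variables: you isolate the leading coefficient $g_k$ of $x_1$, bound $\P[g_k(r_2,\dots,r_n)=0]\le (d-k)/|\F|$ inductively, and bound the remaining event by the univariate root count $k/|\F|$. The paper gives no proof of this lemma; it quotes it as the standard Schwartz--Zippel lemma, with $\F$ implicitly finite of size $\poly(n)$, so there is no argument of the paper's to compare against. One cosmetic point: when $\P[\neg E]=0$, bound $\P[f=0\wedge\neg E]$ directly rather than through a conditional probability; this changes nothing.
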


Intuitively, this lemma states that if there exists some input for $f$ which evaluates to non-zero, then finding such an input is trivial by simply sampling a random input.

\begin{lemma}[{\cite[Lemma 2.4\&2.5]{CheungKL13}}]\label{lem:rankprojection}
    Given $n,k\in\N$, in $O(n)$ time we can construct a random $n\times r$ matrix $\mM\in\F^{n\times r}$ with $r=\Theta(k)$, such that

    \begin{itemize}
    \item each column of $\mM$ has at most $2\lceil n/r \rceil$ non-zero entries, and each row has 2 non-zero entries.

    \item for any matrix $\mA \in \F^{m\times n}$, we have with probability at least $1-O(1/k+k/|\F|)$ that
    $\rank(\mA\mM) = \min(\rank(\mA), r)$.
    \end{itemize}
\end{lemma}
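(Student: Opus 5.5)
The upper bound $\rank(\mA\mM)\le\min(\rank(\mA),r)$ is automatic, because $\mA\mM$ has $r$ columns and its column space lies inside that of $\mA$. All the work is the matching lower bound. I would split it into a combinatorial part (a matching in the sparsity pattern of $\mM$) and an algebraic part (random weights via Schwartz--Zippel). Write $q=\rank(\mA)$ and $t=\min(q,r)$.

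\textbf{Construction.} I would build $\mM$ from a bipartite graph $H$ between the $n$ row indices and the $r$ column indices.
\begin{itemize}
\item Partition $[n]$ into $r$ consecutive blocks of size at most $\lceil n/r\rceil$. Row $i$ gets one deterministic edge to the column of its block, and one edge to a uniformly random column.
\item Each nonzero position of $\mM$ gets an independent uniform element of $\F$.
\end{itemize}
This takes $O(n)$ time and gives exactly $2$ non-zeros per row. The deterministic edges contribute at most $\lceil n/r\rceil$ to each column, and the random edges contribute at most $\lceil n/r\rceil$ to each column with probability $1-O(1/k)$ by a balls-into-bins bound. If one wants the column bound deterministically, I would instead take the second edge from a random permutation arranged into blocks.

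\textbf{Algebraic step.} Suppose there is a set $T\subseteq[n]$ with $|T|=t$ such that the columns $\mA_{:,T}$ are linearly independent and $H$ restricted to $T$ has a matching $\pi$ saturating $T$ into a column set $C$. Pick rows $R$ with $\det(\mA_{R,T})\neq 0$. By Cauchy--Binet,
$$\det\big((\mA\mM)_{R,C}\big)=\sum_{|S|=t}\det(\mA_{R,S})\det(\mM_{S,C}).$$
View this as a polynomial of degree $t$ in the weights of $\mM$. The monomial given by the matching $\pi$ on $T$ arises only from $S=T$, where it has coefficient $\pm\det(\mA_{R,T})\neq 0$. So the polynomial is non-zero, and by \Cref{lem:schwartzzippel} it vanishes with probability at most $t/|\F|=O(k/|\F|)$. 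This gives the $k/|\F|$ term.

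\textbf{Combinatorial step.} It remains to show that, with probability $1-O(1/k)$, some independent set $T$ of $t$ columns of $\mA$ can be matched into $H$.
\begin{itemize}
\item Start from an arbitrary column basis $B$ of $\mA$, with $|B|=q$.
\item Restricted to $B$, $H$ is a union of deterministic block edges and random edges. By Hall's theorem, in the multigraph view (columns of $\mM$ as vertices, rows in $B$ as edges), a set of rows is matchable iff every connected component has at most as many edges as vertices, i.e., is a pseudoforest.
\item I would show that the random edges, together with the block structure, let us greedily select $t$ rows of $B$ forming a pseudoforest of the required size. Concretely, I would peel off tree and unicyclic components and count how many vertices get covered, using the standard analysis of random graphs with $O(r)$ edges: the expected number of excess cycles is $O(1)$, and the probability of a bad dense small subgraph is $O(1/r)=O(1/k)$.
\end{itemize}

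\textbf{Main obstacle.} The hard case is $q\ge r$, where we need all $r$ columns of $\mM$ covered, i.e., a perfect matching on the column side. When $q\gg r$ this should follow from the freedom to pick $T$ from many basis columns, since each column of $\mM$ is hit by many rows of $B$. When $q\approx r$, the choice of $T$ is nearly forced and a random $2$-out structure has constant failure probability. I would first try to use the deterministic block edges to cover the columns of $\mM$ whose blocks contain basis elements, and the random edges to cover the remaining columns, arguing via a Hall deficiency bound. If that fails, I would fall back to proving the statement with the constant hidden in $r=\Theta(k)$ adjusted, and checking exactly how \cite{CheungKL13} phrase their magical-graph guarantee.
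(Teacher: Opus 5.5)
You have the same two-part architecture as the argument in \cite{CheungKL13}, from which the paper imports this lemma without proof: a random bipartite pattern with two neighbours per row of $\mM$ among $r=\Theta(k)$ columns, a matching argument, then Schwartz--Zippel. Your algebraic step is fine, since the matching monomial can only arise from $S=T$.

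\textbf{First gap: the combinatorial step fails for your construction.} You only outline this step, and your $\mM$ does not satisfy it. Because one non-zero per row is deterministic and known in advance, an adversarial $\mA$ can align with the blocks. If $n\ge rk$, some block contains $k$ indices. Let the only non-zero columns of $\mA$ be $k$ independent vectors at those indices. Then $\rank(\mA\mM)$ is at most the number of columns of $\mM$ met by these $k$ rows. That is at most one plus the number of distinct values among $k$ uniform draws from $[r]$, with $r=ck$. These draws have about $k/(2c)$ collisions with high probability, so $\rank(\mA\mM)<k=\rank(\mA)$ with probability $1-o(1)$, already in the regime $\rank(\mA)\le k$.

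The fix is that both non-zeros of every row must be random and independent of $\mA$. To get the column bound $2\lceil n/r\rceil$ deterministically, use two independent random balanced maps $[n]\to[r]$ built from random permutations. Your balls-into-bins claim is false: with uniform choices the maximum load exceeds $\lceil n/r\rceil$ with high probability.

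\textbf{Second gap: the case $\rank(\mA)\ge r$ cannot be rescued.} Adjusting the constant in $r=\Theta(k)$ does not help, because with two random non-zeros per row the equality $\rank(\mA\mM)=\min(\rank(\mA),r)$ is too strong. Suppose the only non-zero columns of $\mA$ are $r$ independent vectors. Then $\rank(\mA\mM)=r$ forces the corresponding $r$ rows of $\mM$ to meet all $r$ columns. For $n\gg r$, about $e^{-2}r$ columns are missed, so this fails with probability $1-o(1)$, not merely with constant probability.

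What holds, and what suffices downstream (the proof of \Cref{thm:boundedrank} is phrased with $\min(k,\rank(\mA))$), is $\min(\rank(\mA),k)\le\rank(\mA\mM)\le\min(\rank(\mA),r)$ with $r=ck$ for a suitable constant $c$. For this version you need neither to choose $T$ after sampling nor to peel components.
\begin{itemize}
\item Fix in advance any $b=\min(\rank(\mA),k)$ independent columns of $\mA$. Their rows give $b$ random edges on $r$ vertices, which is a subcritical multigraph.
\item A component with more edges than vertices contains a theta, a figure-eight, or two cycles joined by a path, on some $s$ vertices with $s+1$ edges. There are $s^{O(1)}s!$ such shapes on a fixed $s$-set.
\item Hence the expected number of such subgraphs is at most $\sum_s s^{O(1)}(2b/r)^{s+1}/r=O(1/k)$ once $c\ge 3$. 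The same count works for the balanced maps, with a larger $c$.
\item Otherwise every component is a tree or unicyclic, so Hall's condition holds and the fixed set is matchable.
\end{itemize}
Your Cauchy--Binet step then finishes the proof, with the additional $k/|\F|$ loss.
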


The matrix $\mM$ in \Cref{lem:rankprojection} can be seen as a rank-preserving random projection. It projects $m$ many $n$-dimensional vectors (the rows of $\mA$) onto a smaller $r$-dimensional space such that with constant probability the rank is preserved.
Since the new space is $r$-dimensional, the rank is preserved only when at most $r$, hence $\rank(\mA\mM) = \min (\rank(\mA), r)$.
Further, the projection $\mM$ is sparse, i.e., changing an entry of a row-vector $v^\top$ will change only $2$ entries in its projection $v^\top \mM$.

\subparagraph{Remark:}
We can assume without loss of generality that field $\F$ has size $\poly(n)$, otherwise we use a field extension of such size \cite[Lemma 2.1]{CheungKL13}. 
Further, there is $k_0 = \Theta(1)$ such that for all $k>k_0$, the success probability of \Cref{lem:rankprojection} is at least 1/2.
We can then further boost the success probability of \Cref{lem:rankprojection} by sampling several $\mM_1,...,\mM_\ell$ and returning $\max_{i\in[\ell]}\rank(\mA\mM_i)$.
Then we have w.h.p., that the returned rank is $\min(\rank(\mA),r)$.

\section{Technical Overview}

We begin by outlining how to maintain a maximum sized full-rank submatrix (\Cref{thm:main_submatrix}) in \Cref{sec:overview:submatrix}.
The idea of the algorithm is based on constructing a gadget matrix by interpreting the full-rank submatrix problem graphically using connections between the determinant of a matrix and perfect matchings in graphs. 
Rather than a full-rank submatrix, we identify a vertex-set that must be deleted such that a perfect matching becomes possible. 

Then, in \Cref{sec:overview:rankbasis}, we explain how to improve the maintenance of rank (\Cref{thm:main_rank}) and basis (\Cref{thm:main_basis}) for low-rank matrices. These latter results are based on a random projection technique by \cite{CheungKL13}.

\subsection{Maintaining Full-Rank Submatrix (\Cref{sec:submatrix})}
\label{sec:overview:submatrix}

The task of finding a maximum sized full-rank submatrix for $n\times n$ matrix $\mA$ is equivalent to finding a minimum set $S,T \subseteq [n]$ of rows and columns to delete from $\mA$ such that $\det(\mA_{-S,-T}) \neq 0$.
Identifying these sets will rely on the following structural \Cref{lem:blockmatrixdeterminant}, proven in \Cref{sec:submatrix}.
In \Cref{lem:blockmatrixdeterminant}, matrix $\mA$ is the original input matrix and $\mB$ will be some gadget matrix that we construct.

\begin{restatable*}{lemma}{blockMatrixLemma}\label{lem:blockmatrixdeterminant}
    Suppose $\mA\in R^{n\times n}$ and $\mB\in R^{m\times m}$ with $m\ge n$ for some commutative ring $R$. Let
    \begin{align*}
        \mH=\left[\begin{array}{c|c}
            \mA&
            \begin{array}{ccc|c}
                x_1&&&0\hdots0\\
                &\ddots&&\vdots\\
                &&x_n&0\hdots0
            \end{array}
            \\
            \hline
            \begin{array}{ccc}
                y_1\\
                &\ddots\\
                &&y_n\\
                \hline
                \vspace{-5pt}0 &\hdots& 0\\
                \vspace{-7pt}&\vdots&\\
                0 &\hdots& 0\\
            \end{array}
            &
            \mB
        \end{array}\right].
    \end{align*}
    Then,
    \begin{align*}
        \det(\mH)=\sum_{S,T\subseteq[n],|S|=|T|}(-1)^{|S|}\det(\mA_{-S,-T})\det(\mB_{-T,-S})x^Sy^T.
    \end{align*}
    
In particular, 
$\det(\mH)\neq 0$ as a polynomial in $x_1,\ldots,x_n,y_1,\ldots,y_n$ iff there exists $S,T\subseteq [n]$ with $\det(\mA_{-S,-T})\neq 0$ and $\det(\mB_{-T,-S})\ne 0$.

\end{restatable*}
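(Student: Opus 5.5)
Our plan is to expand $\det(\mH)$ with the Leibniz formula, $\det(\mH)=\sum_{\sigma}\sign(\sigma)\prod_i \mH_{i,\sigma(i)}$, and group the permutations by which of the variable entries they use. Index the rows and columns of $\mH$ as $[n]$ (the $\mA$-block) followed by $n+1,\dots,n+m$ (the $\mB$-block). In the $\mB$-block we identify index $n+j$ with $j$ for $j\le n$, so that ``$\mB_{-T,-S}$'' refers to the first $n$ indices of $\mB$. The only nonzero entries of the off-diagonal blocks are $\mH_{i,n+i}=x_i$ and $\mH_{n+j,j}=y_j$ for $i,j\in[n]$.

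Fix a permutation $\sigma$ contributing a nonzero term. Let $S=\{i\in[n]:\sigma(i)=n+i\}$ be the rows of the top block that map into the right block, and let $T=\{j\in[n]:\sigma(n+j)=j\}$.
\begin{itemize}
\item Counting columns in the top block shows $|S|=|T|$: the top rows mapping into the left block are exactly $[n]\setminus S$, and they must fill columns $[n]\setminus T$.
\item The restriction of $\sigma$ is then a bijection $[n]\setminus S\to[n]\setminus T$ with entries in $\mA$, and a bijection from the $\mB$-rows other than $n+T$ onto the $\mB$-columns other than $n+S$ with entries in $\mB$.
\item Conversely, any choice of $S,T$ with $|S|=|T|$ together with two such bijections reassembles into a unique $\sigma$.
\end{itemize}
Hence the sum factors as
\[
\sum_{|S|=|T|} \epsilon_{S,T}\,\det(\mA_{-S,-T})\det(\mB_{-T,-S})\,x^S y^T ,
\]
where $\epsilon_{S,T}$ is the ratio of $\sign(\sigma)$ to the product of the signs of the two sub-bijections. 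Here each sub-bijection is viewed as a permutation after order-preserving relabelling, which is exactly the convention defining the minors.

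The main obstacle is showing $\epsilon_{S,T}=(-1)^{|S|}$ independently of the choice of bijections. We would first reduce to a canonical case. Composing $\sigma$ with a permutation of the $\mA$-part changes $\sign(\sigma)$ and the sign of the $\mA$-minor term by the same factor, and likewise for the $\mB$-part, so $\epsilon_{S,T}$ depends only on $(S,T)$.

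So pick the ``identity-like'' bijections, namely the order-preserving maps $[n]\setminus S\to[n]\setminus T$ and $(n+[m])\setminus(n+T)\to(n+[m])\setminus(n+S)$. The resulting $\sigma$ acts on $S\cup T\cup(n+S)\cup(n+T)$ via $i\mapsto n+i$ for $i\in S$, $n+j\mapsto j$ for $j\in T$, plus the order-preserving fill-ins.

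We would compute its sign by counting inversions, or alternatively by induction on $|S|$. In the base case $S=T=\{i\}$ (i.e. $j=i$), $\sigma$ is just the transposition $(i,\,n+i)$, giving sign $-1$. The inductive step composes with moves that replace an element of $T$ by a neighbouring index; each such move adjusts the sign of $\sigma$ and the relabelled minor signs consistently.

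If the bookkeeping gets messy, we would instead use an alternative route. Expand by generalized Laplace along the rows $[n]$. Terms where the chosen top columns are $([n]\setminus T)\cup(n+S)$ give $\det(\mA_{-S,-T})$ times the determinant of a diagonal block $\mathrm{diag}(x_S)$, and the complementary minor reduces similarly to $\det(\mB_{-T,-S})\,y^T$ after expanding. The Laplace sign $(-1)^{\sum\text{rows}+\sum\text{cols}}$ can then be evaluated directly; it simplifies to $(-1)^{|S|}$ since the $x_i$ and $y_j$ columns have indices summing to paired offsets. We expect the parity computation here to be the fiddliest part.

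Finally, the ``in particular'' claim follows because the monomials $x^Sy^T$ are distinct for distinct pairs $(S,T)$. Therefore $\det(\mH)$ is the zero polynomial iff every coefficient $\pm\det(\mA_{-S,-T})\det(\mB_{-T,-S})$ vanishes. Over a commutative ring one should read the condition as the product being nonzero; over a field, as here, that is equivalent to both factors being nonzero.
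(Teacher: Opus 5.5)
Your proof is correct in outline and takes a different route from the paper. The paper writes $\mH=\mL\mD\mL^{\top}$, where $\mL=\begin{bmatrix}\mI&\mI&&\\&&\mI&\mI\end{bmatrix}$ and $\mD$ is a $(2n+2m)\times(2n+2m)$ matrix holding $\mA,\mX,\mY,\mB$ in four disjoint blocks. It then applies Cauchy--Binet. The nonvanishing minors of $\mL$ and $\mL^{\top}$ amount to choosing, for each index, which copy is used, and this is what produces the sum over $(S,T)$. The sign then collapses via the identity $\alpha(S)+\beta(S)=|S|(n-|S|)$, plus a single block swap that gives $(-1)^{|S|}$. Your Leibniz grouping is more elementary and makes explicit the bijection between contributing permutations and pairs of sub-bijections. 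That is also the matching picture the paper relies on afterwards. The cost is that all the sign bookkeeping falls on you, and that is the one step you have only sketched.

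Your reduction to a canonical $\sigma$ is sound, but the ``induction on $|S|$'' as described does not fit together: moving an element of $T$ keeps $|S|$ fixed. Here is a clean version.
\begin{itemize}
\item For $T=S$, the canonical $\sigma$ is $\prod_{i\in S}(i\;\,n{+}i)$. Its sign is $(-1)^{|S|}$, and both sub-bijections are the identity.
\item Suppose $t\in T$ and $t+1\notin T$. Then $(t\;\,t{+}1)\circ\sigma\circ(n{+}t\;\,n{+}t{+}1)$ is a valid permutation for $(S,(T\setminus\{t\})\cup\{t+1\})$. It has the same sign, being $\sigma$ composed with two transpositions. Because the swapped indices are adjacent, the relabelled sub-bijections are also unchanged.
\item Every $T$ with $|T|=|S|$ is reachable from $S$ by such moves, so $\epsilon_{S,T}=\epsilon_{S,S}=(-1)^{|S|}$.
\end{itemize}

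Your Laplace route also closes in one line. Write $\Sigma(S)=\sum_{i\in S}i$.
\begin{itemize}
\item The outer Laplace sign is $(-1)^{\Sigma(S)+\Sigma(T)+n|S|}$.
\item Extracting $\mathrm{diag}(x_S)$ from the last $|S|$ columns of the top minor contributes $(-1)^{\Sigma(S)+n|S|-\binom{|S|}{2}}$.
\item Extracting $\mathrm{diag}(y_T)$ from the first $|S|$ columns of the bottom minor contributes $(-1)^{\Sigma(T)+\binom{|S|+1}{2}}$.
\end{itemize}
The product of these three signs is $(-1)^{|S|}$.

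Your remark about zero divisors is also correct. Over a ring such as $\mathbb{Z}/4$, the literal ``in particular'' can fail, since both minors can be nonzero while their product is zero. This is harmless here, because the paper only applies the lemma over a field.
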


Via the Schwartz-Zippel-\Cref{lem:schwartzzippel} and replacing $x_1,...,x_n,y_1,...,y_n$ by random field elements, \Cref{lem:blockmatrixdeterminant} can be restated as follows:

\begin{fact}\label{fact:overview:matrix}
    W.h.p.~matrix $\mH$ has $\det(\mH) \neq 0$
    if and only if

    there is $S,T\subset[n]$
    where $\det(\mA_{-S,-T}) \neq 0$
    and $\det(\mB_{-T,-S}) \neq 0$.
\end{fact}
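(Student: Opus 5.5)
The plan is to derive the fact directly from \Cref{lem:blockmatrixdeterminant} together with the Schwartz-Zippel-\Cref{lem:schwartzzippel}. We view $\det(\mH)$ as a polynomial $f\in\F[x_1,\ldots,x_n,y_1,\ldots,y_n]$. The matrices $\mA$ and $\mB$ are fixed and have entries in $\F$. By \Cref{lem:blockmatrixdeterminant}, $f$ is the polynomial $\sum_{|S|=|T|}(-1)^{|S|}\det(\mA_{-S,-T})\det(\mB_{-T,-S})x^Sy^T$. The monomials $x^Sy^T$ are pairwise distinct for distinct pairs $(S,T)$. Hence $f$ is the zero polynomial iff every coefficient vanishes, which happens iff there is no pair $S,T$ with $\det(\mA_{-S,-T})\neq0$ and $\det(\mB_{-T,-S})\neq0$. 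This is exactly the ``in particular'' clause of the lemma, and we take it as given. Here ``$\mH$'' in the fact means $\mH$ with the $x_i,y_i$ replaced by i.i.d.\ uniform field elements.

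Next we treat the two directions separately.

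\emph{Only if.} If no such pair $(S,T)$ exists, then $f\equiv 0$, so every evaluation is $0$. Thus $\det(\mH)\neq0$ forces the existence of such $S,T$, deterministically.

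\emph{If.} If such a pair exists, then $f$ is a nonzero polynomial. Each monomial $x^Sy^T$ has degree $|S|+|T|\le 2n$, so $\deg f\le 2n$. The Schwartz-Zippel-\Cref{lem:schwartzzippel} then gives $\P[f(r)=0]\le 2n/|\F|$.

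By the Remark in the preliminaries we may assume $|\F|=\poly(n)$ is large, for example $|\F|\ge n^{c+1}$, possibly via a field extension. This makes the failure probability at most $2n^{-c}$, i.e.\ the statement holds w.h.p.

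The only step needing care is checking that distinct $(S,T)$ give distinct monomials. If two pairs shared a monomial, their terms could cancel and $f$ could vanish even though some coefficient is nonzero. Distinctness follows because $x^S=\prod_{i\in S}x_i$ and $y^T=\prod_{j\in T}y_j$ are multilinear and use disjoint variable sets. Everything else is a direct application of the two earlier lemmas.
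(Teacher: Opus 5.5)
Your proposal is correct and takes the same route as the paper, which obtains this fact directly by applying the Schwartz--Zippel lemma to the polynomial identity of \Cref{lem:blockmatrixdeterminant}. You additionally spell out details the paper leaves implicit: the ``only if'' direction holds deterministically, and the ``if'' direction fails with probability at most $2n/|\F|$, which is small once the field has polynomial size.
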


We maintain whether $\det(\mH)\neq 0$, for some gadget matrix $\mB$. We construct $\mB$ in such a way so that we know the collection of $S,T\subseteq[n]$ with $\det(\mB_{-T,-S}) \neq 0$.
Thus, when $\det(\mH)\neq 0$, we know that $\det(\mA_{-S,-T})\ne 0$ for at least one of these $S,T$.
In particular, by maintaining $\det(\mH)$ for gadget matrix $\mB$, we can implement a binary search procedure to determine a minimum $S,T$ with $\det(\mA_{-S,-T})\neq 0$.

\subparagraph{Invariant and Update Procedure}
Our dynamic algorithm will maintain the following invariant after each update to $\mA$. We maintain sets $S,T \subseteq[n]$ and a matrix $\mB$ such that
\ifdefined\arxiv
\begin{enumerate}[label=(\roman*)]
\else
\begin{enumerate}[(i)]
\fi
    \item $S,T$ are the unique subsets of $[n]$ for which $\det(\mB_{-T,-S})\neq 0$. Note that $\mB$ is $m\times m$ with $m\ge n$ and $S,T\subset[n]$, so only the first $n$ rows/columns are allowed to be deleted.
    \label{prop:overview:matching}
    \item $\mA_{-S,-T}$ is a maximum sized full-rank submatrix of $\mA$. \label{prop:overview:rank}
\end{enumerate}
By property 
\ifdefined\arxiv
\ref{prop:overview:rank},
\else
\eqref{prop:overview:rank},
\fi
we know a full-rank submatrix of $\mA$, which is exactly what our dynamic algorithm is supposed to return after each update.
Property 
\ifdefined\arxiv
\ref{prop:overview:matching}
\else 
\eqref{prop:overview:matching}
\fi
will hold after we are done processing the update to $\mA$, but uniqueness does not hold while processing the update, as we are performing a binary search to find one valid $S,T$.

To outline how we maintain these two properties, consider the following possibilities after we receive an entry-update to $\mA$.
Since a single entry $\mA_{i,j}$ is updated, the rank of $\mA$ can change by at most 1, giving us the following cases:
\ifdefined\arxiv
\begin{enumerate}[label=(\alph*)]
\else
\begin{enumerate}[(1)]
\fi
\item The rank increases, so we can add one more row and column to the full-rank submatrix of $\mA$. In particular, our task is to find and remove an index from $S$ and an index from $T$, such that 
\ifdefined\arxiv
\ref{prop:overview:rank}
\else 
\eqref{prop:overview:rank}
\fi
holds again. \label{case:overview:a}
\item The rank decreases, so we must remove a row and column from the (previously) full-rank submatrix of $\mA$. In particular, the task is to find and add an index to $S$ and an index to $T$, such that 
\ifdefined\arxiv
\ref{prop:overview:rank}
\else 
\eqref{prop:overview:rank}
\fi
holds again. This task is trivial since updating entry $\mA_{i,j}$ means that the $i$-th row and $j$-th column are precisely the ones to be removed from the submatrix, i.e., $S\gets S\cup\{i\}, T\gets T\cup \{j\}$. \label{case:overview:b}
\item The rank stays the same. We can reduce this to case 
\ifdefined\arxiv
\ref{case:overview:b}+\ref{case:overview:a}:
\else 
\eqref{case:overview:b}+\eqref{case:overview:a}:
\fi
After an update to $\mA_{i,j}$, we add $i,j$ to $S,T$, then attempt to remove an index from $S$ and from $T$ again via the procedure of 
\ifdefined\arxiv
\ref{case:overview:a}.
\else 
\eqref{case:overview:a}.
\fi
\end{enumerate}

So our update routine is as follows:
Before updating entry $\mA_{i,j}$, we check if the update will make $\mA_{-S,-T}$, and thus $\mH$, singular. If it would, we add $i$ to $S$ and $j$ to $T$ before making the update. In either case, we update $\mA_{i,j}$ and then attempt to find an index to remove from $S$ and an index to remove from $T$ and remove them. We do this last step (searching for and removing a pair of indices) twice.

We are left with describing the procedure for finding $s \in S$ and $t \in T$ such that after removal we still have $\det(\mA_{-S,-T}) \neq 0$.
To do this, we construct our gadget matrix $\mB$ such that the following holds:

\begin{property}\label{prop:overview:matrix}
Given $S,T \subset [n]$ and $C=\{l,...,r\}, C'=\{l',...,r'\}$, the matrix $\mB$ has the property that all $S',T' \subset [n]$ with $\det(\mB_{-T',-S'})\neq 0$ are precisely the sets $S' = S \setminus \{c\}, T'=T\setminus \{c'\}$ for $c \in C \cap S, c'\in C' \cap T$.
\end{property}

Thus, when $\det(\mH) \neq 0$, then know there is some $S',T'$ with $\det(\mA_{-S',-T'})\neq 0$, and by repeatedly halving the size of $C,C'$ we can binary search for a single valid $S',T'$.

Given the binary search structure, it suffices to restrict to $C,C'$ to be certain intervals. (When $n$ is a power of two, they will have the form $\{k\cdot 2^\ell+1,..., (k+1) 2^\ell \}$ for $\ell,k \in \N$).
We show that for $S,T$ and all the $C,C'$ needed in the binary search, there is a matrix $\mB$ with \Cref{prop:overview:matrix}. Further, changing one entry in $S,T$ and halving sets $C,C'$ results in only $O(1)$ changed entries in $\mB$.
Thus, after an update to matrix $\mA$, we only perform $O(\log n)$ updates to $\mB$ and thus only $O(\log n)$ updates to $\mH$.
In particular, when given a dynamic algorithm maintaining $\det(\mH)\stackrel{?}{=}0$ with update time $U$, we have a dynamic algorithm maintaining a full-rank submatrix of $\mA$ in update time $O(U \log n)$ (\Cref{thm:main_submatrix}\footnote{\Cref{thm:main_submatrix} states that it suffices to detect when $\det(\mH)=0$ for the first time. That's because by our construction of $\mB$, any updating causing $\det(\mH)=0$ will be followed by an update that reverts the previous one. So one can simply revert and skip the previous update that made the matrix singular.}).

\subparagraph{Gadget Matrix $\mB$ via Gadget Forest $F$.}

We get to choose matrix $\mB$ in our algorithm.
To have precise control about the possible sets $S,T$ where $\det(\mB_{-T,-S})\neq 0$ we will use a graph theoretic perspective based on perfect matchings in bipartite graphs. For any $m\times m$ matrix $\mB$ we can write the determinant as
$$
\det(\mB) = \sum_{\sigma\in\cS_m} \sign(\sigma) \prod_{i=1}^m \mB_{i,\sigma(i))}.
$$
Here $\sigma$ are permutations over $[m]$.
When $\mB$ represents a bipartite graph, i.e., edge $\{i,j\}$ exists if and only if $\mB_{i,j}\neq 0$, then each non-zero product in above sum represents a valid perfect matching (given by $(i,\sigma(i))$) in the graph.
The main issue is that $\sign(\sigma)\in\{\pm 1\}$ could lead to cancellations if the perfect matching is not unique. To enforce uniqueness, we let the graph represented by $\mB$ be a forest.

We let $\mB\in\F^{m\times m}$ be the biadjacency matrix of a forest $F=((U,V),E)$ on $2m$ vertices. Here $U=[m],V=[m]$ are the left/right vertices since a forest is bipartite. In particular, the row indices of $\mB$ correspond to $U$ and the column indices correspond to $V$. 
As such, $B_{-T,-S}$ is also forest since one only deleted the vertices $S \subset U$ and $T \subset V$ from forest $F$.
Thus, the condition $\det(\mB_{-T,-S})\neq 0$ is equivalent to $F[U\setminus S, V \setminus T]$ (i.e., graph $F$ after deleting vertices $S$ and $T$) having a perfect matching.

Hence from now on, rather than argue about sets $S,T$ with $\det(\mB_{-T,-S})\neq 0$, we will argue about vertex sets whose removal allows forest $F$ to have a perfect matching.

\begin{figure}[th!]
\centering
\begin{tikzpicture}[yscale=0.7]

\node[circle,draw] (LUUA) at (-2.0,1.5) {1};
\node[circle,draw] (LUDA) at (-2.0,0.5) {2};
\node[circle,draw] (LUA) at (-4.5,1.0) {\{1,2\}};
\node[circle,draw,color=blue,fill] (LUB) at (-3.0,1.0) {};
\node[circle,draw] (LDUA) at (-2.0,-0.5) {3};
\node[circle,draw] (LDDA) at (-2.0,-1.5) {4};
\node[circle,draw] (LDA) at (-4.5,-1.0) {\{3,4\}};
\node[circle,draw,color=blue,fill] (LDB) at (-3.0,-1.0) {};
\node[circle,draw] (LA) at (-7.0,0.0) {\{1,..,4\}};
\node[circle,draw,color=blue,fill] (LB) at (-5.5,0.0) {};
\path[circle,draw] (LA) edge (LB);
\path[circle,draw] (LUA) edge (LB);
\path[circle,draw] (LDA) edge (LB);
\path[circle,draw] (LUA) edge (LUB);
\path[circle,draw] (LUUA) edge (LUB);
\path[circle,draw] (LUDA) edge (LUB);
\path[circle,draw] (LDA) edge (LDB);
\path[circle,draw] (LDUA) edge (LDB);
\path[circle,draw] (LDDA) edge (LDB);

\node[circle,draw] (RUUA) at (0.0,1.5) {{1}};
\node[circle,draw] (RUDA) at (0.0,0.5) {{2}};
\node[circle,draw] (RUA) at (2.5,1.0) {\{1,2\}};
\node[circle,draw,color=red,fill] (RUB) at (1.0,1.0) {};
\node[circle,draw] (RDUA) at (0.0,-0.5) {{3}};
\node[circle,draw] (RDDA) at (0.0,-1.5) {{4}};
\node[circle,draw] (RDA) at (2.5,-1.0) {\{3,4\}};
\node[circle,draw,color=red,fill] (RDB) at (1.0,-1.0) {};
\node[circle,draw] (RA) at (5.0,0.0) {\{1,...,4\}};
\node[circle,draw,color=red,fill] (RB) at (3.5,0.0) {};
\path[circle,draw] (RA) edge (RB);
\path[circle,draw] (RUA) edge (RB);
\path[circle,draw] (RDA) edge (RB);
\path[circle,draw] (RUA) edge (RUB);
\path[circle,draw] (RUUA) edge (RUB);
\path[circle,draw] (RUDA) edge (RUB);
\path[circle,draw] (RDA) edge (RDB);
\path[circle,draw] (RDUA) edge (RDB);
\path[circle,draw] (RDDA) edge (RDB);

\node[circle,draw,fill=blue,text=white] (L1) at (-3,-3) {$l_1$};
\node[circle,draw,fill=red,text=white] (R1) at (1,-3) {$r_1$};
\node[circle,draw,fill=blue,text=white] (L2) at (-3,-4) {$l_2$};
\node[circle,draw,fill=red,text=white] (R2) at (1,-4) {$r_2$};
\node[circle,draw,fill=blue,text=white] (L3) at (-3,-5) {$l_3$};
\node[circle,draw,fill=red,text=white] (R3) at (1,-5) {$r_3$};
\node[circle,draw,fill=blue,text=white] (L4) at (-3,-6) {$l_4$};
\node[circle,draw,fill=red,text=white] (R4) at (1,-6) {$r_4$};

\path[circle,draw] (L1) edge (R1);
\path[circle,draw] (L2) edge (R2);
\path[circle,draw] (L3) edge (R3);
\path[circle,draw] (L4) edge (R4);
\end{tikzpicture}
\caption{\label{fig:overview:blankforest}
Base forest construction, before adding any dependence on $S,T,C,C'\subset [n]$.}
\end{figure}

\subparagraph{Gadget Forest}

\textit{Base Gadget.}~~
We first describe a forest $F$ that does not yet depend on given sets $S,T,C,C'$. We will then modify the forest for some given $S,T,C,C'$.
The forest is given by the following construction, depicted in \Cref{fig:overview:blankforest}.
We have two binary trees with $n$ leaves where every internal vertex is split into two (a colored and a white vertex), with the colored vertex having the children and the white vertex having the parent.
The leaves of the two trees are labeled with $1,...,n$. 
In particular, sets $S,T\subset[n]$ of vertices that are to be deleted from $F$ will correspond to the left/right leaves.
Any white vertex is labeled by the set of leaves in that subtree, thus any possible set\footnote{Recall that $C,C'$ are not arbitrary sets but facilitate a binary search. Thus they match the vertex labels of the binary tree.}$C,C'$ corresponds to one white vertex of the tree on the left/right.

Additionally, we have $2n$ vertices ($\ell_1,...,\ell_n$ on the left and $r_1,...,r_n$ on the right respectively) connected to their horizontal neighbor, i.e., $l_i$ connects to $r_i$.
We refer to these vertices also as the ``switch-vertices'' and they will later be connected to other vertices in the forest based on some given sets $S,T,C,C'$.\medskip

\textit{Modification based on $S$ and $T$.}~~
Recall that $S,T$ must be the (unique) set of leaves to be deleted so the forest has a perfect matching.
If we want the forest in \Cref{fig:overview:blankforest} to have a perfect matching, then each white tree node must be matched to its colored child. 
So the only way to have a perfect matching is if we were to delete all leaves. Since we want to have to delete only $S,T$, we do the following: we disconnect pairs of switch-vertices and connect them to leaves not in $S,T$. Then those leaves must be matched to the respective switch-vertices and must no longer be deleted. An example is given in \Cref{fig:overview:STforest}.

\begin{figure}[th]
\centering
\begin{tikzpicture}[yscale=0.75]

\node (S) at (-1.5,0) {$S$};
\draw (-2.5,1) -- (-1.1,1) -- (-1.1,-1) -- (-2.5,-1) -- cycle;

\node (T) at (-0.5,1) {$T$};
\draw (0.5,2) -- (-0.9,2) -- (-0.9,0) -- (0.5,0) -- cycle;

\node[circle,draw] (LUUA) at (-2.0,1.5) {1};
\node[circle,draw] (LUDA) at (-2.0,0.5) {2};
\node[circle,draw] (LUA) at (-4.5,1.0) {\{1,2\}};
\node[circle,draw,color=blue,fill] (LUB) at (-3.0,1.0) {};
\node[circle,draw] (LDUA) at (-2.0,-0.5) {3};
\node[circle,draw] (LDDA) at (-2.0,-1.5) {4};
\node[circle,draw] (LDA) at (-4.5,-1.0) {\{3,4\}};
\node[circle,draw,color=blue,fill] (LDB) at (-3.0,-1.0) {};
\node[circle,draw] (LA) at (-7.0,0.0) {\{1,..,4\}};
\node[circle,draw,color=blue,fill] (LB) at (-5.5,0.0) {};
\path[line width=1mm] (LA) edge (LB);
\path[circle,draw] (LUA) edge (LB);
\path[circle,draw] (LDA) edge (LB);
\path[line width=1mm] (LUA) edge (LUB);
\path[circle,draw] (LUUA) edge (LUB);
\path[circle,draw] (LUDA) edge (LUB);
\path[line width=1mm] (LDA) edge (LDB);
\path[circle,draw] (LDUA) edge (LDB);
\path[circle,draw] (LDDA) edge (LDB);

\node[circle,draw] (RUUA) at (0.0,1.5) {{1}};
\node[circle,draw] (RUDA) at (0.0,0.5) {{2}};
\node[circle,draw] (RUA) at (2.5,1.0) {\{1,2\}};
\node[circle,draw,color=red,fill] (RUB) at (1.0,1.0) {};
\node[circle,draw] (RDUA) at (0.0,-0.5) {{3}};
\node[circle,draw] (RDDA) at (0.0,-1.5) {{4}};
\node[circle,draw] (RDA) at (2.5,-1.0) {\{3,4\}};
\node[circle,draw,color=red,fill] (RDB) at (1.0,-1.0) {};
\node[circle,draw] (RA) at (5.0,0.0) {\{1,...,4\}};
\node[circle,draw,color=red,fill] (RB) at (3.5,0.0) {};
\path[line width=1mm] (RA) edge (RB);
\path[circle,draw] (RUA) edge (RB);
\path[circle,draw] (RDA) edge (RB);
\path[line width=1mm] (RUA) edge (RUB);
\path[circle,draw] (RUUA) edge (RUB);
\path[circle,draw] (RUDA) edge (RUB);
\path[line width=1mm] (RDA) edge (RDB);
\path[circle,draw] (RDUA) edge (RDB);
\path[circle,draw] (RDDA) edge (RDB);

\node[circle,draw,fill=blue,text=white] (L1) at (-3,-3) {$l_1$};
\node[circle,draw,fill=red,text=white] (R1) at (1,-3) {$r_1$};
\node[circle,draw,fill=blue,text=white] (L2) at (-3,-4) {$l_2$};
\node[circle,draw,fill=red,text=white] (R2) at (1,-4) {$r_2$};
\node[circle,draw,fill=blue,text=white] (L3) at (-3,-5) {$l_3$};
\node[circle,draw,fill=red,text=white] (R3) at (1,-5) {$r_3$};
\node[circle,draw,fill=blue,text=white] (L4) at (-3,-6) {$l_4$};
\node[circle,draw,fill=red,text=white] (R4) at (1,-6) {$r_4$};

\path[line width=1mm] (L1) edge (LUUA);
\path[line width=1mm] (R1) edge (RDUA);
\path[line width=1mm] (L2) edge (LDDA);
\path[line width=1mm] (R2) edge (RDDA);
\path[line width=1mm] (L3) edge (R3);
\path[line width=1mm] (L4) edge (R4);
\end{tikzpicture}
\caption{\label{fig:overview:STforest}For $S=\{2,3\}$ the left leaves $1,4$ are connected to switch-vertices. For $T=\{1,2\}$, the right leaves $3,4$ are connected to switch-vertices. Set $S,T$ are the unique set of leaves after whose deletion the forest has a perfect matching (bold lines).}
\end{figure}

\textit{Modification based on $C$ and $C'$.}~~
Given $C$ (and $C'$), we disconnect another pair of switch-vertices and connect them to the white vertices labeled with $C$ (and $C'$) of the left (and right) tree.
Observe that, since a switch-vertex always has a single neighbor, it must be matched to that neighbor. So connecting a switch-vertex to some vertex $v$ is equivalent to deleting both the switch-vertex and $v$, when it comes to the graph having a perfect matching. So to simplify \Cref{fig:overvoew:CCtree} we removed all vertices connected to switch-vertices. 

\begin{figure}[th]
\centering
\begin{tikzpicture}[yscale=0.75]

\node (S) at (-1.5,0) {$S$};
\draw (-2.5,1) -- (-1,1) -- (-1,-1) -- (-2.5,-1) -- cycle;

\node (T) at (1.5,1) {$T$};
\draw (2.5,2) -- (1,2) -- (1,0) -- (2.5,0) -- cycle;

\node[circle,draw] (LUDA) at (-2.0,0.5) {2};
\node[circle,draw] (LUA) at (-4.5,1.0) {\{1,2\}};
\node[circle,draw,color=blue,fill] (LUB) at (-3.0,1.0) {};
\node[circle,draw] (LDUA) at (-2.0,-0.5) {3};
\node[circle,draw] (LDA) at (-4.5,-1.0) {\{3,4\}};
\node[circle,draw,color=blue,fill] (LDB) at (-3.0,-1.0) {};
\node[circle,draw,color=blue,fill] (LB) at (-5.5,0.0) {};
\path[circle,draw] (LUA) edge (LB);
\path[circle,draw] (LDA) edge (LB);
\path[line width=1mm] (LUA) edge (LUB);
\path[circle,draw] (LUDA) edge (LUB);
\path[line width=1mm] (LDA) edge (LDB);
\path[circle,draw] (LDUA) edge (LDB);

\node[circle,draw] (RUUA) at (2.0,1.5) {{1}};
\node[circle,draw] (RUDA) at (2.0,0.5) {{2}};
\node[circle,draw,color=red,fill] (RUB) at (3.0,1.0) {};
\node[circle,draw] (RDA) at (4.5,-1.0) {\{3,4\}};
\node[circle,draw,color=red,fill] (RDB) at (3.0,-1.0) {};
\node[circle,draw] (RA) at (7.0,0.0) {\{1,...,4\}};
\node[circle,draw,color=red,fill] (RB) at (5.5,0.0) {};
\path[line width=1mm] (RA) edge (RB);
\path[circle,draw] (RDA) edge (RB);
\path[circle,draw] (RUUA) edge (RUB);
\path[circle,draw] (RUDA) edge (RUB);
\path[line width=1mm] (RDA) edge (RDB);

\end{tikzpicture}
\caption{\label{fig:overvoew:CCtree}$S=\{1,2\},T=\{1,2\},C=\{1,2,3,4\},C'=\{1,2\}$, vertices connected to switch-vertices have been removed for simplicity. Observe that $S' = S \setminus \{c\}$ for $c \in C \cap S$ are the only sets of leaves whose deletion allows the blue forest to have a perfect matching. The choice of $c$ corresponds to an augmenting path from the unmatched blue vertex to $c$. Likewise $T' = T \setminus \{c'\}$ for $c' \in C' \cap T$ are the possible leaf deletions such that the red forest has a perfect matching.}
\end{figure}

Since internal tree vertex $C$ (e.g., a set like $\{1,2,3,4\}$) is removed, it can no longer be matched to its colored child, so the colored child must be matched to either the top or bottom half of set $C$ (e.g., $\{1,2\}$ or $\{3,4\}$).
We essentially get an augmenting path along the tree from the removed vertex $C$ to one of the leaves. This only works if the leaf was not removed, i.e., not connected to a switch-vertex, but also the leaf must not be in $S$ (since $S$ is the set of leaves we want to delete).
Thus one can show that, only deletion of leaves $S'=S\setminus\{c\}$ for any $c \in C \cap S$ allows the tree to have a perfect matching.
The same argument holds for the other tree, deleting leaves $T' = T \setminus \{c'\}$ for any $c'\in C'$ allows the tree to have a perfect matching.

Thus, given $S,T,C,C'$ we have a forest where the possible set of leaf deletions such that the forest has a perfect matching, are all of the form $S' = S \setminus\{c\}$, $T' = T \setminus \{c'\}$ for any $c\in C\cap S$, $c' \in C' \cap T$.
Using the bi-adjacency matrix of this forest thus yields a matrix $\mB$ with \Cref{prop:overview:matrix} required for our algorithm.
Further, changing an entry in $S$ or $T$, or replacing $C$ or $C'$ each will change only $O(1)$ edges in the forest (since we only reconnect two switch-vertices), i.e., $O(1)$ entries in matrix $\mB$.

\ifdefined\arxiv
\subsection{Maintaining Small Rank\&Basis (\Cref{sec:smallrank})}
\else
\subsection{Maintaining Small Rank\&Basis}
\fi
\label{sec:overview:rankbasis}

The complexity of the submatrix-algorithm outlined in previous subsection scales in dimension $n$.
Here we outline how to maintain small rank more efficiently (\Cref{thm:main_rank}), and how to maintain a basis (\Cref{thm:main_basis}) more efficiently for low-rank matrices.
Our results rely on \Cref{lem:rankprojection} (see preliminaries) by \cite{CheungKL13}.

Maintaining just the rank is simple with the above lemma, and we start by outlining that algorithm.
Then we extend the algorithm to also maintain a basis.

\subparagraph{Maintaining Rank (\Cref{thm:main_rank,thm:rankreduction})}

Let us assume for now that we just want to maintain $\rank(\mA)$ up to some parameter $k$, i.e., we maintain $\min(\rank(\mA),k)$.

To do this, we first sample two matrices $\mL,\mR$ via \Cref{lem:rankprojection} and then 
compute $\mL^\top \mA \mR$.
By \Cref{lem:rankprojection} we have $\rank(\mL^\top \mA \mR) = \min(k, \rank(\mA\mR)) = \min(k, \rank(\mA))$.
So by maintaining the rank of $\mL^\top \mA \mR$, we maintain the rank of $\mA$ up to $k$.

Further, by \Cref{lem:rankprojection}, there are only 2 non-zero entries per row in $\mL,\mR$.
Thus an entry-update to $\mA$ changes only 4 entries of $\mL^\top \mA \mR$. Also observe that matrix $\mL^\top \mA \mR$ is of size $ O(k)\times O(k)$ due to width of $\mL,\mR$.
Thus if we can maintain the rank of an $n\times n$ matrix in $U(n)$ update time, then we can now maintain $\min(\rank(\mA), k)$ in $4\cdot (U( O(k))) = O (U(k))$ update time. We run $O(\log n)$ independent copies of this so at least one preserved the rank with high probability. Thus resulting in $\tilde O(U(k))$ update time.

Next, we extend this to $\tilde O(\rank(\mA))$ update time.
To do this, we maintain the product $\mL^\top \mA \mR$ for different powers of $k=2^i$, $i=0,...,\log n$.
Whenever the rank of $\mA$ changes by constant factor, we initialize a new dynamic rank data structure for some $k=2^i$ with $i=\lceil\log(\rank(\mA))\rceil+1$.
Until we must initialize a data structure for $k=2^{i\pm 1}$, there must be at least $2^{i-1}$ updates. (Because each update can change the rank by at most 1.)
So the initialization cost amortizes over $O(2^i)$ updates.
This leads to
an amortized update time of
$$\tilde O( U(\rank(\mA)) + P(\rank(\mA))/\rank(\mA)),$$
where $P(n),U(n)$ are the preprocessing and update time complexity of a dynamic rank data structure for $n\times n$ matrices.

In case of column-updates, the same techniques works, because changing one column of $\mA$ changes at most 2 columns of $\mL^\top\mA \mR$. So there, too, the reduction has only a constant-factor blow-up for the number of updates.

\subparagraph{Maintaining Basis (\Cref{thm:main_basis,thm:basis_reduction})}

When performing a column update, the rank can change by at most 1.
In particular, if the updated column was already part of the basis then it might have to be removed or replaced.
If the updated column was not part of the basis, then it might need to be added to the basis.
In either case, the task of maintaining a basis reduces to the following problem:
Given dynamic matrix $\mA$ and a linearly-independent set of column-vectors $\mB$, find if there is a column in $\mA$ that can be added to $\mB$.

Observe that if all columns of $\mA$ are linearly dependent to $\mB$, then any linear combination $\mA v$ is also dependent to $\mB$.
Conversely, if $\mA$ contains a column that is linearly independent, then there exists $v=e_i$ (a standard unit vector) such that $\mA v$ is independent of $\mB$.
By Schwartz-Zippel-\Cref{lem:schwartzzippel}, this then implies\footnote{Consider for instance the polynomial $\det([\mB|\mA v]^\top [\mB| \mA v])$ which is non-zero if and only if $[\mB| \mA v]$ (the matrix $\mB$ with new column $\mA v$ appended) is linearly independent.} that for random vector $v$, we also have that $\mA v$ is independent of $\mB$ with high probability.
Now assume we have a dynamic rank data structure supporting column updates. If adding column $\mA v$ to $\mB$ increases the rank, we know $\mA$ contains an independent column.
We can use this to binary-search for the first independent column in $\mA$, by repeatedly zeroing out half the entries of $v$.
Each step in the binary-search corresponds to adding and removing a column $\mA v$ to $\mB$.
So we need $O(\log n)$ column-updates to find a new linearly independent column.

The last bottleneck is that computing $\mA v$ takes $O(n^2)$ time.
Thus we must maintain the possible $\mA v$-products that occur during the binary-search, rather than computing them from scratch in each iteration.

During initialization, we precompute all $\mA v$-products that could show up during binary-search.
Then when a column of $\mA$ changes, there are only $O(\log n)$ matrix-vector-products that must be updated. In particular, updating one product $\mA v$ during a column-update to $\mA$ takes only $O(z \log n)$ time, where $z$ is the number of entries changed in $\mA$ by the update.

In summary, let $U(n)$ be the update time of maintaining rank of a dynamic $n\times n$ matrix $\mH$, subject to column-updates.
Then we can maintain a column basis in time
$$O((U(n) + z) \log n).$$
This is $\tilde O(n^{\columnUpdate}+z)$ update time using \cite{BrandNS19}.
Since the reduction only relies on detecting whether $\rank(\mB)$ increases or decreases, we can run the rank data structure on $\mM^\top \mB$ for random matrix $\mM$ from \Cref{lem:rankprojection}, improving the complexity dependence in $n$ to $\rank(\mA)$. Similarly, instead of maintaining the products $\mA v$ for the binary-search, we maintain $(\mM^\top \mA) v$. Since an entry update to $\mA$ changes only 2 entries of $O(\mM^\top \mA )$, there is no complexity increase.
This leads to \Cref{thm:main_basis}, i.e., $\tilde O(\rank(\mA)^{\columnUpdate}+z)$ update time.

\section{Maintain Submatrices}
\label{sec:submatrix}

In this section, we prove our reduction from maximum full-rank submatrix to singularity detection.

\thmSubmatrix*

As outlined in \Cref{sec:overview:submatrix}, the reduction is simple if we assume all non-zero entries of the matrix are random field elements, because then full-rankness is equivalent to perfect bipartite matching.
The main challenge is to prove an equivalent result when the input matrix does not have random entries.
To do this, we translate the graph reduction from \Cref{sec:overview:submatrix} to matrix perspective. Hence we strongly recommend readers to first read \Cref{sec:overview:submatrix} before reading \Cref{sec:submatrix}.

The gadget-graph described in \Cref{sec:overview:submatrix}, is translated to a biadjacency matrix, whose determinant is analyzed in \Cref{sec:blockmatrixdeterminant}. That analysis requires careful handling of signs of permutations, which would not have been an issue in the random-entry case.
In particular, we relate the determinant of the original input matrix $\mA$ with the determinant of the biadjacency matrix that represents the binary trees in \Cref{sec:overview:submatrix}.

In \Cref{sec:submatrix:treegadget}, we analyze the determinant of the binary-tree-matrix and analyze the determinant behavior when performing the binary-search outlined in \Cref{sec:overview:submatrix}.

Lastly, in \Cref{sec:submatrix:combine} we combine the results and prove that the binary-search yields an algorithm for maintaining a maximum full-rank submatrix.

\subsection{Block Matrix Determinant}
\label{sec:blockmatrixdeterminant}

In this subsection, we prove the following lemma, which relates the determinants of matrix $\mA$ and $\mB$. Here matrix $\mA$ will be the original input matrix whose submatrix we want to maintain.
Matrix $\mB$ will correspond to the biadjacency matrix of the binary-tree gadgets in \Cref{sec:overview:submatrix}.
The lemma will be used in \Cref{sec:submatrix:combine}, so show that updates to $\mB$ can be used to binary-search for suitable rows and columns in $\mA$ that are linearly independent.

\blockMatrixLemma

\begin{proof}
Let $\mX\in R^{n\times m},\mY\in R^{m\times n}$ be defined by $\mX_{ii}=x_i$, $\mY_{ii}=y_i$ for $i\in[n]$ and $\mX_{ij}=\mY_{ji}=0$ for $i\ne j$.
\begin{align*}
    \det(\mH)=
    \det\begin{bmatrix}
        \mA&\mX\\
        \mY&\mB
    \end{bmatrix}&=
    \det\left(
    \begin{bmatrix}
        \mI&\mI\\&&\mI&\mI
    \end{bmatrix}
    \begin{bmatrix}
        \mA\\
        &&\mX\\
        &\mY\\
        &&&\mB
    \end{bmatrix}
    \begin{bmatrix}
        \mI\\\mI\\&\mI\\&\mI
    \end{bmatrix}
    \right)
\end{align*}
By the Cauchy–Binet formula we can write $\det(\mH)$ as
\begin{align*}
    \sum_{S,T\subseteq\binom{[2n+2m]}{n+m}}
    \det\begin{bmatrix}
        \mI&\mI\\&&\mI&\mI
    \end{bmatrix}_{[n+m],S}
    \det\begin{bmatrix}
        \mA\\
        &&\mX\\
        &\mY\\
        &&&\mB
    \end{bmatrix}_{S,T}
    \det\begin{bmatrix}
        \mI\\\mI\\&\mI\\&\mI
    \end{bmatrix}_{T,[n+m]}
\end{align*}

Note that the first (resp. last) determinant is zero unless $S$ (resp. $T$) selects exactly $n$ of the first $2n$ and $m$ of the last $2m$ columns (resp. rows). Thus, $\det(\mH)$ is equal
\begin{align*}
    \sum_{S,T\subseteq[n],U,V\subseteq[m]}
    (-1)^{\alpha(S)+\alpha(T)+\beta(U)+\beta(V)}
    \det\begin{bmatrix}
        \mA_{[n]\setminus S,[n]\setminus T}\\
        &&\mX_{S,V}\\
        &\mY_{U,T}\\
        &&&\mB_{[m]\setminus U,[m]\setminus V}
    \end{bmatrix}
\end{align*}
where we define
$$\alpha(S)=|\{(i,j):i\in S,j\in[n]\setminus S,i<j\}| \text{ and } \beta(U)=|\{(i,j):i\in[m]\setminus U,j\in U,i<j\}|.$$

The determinant is zero unless each of the four blocks is square, and $S=V$ and $T=U$.

\noindent
Note that for $S\subseteq[n]$, $\alpha(S)+\beta(S)=|S|(n-|S|)$. Thus, $(-1)^{\alpha(S)+\alpha(T)+\beta(U)+\beta(V)}=1$, and
\begin{align*}
    \det(\mH)&=\sum_{S,T\subseteq[n],|S|=|T|}
    (-1)^{|S|}
    \det(\mA_{[n]\setminus S,[n]\setminus T})
    \det(\mB_{[m]\setminus T,[m]\setminus S})
    x^Sy^T
\end{align*}

The $(-1)^{|S|}$ comes from swapping $\mX_{S,V}$ and $\mY_{U,T}$ onto the diagonal.
\end{proof}

\subsection{Binary-Search Gadget Matrix}
\label{sec:submatrix:treegadget}

\begin{figure}
    \centering
\begin{tikzpicture}[xscale=-0.15,yscale=0.15]
    \def\n{8}
    \draw[step=1cm, gray!10, very thin] (0,0) grid (\n*5-2,\n*5-2);
    \foreach \x in {0,\n,\n*2-1,\n*3-2,\n*4-2,\n*5-2} {
        \draw[gray] (\x,0) -- (\x,\n*5-2);
        \draw[gray] (0,\x) -- (\n*5-2,\x);
    }
    \foreach \i in {0,...,\the\numexpr\n-1\relax} {
        \fill[black] (\i,\i) rectangle +(1,1);
        \fill[black] (\i,\i) rectangle +(1,1);
        \fill[green] (\n*3-2+\i,\n*4-2+\i) rectangle +(1,1);
        \fill[cyan] (\n*4-2+\i,\n*3-2+\i) rectangle +(1,1);
    }
    \foreach \i in {0,...,\the\numexpr\n-2\relax} {
        \fill[black] (\n+\i,\n*2-1+\i) rectangle +(1,1);
        \fill[black] (\n+\i,\n*2-1+\i*2+1) rectangle +(1,1);
        \fill[black] (\n+\i,\n*2-1+\i*2+2) rectangle +(1,1);
        \fill[black] (\n*2-1+\i,\n+\i) rectangle +(1,1);
        \fill[black] (\n*2-1+\i*2+1,\n+\i) rectangle +(1,1);
        \fill[black] (\n*2-1+\i*2+2,\n+\i) rectangle +(1,1);
    }
    \node[scale=\n*0.2] at (\n*4.5-2,\n*4.5-2) {$A$};
\end{tikzpicture}
    \caption{Shape of matrix $\mH$ (\Cref{lem:blockmatrixdeterminant}), the green and cyan diagonals are the random $x_1,..,x_n$ and $y_1,...,y_n$. The black boxes represent the non-zero entries of matrix $\mB$, which is the biadjacency matrix of the forest given in \Cref{fig:overview:blankforest}. The bottom right diagonal are the edges connecting left and right ``switch-vertices.'' The black wedges represent the two binary trees.\label{fig:matrixH}}
\end{figure}

By choosing $\mB$, we can check if a certain subcollection of submatrices of $\mA$ all have determinant zero via \Cref{lem:blockmatrixdeterminant}.
Intuitively, matrix $\mB$ represents the biadjacency matrix of the forest in \Cref{sec:overview:submatrix}, e.g., \Cref{fig:overview:blankforest}. An example of what $\mH$ looks like for such $\mB$ is given in \Cref{fig:matrixH}.

We start by formally defining the graph gadget in this \Cref{sec:submatrix:treegadget}.
Then in \Cref{sec:submatrix:combine} we use those gadget to facilitate the binary-search for new rows/columns to be added to a full-rank submatrix.

First, we define the structure of the trees. \Cref{fig:submatrix:tree} shows their structure.
\begin{definition}\label{def:submatrix:treestructure}
For integer $r\ge l\ge 1$, let $\mathcal{T}_{l,r}$ be a rooted tree defined recursively as follows:
\begin{itemize}
    \item If $l=r$, then $\mathcal{T}_{l,r}$ consists only of a root labeled $\{l\}$
    \item If $l<r$, then $\mathcal{T}_{l,r}$ consists of a root labeled $\{l,\ldots,r\}$ with an (unlabeled) child vertex whose children are the roots of $\mathcal{T}_{l,\lfloor \frac{l+r}{2}\rfloor}$ and $\mathcal{T}_{\lfloor \frac{l+r}{2}\rfloor+1,r}$ and 
\end{itemize}
Note that each vertex of $\mathcal{T}_{l,r}$ is labeled by the set of its descendant leaves.
\end{definition}

\ifdefined\arxiv
\begin{figure}[b]
\centering
\begin{tikzpicture}

\node[circle,draw,color=black,label=above:$\{1\}$] (LUUA) at (-2.0,1.5) {};
\node[circle,draw,color=black,label=above:$\{2\}$] (LUDA) at (-2.0,0.5) {};
\node[circle,draw,color=black,label=above:$\{3\}$] (LDUA) at (-2.0,-0.5) {};
\node[circle,draw,color=black,label=above:$\{4\}$] (LDDA) at (-2.0,-1.5) {};
\node[circle,draw,color=black,fill] (LUB) at (-3.0,1.0) {};
\node[circle,draw,color=black,fill] (LDB) at (-3.0,-1.0) {};
\node[circle,draw,color=black,label=above:{$\{1,2\}$}] (LUA) at (-4.0,1.0) {};
\node[circle,draw,color=black,label=above:{$\{3,4\}$}] (LDA) at (-4.0,-1.0) {};
\node[circle,draw,color=black,fill] (LB) at (-5.0,0.0) {};
\node[circle,draw,color=black,label=above:{$\{1,2,3,4\}$}] (LA) at (-6.0,0.0) {};

\path[thick] (LA) edge (LB);
\path[circle,draw] (LUA) edge (LB);
\path[circle,draw] (LDA) edge (LB);
\path[thick] (LUA) edge (LUB);
\path[circle,draw] (LUUA) edge (LUB);
\path[circle,draw] (LUDA) edge (LUB);
\path[thick] (LDA) edge (LDB);
\path[circle,draw] (LDUA) edge (LDB);
\path[circle,draw] (LDDA) edge (LDB);
\end{tikzpicture}
\begin{align*}
    \begin{bmatrix}
        1&1&1\\
        &1&&1&1\\
        &&1&&&1&1
    \end{bmatrix}
\end{align*}

\caption{\label{fig:submatrix:tree}$\cT_{1,4}$ and its biadjacency matrix}
\end{figure}
\else
\begin{figure}[t]
\centering
\begin{tabular}{cc}
\begin{tikzpicture}

\node[circle,draw,color=black,label=above:$\{1\}$] (LUUA) at (-2.0,1.5) {};
\node[circle,draw,color=black,label=above:$\{2\}$] (LUDA) at (-2.0,0.5) {};
\node[circle,draw,color=black,label=above:$\{3\}$] (LDUA) at (-2.0,-0.5) {};
\node[circle,draw,color=black,label=above:$\{4\}$] (LDDA) at (-2.0,-1.5) {};
\node[circle,draw,color=black,fill] (LUB) at (-3.0,1.0) {};
\node[circle,draw,color=black,fill] (LDB) at (-3.0,-1.0) {};
\node[circle,draw,color=black,label=above:{$\{1,2\}$}] (LUA) at (-4.0,1.0) {};
\node[circle,draw,color=black,label=above:{$\{3,4\}$}] (LDA) at (-4.0,-1.0) {};
\node[circle,draw,color=black,fill] (LB) at (-5.0,0.0) {};
\node[circle,draw,color=black,label=above:{$\{1,2,3,4\}$}] (LA) at (-6.0,0.0) {};

\path[thick] (LA) edge (LB);
\path[circle,draw] (LUA) edge (LB);
\path[circle,draw] (LDA) edge (LB);
\path[thick] (LUA) edge (LUB);
\path[circle,draw] (LUUA) edge (LUB);
\path[circle,draw] (LUDA) edge (LUB);
\path[thick] (LDA) edge (LDB);
\path[circle,draw] (LDUA) edge (LDB);
\path[circle,draw] (LDDA) edge (LDB);
\end{tikzpicture}
& 
$
    \begin{bmatrix}
        1&1&1\\
        &1&&1&1\\
        &&1&&&1&1
    \end{bmatrix}
$ 
\end{tabular}
\caption{\label{fig:submatrix:tree}$\cT_{1,4}$ and its biadjacency matrix}
\end{figure}
\fi

\noindent
\begin{minipage}{\textwidth}
We write $\cL_{l,r}$ for the set of vertex-labels used in $\cT_{l,r}$.
\begin{definition}\label{def:submatrix:treelabels}
    Let $\cL_{l,r}$ be the labels of vertices in $\cT_{l,r}$. Observe that
    \begin{align*}
        \cL_{l,r}=
        \begin{cases}
            \{\{l,\ldots,r\}\}\cup \cL_{l,\lfloor\frac{l+r}{2}\rfloor}\cup \cL_{\lfloor\frac{l+r}{2}\rfloor+1,r}&l<r\\
            \{\{l\}\}&l=r 
        \end{cases}
    \end{align*}
\end{definition}
\end{minipage}
In addition to the binary-trees, we also have additional ``switch-vertices'' at the bottom, used to ``switch off'' vertices. 
The graph in \Cref{def:submatrix:graphgadget} represents the two trees and the additional switch-vertices at the bottom. 
The vectors $a,b \in (\cL_{1,n})^k$ in \Cref{def:submatrix:graphgadget} represent which $k$ vertices of the tree are currently switched off, i.e., which tree vertices are connected to switch-vertices.
In particular, $a_i$ is the vertex switched off by connecting it to the $i$th switch-vertex.
\begin{definition}\label{def:submatrix:graphgadget}
For $a,b\in (\cL_{1,n})^k$, where $0\le k\le n$, let $B_n(a,b)$
be the graph consisting of
$\mathcal{T}_{1,n}$ and a copy $\mathcal{T}_{1,n}'$ and an additional $2n$ vertices $u_1,\ldots,u_n,v_1,\ldots,v_n$
and additional edges $(u_1,\hat{a}_1),\ldots,(u_k,\hat{a}_k)$, $(v_1,\hat{b}_1'),\ldots,(v_k,\hat{b}_k'), (u_{k+1},v_{k+1}),\ldots,(u_n,v_n)$, where $\hat{a}_k$ is the vertex in $\mathcal{T}_{1,n}$ with label $a_k$, and $\hat{b}_k'$ is the vertex in $\mathcal{T}_{1,n}'$ with label $b_k$.
\end{definition}

The biadjacency matrix of this graph will later be used as our matrix $\mB$ in \Cref{lem:blockmatrixdeterminant}. Here we want to the first $n$ rows/columns of $\mB$ correspond to the leafs of the trees.

\begin{definition}\label{def:submatrix:matrixgadget}
For $a,b \in (\cL_{1,n})^k$, let $\mB_n(a,b)\in\R^{(4n-2)\times(4n-2)}$ be the biadjacency matrix of $B_n(a,b)$, where the leaves of $\mathcal{T}_{1,n}$ are the first $n$ vertices on the left side, and the leaves of $\mathcal{T}_{1,n}'$ are the first $n$ vertices on the right side.
\end{definition}

The binary-tree and switch-vertices will be used to perform a binary-search.
Since graph $B$ has two trees, it will be useful to also define notation where we consider only a single tree with a few switched off vertices. (Observe that in the following definition, only the switch-vertices attached to the tree exist. There are no switch-vertices that are not attached to some tree-vertex.)
Further, as \Cref{lem:blockmatrixdeterminant} argues about $\mB_{-T,-S}$ (i.e., rows/column with index in $T$ and $S$ removed), we also want notation that describes the corresponding vertex deletion to the tree.

\begin{definition}
    For $S\subseteq\{l,\ldots,r\}$ and $a_1,\ldots,a_k\in\cL_{l,r}$, let
    Let $\cT_{l,r}^{a_1,\ldots,a_k}$ denote the tree $\cT_{l,r}$ with $k$ switch-vertices attached to vertices with labels $a_1,\ldots,a_k$ respectively, and let $\cT_{l,r}^{a_1,\ldots,a_k}(-S)$ be $\cT_{l,r}^{a_1,\ldots,a_k}$ with vertices with label $\{i\}$ for $i\in S$ removed.
\end{definition}

The property $\det(\mB)\neq 0$ corresponds to graph $B$ having a perfect matching.
The following \Cref{lem:submatrix:perfectmatching} describes under which conditions our tree has a perfect matching.

\begin{lemma}\label{lem:submatrix:perfectmatching}
Suppose $i_1,\ldots,i_k\in\{l,\ldots,r\}$ Then, $\cT_{l,r}^{\{i_1\},\ldots,\{i_k\}}(-S)$ has a perfect matching iff $S=\{l,\ldots,r\}\setminus\{i_1,\ldots,i_k\}$ and $|S|=r+l-1-k$.
\end{lemma}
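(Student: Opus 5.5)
Since the tree $\cT_{l,r}$ is bipartite and acyclic, a perfect matching, if it exists, is unique and can be found greedily from the leaves. I would use this by induction on $r-l$, following the recursive structure of Definition~\ref{def:submatrix:treestructure}.

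We also need to fix how the size condition should read. $\cT_{l,r}$ has $r-l+1$ leaves and $2(r-l)$ internal vertices, since each split contributes a labeled vertex and an unlabeled child. Adding $k$ switch-vertices and deleting $|S|$ leaves leaves
\[
2(r-l)+(r-l+1)+k-|S|
\]
vertices. The intended condition is therefore $|S| = r-l+1-k$; the stated $r+l-1-k$ appears to be a typo. This is equivalent to the $i_j$ being distinct, given $S=\{l,\ldots,r\}\setminus\{i_1,\ldots,i_k\}$. I would prove the statement in the form: a perfect matching exists iff the $i_j$ are distinct and $S=\{l,\ldots,r\}\setminus\{i_1,\ldots,i_k\}$.

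Both directions rest on a leaf argument. A leaf $\{i\}$ that is not deleted must be matched either to a switch-vertex attached to it or to its parent (the unlabeled vertex). A switch-vertex attached to $\{i\}$ has degree one, so it must be matched to $\{i\}$. It follows that a perfect matching forces three things:
\begin{itemize}
\item no two switch-vertices attach to the same leaf, so the $i_j$ are distinct;
\item a leaf carrying a switch-vertex is not deleted, so $i_j\notin S$;
\item a leaf matched to its parent is neither deleted nor switched.
\end{itemize}

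The key structural claim is that in any perfect matching, every labeled internal vertex $\{l',\ldots,r'\}$ (which has no switch-vertex, since all switches are at leaves) is matched to its unlabeled child. I would prove this by induction from the root down. The root is labeled and has a single neighbor, its unlabeled child, so it must be matched to it. Then for an unlabeled vertex already matched to its labeled parent, its two labeled children cannot be matched to it. Each such child that is internal therefore has only its own unlabeled child available, so it must be matched to it. Continuing down the tree, every unlabeled vertex is matched upward, and no leaf can be matched to its parent.

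Consequently every surviving leaf must be matched to a switch-vertex. This gives $\{l,\ldots,r\}\setminus S\subseteq\{i_1,\ldots,i_k\}$, which together with the forced conditions above yields the forward direction.

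For the converse, I would exhibit the matching explicitly: match each labeled internal vertex to its unlabeled child, and match each leaf $i_j$ to its switch-vertex. Every vertex is covered exactly once, provided the $i_j$ are distinct and $S$ is exactly the complement of $\{i_1,\ldots,i_k\}$.

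The main care point is the top-down forcing argument and checking the vertex count consistently. The rest is routine bookkeeping.
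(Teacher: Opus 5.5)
Your proof is correct, but your forward direction takes a different route from the paper's. The paper never looks inside the tree. It observes that $i_j\notin S$ (otherwise a switch-vertex is isolated) and that the $i_j$ are distinct (Hall's condition on the switch-vertices). It then balances the two sides of the bipartition: the labeled side has $2(r-l)+1-|S|$ vertices, while the unlabeled side plus the switch-vertices has $(r-l)+k$. This gives $|S|=r-l+1-k$, so $S\subseteq\{l,\ldots,r\}\setminus\{i_1,\ldots,i_k\}$ with equal cardinalities forces equality.

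You instead propagate forced edges from the root downward to show that every unlabeled vertex is matched upward. Hence every surviving leaf must take a switch-vertex, which gives the reverse inclusion $\{l,\ldots,r\}\setminus S\subseteq\{i_1,\ldots,i_k\}$ directly, and the size condition then follows from distinctness. The counting argument is shorter and uses only the side sizes. Yours exhibits the unique matching explicitly, and it is the same forcing-from-the-root reasoning the paper itself uses for Lemma~\ref{lem:submatrix:perfectmatching2}. The converse direction is identical in both.

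Two small remarks. First, your typo diagnosis is right (the paper's own proof uses $r-l+1-k$), but your justification does not support it. The total vertex count you write down only constrains parity and cannot by itself yield $|S|=r-l+1-k$; the per-side count above is what does. Second, when $l=r$ the root is a leaf with no unlabeled child, so your root-forcing step is vacuous. The leaf argument alone then finishes the case, which deserves one sentence.
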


\begin{proof}
    Suppose $\cT_{l,r}^{\{i_1\},\ldots,\{i_k\}}(-S)$ has a perfect matching. Then, $i_1,\ldots,i_k\notin S$, as otherwise their switch-vertices will be isolated. Furthermore, $i_1,\ldots,i_k$ must be distinct, as otherwise the neighborhood of the $k$ switch-vertices will have fewer than $k$ vertices, which violates Hall's condition. By the requirement that both sides of the bipartition have the same number of vertices, we get that $|S|=r-l+1-k$. Thus, $S=\{l,\ldots,r\}\setminus\{i_1,\ldots,i_k\}$.

    Conversely, suppose $S=\{l,\ldots,r\}\setminus\{i_1,\ldots,i_k\}$ where $|S|=r-l+1-k$, so that $i_1,\ldots,i_k$ are distinct. We can obtain a perfect matching in $\cT_{l,r}^{\{i_1\},\ldots,\{i_k\}}(-S)$ by matching $\{i_1\},\ldots,\{i_k\}$ to its switch-vertex, and the remaining labeled vertices to their child in $\cT_{l,r}$.
\end{proof}

\Cref{lem:submatrix:perfectmatching} concerns the case where the leafs of the tree are connected to switch-vertices.
To facilitate the binary-search, we also want to connect one other tree-vertex to a switch-vertex.
The following \Cref{lem:submatrix:perfectmatching2} states under which conditions such a graph has a perfect matching.

\begin{lemma}\label{lem:submatrix:perfectmatching2}
Suppose $i_1,\ldots,i_{k-1}\in\{l,\ldots,r\}$ and $I\in\cL_{l,r}$. Then, $\cT_{l,r}^{\{i_1\},\ldots,\{i_{k-1}\},I}(-S)$ has a perfect matching iff $S=\{l,\ldots,r\}\setminus\{i_1,\ldots,i_{k-1},i'\}$ for some $i'\in I$ such that $i_1,\ldots,i_{k-1},i'$ are distinct.
\end{lemma}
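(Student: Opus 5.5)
We prove the lemma by induction on the depth of $I$ in $\cT_{l,r}$, i.e., on the structure of \Cref{def:submatrix:treestructure}, reducing to \Cref{lem:submatrix:perfectmatching}. The key structural fact is this. In a tree with switch-vertices, a switch-vertex has degree one, so it must be matched to its attached tree vertex. Hence attaching a switch-vertex to a tree vertex $w$ is equivalent, for the existence of a perfect matching, to deleting $w$ (together with its switch-vertex). So we study $\cT_{l,r}^{\{i_1\},\ldots,\{i_{k-1}\}}(-S)$ with the vertex $\hat I$ deleted.

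\textbf{Base case $I=\{i'\}$.} If $I$ is a leaf label, the statement is exactly \Cref{lem:submatrix:perfectmatching} applied to $i_1,\ldots,i_{k-1},i'$, using the same cardinality bookkeeping. In particular, distinctness of $i_1,\ldots,i_{k-1},i'$ is forced by Hall's condition.

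\textbf{Inductive step for the ``only if'' direction.} Let $I=\{p,\ldots,q\}$ be an internal label with unlabeled child $c$. In $\cT_{l,r}$ every unlabeled (colored) vertex $c'$ has exactly one parent, and that parent is labeled. We first handle the vertices outside the subtree of $I$. By a top-down argument, in any perfect matching every unlabeled vertex outside the subtree of $I$ must be matched to its labeled parent: the root is labeled and its only non-switch neighbor is its child. So the structure outside the subtree behaves exactly as in \Cref{lem:submatrix:perfectmatching}; more precisely, we can apply a leaf-counting argument for each subtree.

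Now consider $c$. Since $\hat I$ is removed, $c$ must be matched to one of its two labeled children, say the root of $\cT_{p,m}$ with $m=\lfloor (p+q)/2\rfloor$. Then in the subtree $\cT_{p,m}$ that root is effectively deleted, so we are in the situation of the lemma with $I$ replaced by $\{p,\ldots,m\}$. This is one level deeper, and the induction applies. The other child subtree $\cT_{m+1,q}$ is a plain subtree, governed by \Cref{lem:submatrix:perfectmatching}.

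Combining gives $S=\{l,\ldots,r\}\setminus\{i_1,\ldots,i_{k-1},i'\}$ for some $i'\in I$, with $i_1,\ldots,i_{k-1},i'$ distinct. Distinctness again follows from the Hall/cardinality argument: every switch-vertex needs a distinct partner.

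\textbf{``If'' direction.} This is an explicit construction. Given $i'\in I$ with $i_1,\ldots,i_{k-1},i'$ distinct, follow the root-to-leaf path from $\hat I$ down to the leaf $\{i'\}$. Along this path, match each unlabeled vertex to its labeled child on the path. This is an augmenting-path shift: the leaf $\{i'\}$ is matched to its parent unlabeled vertex. Match every other labeled non-leaf vertex to its unlabeled child, match the leaves $\{i_j\}$ to their switch-vertices, and match the switch-vertex of $I$ to $\hat I$. Every leaf not in $S$ is covered, and every vertex is covered exactly once.

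\textbf{Main obstacle.} The subtle step is rigorously ruling out alternative matchings in the ``only if'' direction. We must show that outside the path from $\hat I$ to $i'$, every labeled vertex is matched downward, so that no second leaf can be ``freed''. We plan to do this with a counting argument on each subtree. In any subtree $\cT_{p,q}$ not containing $\hat I$, the numbers of labeled and unlabeled vertices differ by exactly the number of leaves. Balancing these counts forces the surviving leaves of that subtree to be exactly those with switch-vertices, and forces the edge to its parent to be unused. Proving this count identity by induction on \Cref{def:submatrix:treestructure} is the crux.
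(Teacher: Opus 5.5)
Your proof is correct and is essentially the paper's. Your top-down forcing (the root's only neighbor is its unlabeled child, so every unlabeled vertex above or beside $\hat I$ is matched to its labeled parent, and the hanging subtrees fall under \Cref{lem:submatrix:perfectmatching}) is the paper's step ``the vertex labeled $\{l,\ldots,r\}$ must be matched to its child'' applied repeatedly. Your step at $\hat I$ (the child $c$ takes one half, whose root is thereby switched off, while the other half is a plain subtree) is exactly the paper's case $I=\{l,\ldots,r\}$. The paper just packages this as induction on $r-l$. That measure, or $|I|$, is also the one you should use rather than the depth of $I$, because in your step you apply the lemma to the subtree $\cT_{p,m}$, where $\{p,\ldots,m\}$ is the root.

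The counting argument in your ``main obstacle'' paragraph is unnecessary, and on its own it would not suffice. A per-subtree count only shows that the number of surviving leaves equals the number of attached switch-vertices, plus one if the subtree root is matched to its parent. What rules out that upward edge is the top-down forcing you already stated.
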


\begin{proof}
We will prove the lemma by induction on $r-l$. Let $m=\lfloor \frac{l+r}{2}\rfloor$.

\subparagraph{Perfect Matching $\Rightarrow$ condition on $S$:}

Suppose a perfect matchings exists.
We must have $i_1,\ldots,i_{k-1}$ distinct, as otherwise the neighborhood of the $k-1$ switch-vertices will have fewer than $k-1$ vertices, which violates Hall's condition.

\begin{itemize}
\item Suppose that $I\in \cL_{m+1,r}$.

The vertex labeled $\{l,\ldots,r\}$ must be matched to its child, and so 
$$\cT_{l,m}^{\cL_{l,m}\cap \{\{i_1\},\ldots,\{i_{k-1}\},I\}}(-S\cap \{l,\ldots,m\}) \text{ and}$$
$$\cT_{m+1,r}^{\cL_{m+1,r}\cap \{\{i_1\},\ldots,\{i_{k-1}\}\}}(-S\cap \{m+1,\ldots,r\})$$
must have perfect matchings.

By \Cref{lem:submatrix:perfectmatching}, $S\cap \{l,\ldots,m\}=\{l,\ldots,m\}\setminus\{i_1,\ldots,i_{k-1}\}$.

By the inductive hypothesis, $S\cap \{m+1,\ldots,r\}=\{m+1,\ldots,r\}\setminus\{i_1,\ldots,i_{k-1},i'\}$ for some $i'\in I\setminus\{i_1,\ldots,i_{k-1}\}$.

Thus, $S=(S\cap \{l,\ldots,m\})\cup(S\cap \{m+1,\ldots,r\})=\{l,\ldots,r\}\setminus\{i_1,\ldots,i_{k-1},i'\}$ where $i_1,\ldots,i_{l-1},i'$ are distinct.

\item The case where $I\in \cL_{l,m}$ is similar. 

\item Now, suppose $I=\{l,\ldots,r\}$. Then, the vertex labeled $I$ must be matched to its switch-vertex, and its child matched to either the vertex labeled $\{l,\ldots,m\}$ or $\{m+1,\ldots,r\}$.

\begin{itemize}
\item In the first case, both
$$\cT_{l,m}^{\cL_{l,m}\cap \{\{i_1\},\ldots,\{i_{k-1}\},\{l,\ldots,m
\}\}}(-S\cap \{l,\ldots,m\}) \text{ and}$$
$$\cT_{m+1,r}^{\cL_{m+1,r}\cap \{\{i_1\},\ldots,\{i_{k-1}\}\}}(-S\cap \{m+1,\ldots,r\})$$
have perfect matchings.

By the inductive hypothesis,
$S\cap \{l,\ldots,m\}=\{l,\ldots,m\}\setminus\{i_1,\ldots,i_{k-1},i'\}$ for some $i'\in \{l,\ldots,m\}\setminus\{i_1,\ldots,i_{k-1}\}$.

By \Cref{lem:submatrix:perfectmatching}, $S\cap \{m+1,\ldots,r\}=\{m+1,\ldots,r\}\setminus\{i_1,\ldots,i_{k-1}\}$.

Thus, $S=\{l,\ldots,r\}\setminus \{i_1,\ldots,i_{k-1},i'\}$ for $i'\in\{l,\ldots,m\}\setminus \{i_1,\ldots,i_{k-1}\}$.

\item In the second case, both
$$\cT_{l,m}^{\cL_{l,m}\cap \{\{i_1\},\ldots,\{i_{k-1}\}\}}(-S\cap \{l,\ldots,m\}) \text{ and}$$
$$\cT_{m+1,r}^{\cL_{m+1,r}\cap \{\{i_1\},\ldots,\{i_{k-1}\},\{m+1,\ldots,r
\}\}}(-S\cap \{m+1,\ldots,r\})$$
have perfect matchings.

Again, by \Cref{lem:submatrix:perfectmatching} and the inductive hypothesis, $S=\{l,\ldots,r\}\setminus \{i_1,\ldots,i_{k-1},i'\}$ for $i'\in\{m+1,\ldots,r\}\setminus \{i_1,\ldots,i_{k-1}\}$.
\end{itemize}
\end{itemize}
In all cases we have $S=\{l,\ldots,r\}\setminus\{i_1,\ldots,i_{k-1},i'\}$ for $i'\in I\setminus \{i_1,\ldots,i_{k-1}\}$.
\subparagraph{Condition on $S$ $\Rightarrow$ Perfect Matching:}

Now, we show the converse. Suppose $S=\{l,\ldots,r\}\setminus\{i_1,\ldots,i_{k-1},i'\}$ for some $i'\in I$ such that $i_1,\ldots,i_{k-1},i'$ are distinct. We can explicitly construct a perfect matching of $\cT_{l,r}^{\{i_1\},\ldots,\{i_{k-1}\},I}(-S)$ as follows:

Take the symmetric difference of the unique perfect matching of $\cT_{l,r}(-\{l,\ldots,r\})$ with the path from $I$ to $\{i'\}$, and match $\{i_1\},\ldots,\{i_{k-1}\},I$ to their switch vertices.
\end{proof}

\begin{lemma}\label{lem:submatrix:uniquematching}
    $B_n((\{i_1\},\ldots,\{i_{k-1}\},I),(\{j_1\},\ldots,\{j_{k-1}\},J))$ has a matching covering all vertices 
    except $T$ on the left and $S$ on the right iff $T=[n]\setminus \{i_1,\ldots,i_{k-1},i'\}$ and $S=[n]\setminus \{j_1,\ldots,j_{l-1},j'\}$ for some $i'\in I$ and $j'\in J$, and zero such matchings otherwise.
\end{lemma}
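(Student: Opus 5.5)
The plan is to decompose the graph $B_n(a,b)$ with $a=(\{i_1\},\ldots,\{i_{k-1}\},I)$ and $b=(\{j_1\},\ldots,\{j_{k-1}\},J)$ into connected pieces and apply \Cref{lem:submatrix:perfectmatching2} to each piece. Removing the vertices $T$ (left leaves) and $S$ (right leaves) leaves the following disjoint components:
\begin{itemize}
  \item the tree $\cT_{1,n}$ together with its $k$ attached switch-vertices $u_1,\ldots,u_k$, with leaves $T$ removed, which is exactly $\cT_{1,n}^{\{i_1\},\ldots,\{i_{k-1}\},I}(-T)$;
  \item the analogous right structure $\cT_{1,n}'^{\{j_1\},\ldots,\{j_{k-1}\},J}(-S)$, with switch-vertices $v_1,\ldots,v_k$;
  \item the $n-k$ isolated edges $(u_t,v_t)$ for $t>k$.
\end{itemize}
A matching covering all remaining vertices exists iff each component has a perfect matching. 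The isolated edges always do. By \Cref{lem:submatrix:perfectmatching2}, the two tree components do precisely under the stated conditions on $T$ and $S$, where distinctness of $i_1,\ldots,i_{k-1},i'$ (respectively of the $j$'s) is part of the condition. This already yields the ``iff''.

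For the last clause (``zero such matchings otherwise'', together with the implicit uniqueness claim when the condition holds), the plan is to use that $B_n(a,b)$ is a forest. Each of the three kinds of components above is a tree or a single edge. A forest has at most one perfect matching: if $M_1\neq M_2$ were two perfect matchings, $M_1\triangle M_2$ would be a nonempty union of even alternating cycles, which is impossible in an acyclic graph. So when the condition holds the matching is unique, and otherwise there are none. This uniqueness is what later lets the determinant of the submatrix of $\mB_n(a,b)$ be $\pm1$ rather than a signed sum that might cancel.

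The only step that needs care is the bookkeeping of which vertex set is deleted on which side. In \Cref{def:submatrix:matrixgadget}, rows correspond to the left tree and columns to the right tree, and \Cref{lem:blockmatrixdeterminant} deletes rows $T$ and columns $S$. I therefore need to check that the left-tree leaf deletions are indexed by $T$, matching the statement's $T=[n]\setminus\{i_1,\ldots,i_{k-1},i'\}$. I would also need to confirm that the switch-vertex $u_t$ with $t\le k$ lies in the left-tree component, so that the components really separate as listed. Once that is checked, the rest is a direct application of the earlier lemma.
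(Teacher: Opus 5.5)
Your proposal is correct and follows the paper's own proof. The paper likewise splits $B_n(a,b)$, with $T$ and $S$ deleted, into the two switched trees plus $n-k$ isolated edges and applies \Cref{lem:submatrix:perfectmatching2} to each tree; it gives the forest-uniqueness argument in the text right after the lemma to obtain \Cref{lem:submatrix:Bdeterminant}. The two bookkeeping checks you flag are immediate from \Cref{def:submatrix:graphgadget,def:submatrix:matrixgadget}, and keeping distinctness of $i_1,\ldots,i_{k-1},i'$ (resp.\ the $j$'s) explicit is actually needed for the ``if'' direction, which the statement as printed glosses over.
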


\begin{proof}%

Note that $B((a_1,\ldots,a_k),(b_1,\ldots,b_k))$ is isomorphic to a disjoint union of $\cT_{1,n}^{a_1,\ldots,a_k}$ and $\cT_{1,n}^{b_1,\ldots,b_k}$ and $n-k$ isolated edges.

$B((a_1,\ldots,a_k),(b_1,\ldots,b_k))$ has a perfect matching covering all vertices except $T$ on the left and $S$ on the right iff $\cT_{1,n}^{a_1,\ldots,a_k}(-T)$ and $\cT_{1,n}^{b_1,\ldots,b_k}(-S)$ both have perfect matchings.
\end{proof}

The matrix $\mB_n((\{i_1\},\ldots,\{i_{k-1}\},I),(\{j_1\},\ldots,\{j_{k-1}\},J))_{-T,-S}$ is the biadjacency matrix of the graph \mbox{$B_n((\{i_1\},\ldots,\{i_{k-1}\},I),(\{j_1\},\ldots,\{j_{k-1}\},J))$} after removing vertices $S$ and $T$.
Further, since that graph is a forest, if it has a perfect matching, then it has at most one perfect matching. 
Thus $$\det(\mB_n((\{i_1\},\ldots,\{i_{k-1}\},I),(\{j_1\},\ldots,\{j_{k-1}\},J))_{-T,-S})\ne 0$$ if and only if the graph has a perfect matching, even without replacing the non-zero entries with random field elements. Schwartz-Zippel-Lemma is not needed. Thus we get \Cref{lem:submatrix:Bdeterminant}.

\begin{corollary}\label{lem:submatrix:Bdeterminant}
    $\det(\mB_n((\{i_1\},\ldots,\{i_{k-1}\},I),(\{j_1\},\ldots,\{j_{k-1}\},J))_{-T,-S})\ne 0$ iff $T=[n]\setminus\{i_1,\ldots,i_{k-1},i'\}$ and $S=[n]\setminus\{j_1,\ldots,j_{l-1},j'\}$ for some $i'\in I$ and $j'\in J$.
\end{corollary}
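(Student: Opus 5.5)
The plan is to reduce the determinant statement to the matching statement of \Cref{lem:submatrix:uniquematching}, using that the underlying graph is a forest. Write $\mB=\mB_n((\{i_1\},\ldots,\{i_{k-1}\},I),(\{j_1\},\ldots,\{j_{k-1}\},J))$. By \Cref{def:submatrix:matrixgadget}, the rows of $\mB$ correspond to the left vertices of the graph $B_n(\cdot,\cdot)$ and the columns to the right vertices. The first $n$ rows and columns are the leaves of the two trees. Hence $\mB_{-T,-S}$ is exactly the biadjacency matrix of the graph obtained by deleting the leaves $T$ on the left and $S$ on the right.

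First I would expand $\det(\mB_{-T,-S})=\sum_\sigma \sign(\sigma)\prod_i (\mB_{-T,-S})_{i,\sigma(i)}$ by the Leibniz formula. Each nonzero term corresponds bijectively to a perfect matching of the remaining graph. Since every nonzero entry of $\mB$ equals $1$, each such term is exactly $\pm 1$.

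Second, I would show the remaining graph has at most one perfect matching. It is a subgraph of $B_n(\cdot,\cdot)$, which is a forest: two trees with pendant switch-vertices plus isolated edges. A forest has at most one perfect matching, since the symmetric difference of two distinct perfect matchings contains an alternating cycle. So the Leibniz sum has either zero terms or exactly one term equal to $\pm1$. This means the determinant is nonzero iff the graph has a perfect matching, over any field and with no cancellation issues.

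Third, I would apply \Cref{lem:submatrix:uniquematching}. It says such a matching (covering everything except $T$ on the left and $S$ on the right) exists iff $T=[n]\setminus\{i_1,\ldots,i_{k-1},i'\}$ and $S=[n]\setminus\{j_1,\ldots,j_{k-1},j'\}$ for some $i'\in I$, $j'\in J$; the corollary writes this index as $j_{l-1}$, which I read as a typo for $j_{k-1}$. Combining the three steps gives the corollary.

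The only step needing care is the uniqueness of perfect matchings in forests, together with the observation that vertex deletion preserves acyclicity. Both are standard, so I expect no real obstacle.
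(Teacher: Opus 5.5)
Your proposal is correct and matches the paper's argument. $\mB_{-T,-S}$ is the biadjacency matrix of a forest, which has at most one perfect matching, so the Leibniz expansion has at most one (nonzero) term and the determinant is nonzero iff a perfect matching exists; \Cref{lem:submatrix:uniquematching} then pins down $T$ and $S$.

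Besides the $j_{l-1}$ typo you caught, both the corollary and \Cref{lem:submatrix:uniquematching} silently drop the distinctness requirement from \Cref{lem:submatrix:perfectmatching2} ($i_1,\ldots,i_{k-1},i'$ distinct, and likewise for the $j$'s). Without it, e.g.\ for $i'=i_1$, the graph has no perfect matching and the determinant is zero, so you should state that condition explicitly.
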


\subsection{Maintaining Maximum Full-Rank Submatrix}
\label{sec:submatrix:combine}

In this section, we prove our reduction \Cref{thm:submatrixreduction} for maintaining a maximum full-rank submatrix $\mA_{I,J}$ ($I,J\subset[n]$) of dynamic matrix $\mA$.
The idea is to greedily add row/column indices to $I,J$, where the correct indices are located via a binary-search.
First, observe that such a greedy procedure works, as proven in \Cref{lem:submatrix:greedy}.

\begin{lemma}\label{lem:submatrix:greedy}
    Suppose $\rank(\mA)>k$. Then, for any $k\times k$ full-rank submatrix of $\mA$, there is some row and column that can be added to make a $(k+1)\times(k+1)$ full-rank submatrix.
\end{lemma}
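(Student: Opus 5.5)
Let $\mA_{I,J}$ be the given full-rank submatrix with $|I|=|J|=k<\rank(\mA)$. The plan is to enlarge $I$ first and $J$ second, using only linear independence of rows and columns.

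\emph{Step 1: add a row.} Since $\mA_{I,J}$ is invertible, the rows of $\mA$ indexed by $I$ are linearly independent. There are only $k<\rank(\mA)$ of them, so they do not span the row space of $\mA$. Hence some row $i\notin I$ of $\mA$ is not in the span of the rows indexed by $I$. Then the rows indexed by $I\cup\{i\}$ are independent, so the $(k+1)\times n$ matrix $\mA_{I\cup\{i\},[n]}$ has rank $k+1$.

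\emph{Step 2: add a column.} Consider $\mM:=\mA_{I\cup\{i\},[n]}$, which has rank $k+1$. Its columns indexed by $J$ are linearly independent, because their restriction to the rows $I$ is the invertible matrix $\mA_{I,J}$. These are only $k$ columns, while the column space of $\mM$ has dimension $k+1$. So some column $j\notin J$ of $\mM$ lies outside their span. The $k+1$ columns of $\mM$ indexed by $J\cup\{j\}$ are then independent, and they form a square $(k+1)\times(k+1)$ matrix. Therefore $\mA_{I\cup\{i\},J\cup\{j\}}$ is full rank.

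The only step needing care is Step 2. There we must restrict to the rows $I\cup\{i\}$ before choosing the column; otherwise the new column might be independent only through rows outside $I\cup\{i\}$. Using row-rank equals column-rank for $\mM$ handles this directly, so I expect no real obstacle.
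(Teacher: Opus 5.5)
Your proof is correct and matches the paper's: first add a row $i\notin I$ so that the rows $I\cup\{i\}$ are independent, then add a column $j$ inside $\mA_{I\cup\{i\},[n]}$. You also justify explicitly why the columns indexed by $J$ stay independent in that restricted matrix (their restriction to rows $I$ is the invertible $\mA_{I,J}$), a step the paper only asserts.
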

\begin{proof}
    Suppose $\rank(\mA_{I,J})=k$ for $|I|=|J|=k$.
    
    $I$ indexes $k$ linearly independent rows of $\mA$, so there is some $i\notin I$ such that $I\cup\{i\}$ indexes $k+1$ linearly independent rows. In other words,
    \begin{align*}
        \rank(\mA_{I\cup\{i\},[n]})=k+1
    \end{align*}
    Now, $J$ indexes $k$ linearly independent columns of $\mA_{I\cup\{i\},[n]}$, so there must be some $j$ such that $J\cup\{j\}$ indexes $k+1$ linearly independent columns of $\mA_{I\cup\{i\},[n]}$. Then,
    \begin{align*}
        \rank(\mA_{I\cup\{i\},J\cup\{j\}})=k+1
    \end{align*}
    as desired.
\end{proof}

Next, we show how to find new row/column indices via binary-search, when having access to a dynamic algorithm that detects when matrix $\mH$ (\Cref{lem:blockmatrixdeterminant}) has $\det(\mH)=0$.

\begin{lemma}\label{lem:submatrix:findindices}
    Suppose $\det(\mA_{\{i_1,\ldots,i_k\},\{j_1,\ldots,j_k\}})\ne 0$. We can find $(i',j')$ such that 
    $$\det(\mA_{\{i_1,\ldots,i_k,i'\},\{j_1,\ldots,j_k,j'\}})\ne 0,$$
    or determine that $\rank(\mA)=k$, in $O(\log n)$ calls to a dynamic singularity detection algorithm.
\end{lemma}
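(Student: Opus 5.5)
We set $I_0=\{i_1,\ldots,i_k\}$, $J_0=\{j_1,\ldots,j_k\}$ and work with the matrix $\mH$ of \Cref{lem:blockmatrixdeterminant}. We choose $\mB=\mB_n(a,b)$ with
$$a=(\{j_1\},\ldots,\{j_k\},C'),\qquad b=(\{i_1\},\ldots,\{i_k\},C),$$
where $C,C'\in\cL_{1,n}$ are current search intervals. The $x_1,\ldots,x_n,y_1,\ldots,y_n$ are random field elements.

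Applying \Cref{lem:blockmatrixdeterminant} together with \Cref{lem:submatrix:Bdeterminant}, only the following pairs survive in the expansion: the deleted sets are $S=[n]\setminus(I_0\cup\{i'\})$ and $T=[n]\setminus(J_0\cup\{j'\})$, with $i'\in C\setminus I_0$ and $j'\in C'\setminus J_0$. (We orient $a,b$ so that the $T$ deleted from the rows of $\mB$ matches the columns deleted from $\mA$.) For these pairs $\mA_{-S,-T}=\mA_{I_0\cup\{i'\},J_0\cup\{j'\}}$. So $\det(\mH)$, as a polynomial in $x,y$, is
$$\sum_{i',j'}\pm\det(\mA_{I_0\cup\{i'\},J_0\cup\{j'\}})\,\det(\mB_{-T,-S})\,x^Sy^T.$$
The monomials $x^Sy^T$ are distinct for distinct $(i',j')$, so there is no cancellation. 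Hence the polynomial is nonzero iff some $i'\in C$, $j'\in C'$ gives an invertible $(k+1)\times(k+1)$ extension. By \Cref{lem:schwartzzippel} (field of size $\poly(n)$, degree $\le 2n$), the random evaluation reflects this w.h.p.

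The search then runs as follows.
\begin{enumerate}
\item Start with $C=C'=[n]$. If $\det(\mH)=0$, then no extension exists, and \Cref{lem:submatrix:greedy} gives $\rank(\mA)=k$.
\item Otherwise, replace $C$ by one of its two children in $\cT_{1,n}$. If $\det(\mH)\neq 0$, keep that child; otherwise take the sibling, which must work since the union is nonzero by the previous test.
\item After $O(\log n)$ halvings $C=\{i'\}$. Fix it and halve $C'$ in the same way until $C'=\{j'\}$.
\end{enumerate}
Each step moves a single switch-vertex edge, which is $O(1)$ entry updates of $\mB$ and hence of $\mH$. That gives $O(\log n)$ calls in total.

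The main obstacle is that the oracle only detects the \emph{first} time the matrix becomes singular. We handle this by reverting. When a tentative halving makes $\det(\mH)=0$, we undo those $O(1)$ entry changes, which restores the previous nonsingular state, and then apply the sibling instead. So the structure never has to continue past a singular state; this is the revert-and-skip trick noted in the overview.

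The second point to check is correctness against the adaptive choices made during the search. The $x,y$ stay fixed, but each query is a fixed nonzero or zero polynomial evaluated at them. A union bound over the $\poly(n)$ queries is then fine, because the adversary sees only outputs. Finally, the chosen $(i',j')$ satisfies the claim: its surviving term is nonzero in the polynomial, and the last test confirms this.
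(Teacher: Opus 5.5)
Your proposal is correct and is essentially the paper's proof. You expand $\det(\mH)$ via \Cref{lem:blockmatrixdeterminant} and use \Cref{lem:submatrix:Bdeterminant} to see that the surviving terms are exactly the $(k+1)\times(k+1)$ extensions with $i'\in C$, $j'\in C'$; then you binary-search first over $C$ and then over $C'$ with $O(1)$ entry changes of $\mB$ per step, reverting whenever the singularity oracle fires. Two of your details go beyond the paper's text: your orientation of $a,b$ (the $j$'s on the row-side tree, so that $\mA_{-S,-T}=\mA_{I_0\cup\{i'\},J_0\cup\{j'\}}$) is the consistent one, since the paper's write-up swaps the roles of rows and columns at that step, and your explicit appeals to \Cref{lem:submatrix:greedy}, Schwartz--Zippel, and the output-adaptive union bound fill in points the paper leaves implicit.
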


\begin{proof}

\begin{algorithm2e}
    \caption{Augment via binary search}
    \label{alg:augment_via_binary_search}
    \SetKwInOut{Input}{Input}
    \SetKwInOut{Output}{Output}
    \Input{$\mA$ and $i_1,\ldots,i_k,j_1,\ldots,j_k$ such that $\det(\mA_{\{i_1,\ldots,i_k\},\{j_1,\ldots,j_k\}})\ne 0$}
    \Output{$(i',j')$ such that $\det(A_{\{i_1,\ldots,i_k,i'\},\{j_1,\ldots,j_k,j'\}})\ne 0$, or that none exist}
    \LinesNumbered %
    \SetAlgoLined %
    \If{$\det(\mH((i_1,\ldots,i_k,\{1,\ldots,n\}),(j_1,\ldots,j_k,\{1,\ldots,n\})))=0$}{
        \KwRet{none exist}
    }
    $l \gets 1, r \gets n$\;
    \While{$l<r$}{
        \If{$\det(\mH((i_1,\ldots,i_k,\{l,\ldots,\lfloor\frac{l+r}{2}\rfloor\}),(j_1,\ldots,j_k,\{1,\ldots,n\})))=0$}{
            $l \gets \lfloor\frac{l+r}{2}\rfloor+1$\;
        }\Else{
            $r \gets \lfloor{\frac{l+r}{2}}\rfloor$\;
        }
    }
    $l' \gets 1, r' \gets n$\;
    \While{$l'<r'$}{
        \If{$\det(\mH((i_1,\ldots,i_k,\{l,\ldots,r\}),(j_1,\ldots,j_k,\{l',\ldots,\lfloor\frac{l'+r'}{2}\rfloor\})))=0$}{
            $l' \gets \lfloor\frac{l'+r'}{2}\rfloor+1$\;
        }\Else{
            $r' \gets \lfloor{\frac{l'+r'}{2}}\rfloor$\;
        }
    }
    \KwRet{$(l,l')$}
\end{algorithm2e}

    See \Cref{alg:augment_via_binary_search}. 
    Here matrix $\mH((i_1,...,i_k,I),j_1,...,j_k,J))$ is the matrix $\mH$ from \Cref{lem:blockmatrixdeterminant} where we use $\mB((i_1,...,i_k,I),(j_1,...,j_k,J))$ (\Cref{def:submatrix:matrixgadget}).

    By \Cref{lem:blockmatrixdeterminant}, $\det(\mH)\neq0$ iff there is $S,T\subset[n]$ with $\det(\mA_{-S,-T}) \neq 0$ and $\det(\mB_{-T,-S})\neq 0$.
    \Cref{lem:submatrix:Bdeterminant} states that $\det(\mB_{-T,-S})\neq 0$ happens iff 
    $$
    T = [n] \setminus \{i_1,...,i_k, i'\},\quad S=[n] \setminus \{j_1,...,j_k, j'\}
    $$
    for some $i'\in I, j'\in J$.
    Thus $\det(\mH)\neq0$ iff there is $i'\in I,j'\in J$ with $0\neq \det(\mA_{-S,-T}) = \det(\mA_{\{i_1,...,i_k,i'\},\{j_1,...,j_k,j'\}})$.
    Hence we can binary-search over sets $I,J$ to find such $i',j'$, which is precisely what we do in \Cref{alg:augment_via_binary_search}.

    Observe that each iteration changes only $O(1)$ entries of $\mB$ to relink the switch-vertices.
    Further, it suffices to have a data structure that detects for the first time when $\det(\mH)=0$, because we can simply revert the previous update.
    The matrix $\mH$ initially has $\det(\mH)\neq 0$ when initializing with $\mB((i_1,...,i_k),(j_1,...,j_k))$.
\end{proof}

We can now prove our main reduction \Cref{thm:submatrixreduction}.
\thmSubmatrix*

\begin{proof}%

We maintain the invariant that $\det(\mA_{\{i_1,\ldots,i_k\},\{j_1,\ldots,j_k\}})\ne 0$ and no larger submatrix has nonzero determinant. 
Further, we always run a dynamic singularity algorithm on matrix $\mM$ (as defined in \Cref{lem:blockmatrixdeterminant}) using $\mB_n(I,J)$. When $\mA_{I,J}$ is full-rank, then $\mM$ is also full-rank.
\subparagraph{Initialization}
To initialize, first find initial maximum sets $I,J$ with $\det(\mA_{I,J})\neq 0$ in $O(n^\omega)$ time.
Then construct $\mM$ and initialize the dynamic singularity data structure.
Thus initialization takes $O(P(n)+n^\omega)$ time.

\subparagraph{Update}

If we update entry $(i,j)$ and $\det(\mA_{I,J})$ becomes zero (implying $i\in I$ and $j\in J$), then $\det(\mM)$ becomes zero. When we detect this, revert the update, remove $i,j$ from $I$ and $J$ (update $\mM$ accordingly), then update $\mA$ (and $\mM$) again. We now have $\det(\mM)\neq 0$ since $\det(\mA_{I,J})\neq 0$ for the new smaller $I,J$.
We perform one binary-search via \Cref{lem:blockmatrixdeterminant} to see if maybe new indices can be added to $I,J$.
If so, add the indices to $I,J$ and update $\mM$.

If we update entry $(i,j)$ and $\det(\mA_{I,J})$ remains nonzero, we also perform one binary-search via \Cref{lem:blockmatrixdeterminant} to see if maybe new indices can be added to $I,J$.
If so, add the indices to $I,J$ and update $\mM$.

Overall, we perform at most $O(\log (n))$ updates to the dynamic singularity algorithm by \Cref{lem:submatrix:findindices}. So the update time is $O(U(n)\log n)$.

\end{proof}

\section{Maintaining Small Rank \& Basis}
\label{sec:smallrank}

In this section we prove the following reduction, allowing any dynamic rank data structure, whose time complexity scales in the dimension, to instead scale in the rank.
Together with dynamic rank data structure of \cite{BrandNS19,Sankowski05} (with $O(n^\omega)$ preprocessing, $O(n^\bns)$ update time for entry-updates, and $O(n^\columnUpdate)$ for column-updates), this then implies \Cref{thm:main_rank}.

\thmSmallRankReduction*

The proof is split into 2 parts. First in \Cref{sec:rank:boundedrank}, we show that one can maintain $\min(\rank(\mA), k)$ of a dynamic matrix $\mA$ and input parameter $k$.
The time complexity of that data structure will scale in $k$.
Then in \Cref{sec:rank:general} we prove \Cref{{thm:rankreduction}} by extending the previous data structure to no longer require the input parameter $k$.
\Cref{sec:rank:basis} then extends the result to maintain a basis in addition to the rank, leading to \Cref{thm:main_basis}.

\subsection{Maintaining Bounded Rank}
\label{sec:rank:boundedrank}

We wish to prove the following theorem, which maintains the rank up to an input parameter $k$.

\begin{theorem}
\label{thm:boundedrank}
Assume there is a dynamic algorithm that maintains the rank of a dynamic matrix $\mA\in\F^{n\times n}$. Let $O(U(n,z))$ be the update time for changing $z$ entries in a column of $\mA$, and let $P(n)$ be the preprocessing time.

Then there exists constant $k_0\in\N$ and a randomized data structure with the following operations: 
\begin{itemize}
    \item \textsc{Initialize}$(\mA\in\F^{n\times n},k>k_0)$: Initializes the data structure in $O(\nnz(\mA)\log n)$ time. The data structure is initially in ``\emph{inactive}'' state.
    \item \textsc{Activate}()/\textsc{Deactivate}(): Changes the active state in $O(P(k)\log n)$ time, and returns w.h.p.~$\min(k, \rank(\mA))$.
    \item \textsc{Update}$(i\in[n],v\in\F^n)$ Set $\mA \gets \mA + v e_i$ (i.e., change the $i$th column by adding $v$ to it).
    
    If ``active,'' the data structure returns $\min(k, \rank(\mA))$ in $O(U(k, \nnz(v))\log n)$ worst-case time. If ``inactive,'' nothing is returned and the update takes only $O(\nnz(v) \log n)$ time.
\end{itemize}
The data structure is randomized and correct with high probability, and works against an adaptive adversary.
\end{theorem}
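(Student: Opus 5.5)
The plan is to realize the sketch from \Cref{sec:overview:rankbasis}. First sample $\ell=\Theta(\log n)$ independent pairs $(\mL_t,\mR_t)$, $t\in[\ell]$, of $n\times r$ matrices via \Cref{lem:rankprojection}, with $r=\Theta(k)$ and $k>k_0$ chosen so that each pair succeeds with probability at least $1/2$. The field is taken of size $\poly(n)$, per the remark after \Cref{lem:rankprojection}. For each $t$ we maintain the $r\times r$ product $\mC_t=\mL_t^\top\mA\mR_t$ explicitly in sparse form. \textsc{Initialize} computes every $\mC_t$ in $O(\nnz(\mA))$ time, because each row of $\mL_t$ and each row of $\mR_t$ has exactly two non-zeros. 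Hence an entry $\mA_{a,b}$ contributes to at most four entries of $\mC_t$, and the total cost over all $t$ is $O(\nnz(\mA)\log n)$.

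\textbf{Correctness of the product.} Apply \Cref{lem:rankprojection} twice. Applying it to $\mA$ gives $\rank(\mA\mR_t)=\min(\rank(\mA),r)$. Applying it to $(\mA\mR_t)^\top$ with $\mL_t$ gives $\rank(\mL_t^\top\mA\mR_t)=\min(\rank(\mA\mR_t),r)$. Together these give $\rank(\mC_t)=\min(\rank(\mA),r)$, with constant success probability for each $t$. Since always $\rank(\mC_t)\le\min(\rank(\mA),r)$, the quantity $\max_t\rank(\mC_t)$ equals $\min(\rank(\mA),r)$ with probability $1-2^{-\ell}$, i.e.\ w.h.p. We then output $\min(k,\max_t\rank(\mC_t))$, which equals $\min(k,\rank(\mA))$ because $r\ge k$; we may choose $r\ge k$ since $r=\Theta(k)$ and we may scale the constant.

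\textbf{Operations.} \textsc{Activate} starts the assumed rank data structure on each $\mC_t$, costing $O(P(r)\ell)=O(P(k)\log n)$. Here we use that $P$ is polynomially bounded, so $P(\Theta(k))=O(P(k))$, and similarly for $U$. \textsc{Deactivate} discards these structures. For \textsc{Update}$(i,v)$, the change to $\mC_t$ is $\mL_t^\top v\,(e_i^\top\mR_t)$. The row $e_i^\top\mR_t$ has two non-zeros, so the update touches only two columns of $\mC_t$. Each of those columns changes by a scalar multiple of $\mL_t^\top v$, which has at most $2\nnz(v)$ non-zeros and is computable in $O(\nnz(v))$ time. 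Thus the update amounts to two column-updates with $O(\nnz(v))$ changed entries each. In the inactive state we just update the sparse $\mC_t$, in $O(\nnz(v)\log n)$ total time. In the active state we additionally feed the two column-updates to each rank structure, for $O(U(k,\nnz(v))\log n)$ total time, and then return the maximum rank as above.

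\textbf{Main obstacle: the adaptive adversary.} The main subtlety is robustness against an adaptive adversary, since the adversary chooses later updates after seeing outputs. The remedy is that the output is a deterministic function of $\mA$ whenever all queries succeed. If we also assume the underlying rank structure is correct w.h.p.\ against adaptive updates, the outputs reveal nothing about $\mL_t,\mR_t$ beyond what $\mA$ determines. So we can union bound over the $\poly(n)$ updates, treating each query's failure event as one for a fixed matrix chosen independently of the randomness, to get success w.h.p.\ over the whole sequence.
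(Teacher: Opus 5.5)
Your proposal is correct and takes essentially the same route as the paper. You sample $O(\log n)$ independent sparse projection pairs from \Cref{lem:rankprojection}, maintain $\mL_t^\top\mA\mR_t$ through two induced column-updates of $O(\nnz(v))$ entries each, and return the maximum rank. You are slightly more careful than the paper's own proof, which conflates $r=\Theta(k)$ with $k$ and omits the standard coupling argument for output-adaptive adversaries that you supply.
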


\begin{proof}
Let $\mM,\mN$ be two matrices of \Cref{lem:rankprojection} sampled for $k,n$.

Our data structure maintains $\mM^\top\mA\mN$ throughout all updates, regardless of active/inactive state.
If in active state, we maintain the rank of $\mM^\top\mA\mN$, which by \Cref{lem:rankprojection} has 
$$\rank(\mM^\top \mA \mN) = \min(k, \rank(\mM^\top \mA)) = \min(k, \rank(\mA)).$$
Observe that $\mM^\top \mA \mN$ is a $O(k)\times O(k)$ matrix, hence the time complexity of maintaining its rank will scale in $k$ rather than $n$.

\paragraph{Initialization} 
We sample $\mM,\mN$ in $O(n)$ time. Because of their sparsity (i.e., $2$ non-zero entries per row), computing $\mM^\top\mA\mN$ takes $O(\nnz(\mA)$ time. Without loss of generality $\nnz(\mA) \ge n$, so the time for initialization is $O(\nnz(\mA))$.

\paragraph{Activation} 
We initialize the assumed dynamic rank algorithm on $\mM^\top \mA \mN$ in $O(P(k))$ time.

\paragraph{Update} 
Consider what happens when we update our matrix $\mA$ by changing a column or single entry. We may express this update as $\mA \gets \mA + v e_i^\top$ where $v\in\F^n$ and $e_i$ is a standard unit vector.
In case of column updates, $v$ is added to the $i$th column of $\mA$. In case of entry updates, $v$ is a sparse vector with a single non-zero entry.

We now have
$$\mM^\top (A + v e_i^\top) \mN = \mM^\top A \mN + (\mM^\top v) (e_i^\top \mN)$$
By $\mM,\mN$ having only $2$ non-zero entries per row, we can compute $u:=\mM^\top v$ in $O(\nnz(v))$ time.
Observe that $u$ has only $O(\nnz(v))$ non-zero entries. Vector $e_i^\top \mN$ is a single row of $\mN$, thus having $2$ non-zero entries. Thus the outer product $u (e_i^\top \mN)$ results can be computed in $O(\nnz(v))$ time.

This implies we can maintain $\mM^\top \mA \mN$ in $O(\nnz(v))$ time per column update, where $\nnz(v)$ is the number of entries of $\mA$ that change during the column update.

Further, each column update to $\mA$ changing $\nnz(v)$ entries, results in 2 column updates to $\mM^\top \mA \mN$ changing $O(\nnz(v))$ entries per column.

\paragraph{Additional Update while Active}

If the data structure is in active state, we pass the two column updates to the assumed dynamic rank data structure.
As argued above, each column update changes $O(\nnz(v))$ entries, thus taking $O(U(r,\nnz(v))$ time to maintain the rank.

\paragraph{Failure Probability}

We can assume without loss of generality that field $\F$ has size $\poly(n)$, otherwise we use a field extension \cite[Lemma 2.1]{CheungKL13}.
Thus by \Cref{lem:rankprojection}, we have with constant probability that $\rank(\mM^\top \mA \mN) = \min(k,\rank(\mA))$.
We can boost this by running $O(\log n)$ instances of our data structure and returning the maximum rank.

\end{proof}

\subsection{Maintain Unbounded Rank}
\label{sec:rank:general}

We now use the data structure of \Cref{thm:boundedrank} to prove \Cref{thm:rankreduction}.

\thmSmallRankReduction*

Note that we are no longer returning $\min(k, \rank(\mA))$ but the exact $\rank(\mA)$.

\begin{proof}%

Our data structure initializes $O(\log n)$ data structure of \Cref{thm:boundedrank} for parameter $k=2^i$ for $i=\log(k_0),...,\log n$.
Throughout all updates, our data structure maintains a parameter $j$ satisfying the following invariants:
\begin{align}
2^j/4 \le \rank(\mA) \le 2^{j} \label{eq:invariant:j}\\
\text{For any $i$ with $2^i/2 \le \rank(\mA) \le 2^{i}$, the $i$th data structure is \emph{active}.} \label{eq:invariant:active}\\
\text{For $i>j+1$ and $i<j-2$, the $i$th data structures are \emph{inactive}.}\label{eq:invariant:inactive}
\end{align}
Thus the $j$th data structure returns the correct rank of $\mA$.
Further, the update time will be bounded wrt.~$\rank(\mA)$ rather than $n$, because only for $2^i \approx \rank(\mA)$ will the $i$th data structures be active.

Observe that for $i=j-2,j-1$ and $i=j+1$, the data structure can be active, but does not have to be active. Only at edge cases $\rank(\mA)=2^j/2$ or $\rank(\mA)=2^j$ must the $i=j-1$ or $i=j+1$ data structure be active.
This property will later be used to facilitate worst-case update time.

\paragraph{Initialization}
We initialize the $O(\log n)$ data structures in $O(\nnz(\mA) \log n)$ time.
We then compute the rank of $\mA$ and pick $j$ accordingly. We activate the data structures for $i=j-2,..,j+1$ in $O(P(\rank(\mA)))$ time.

Computation of the rank can for example be done by activating all data structure for increasing $i$ until the returned rank does not increase any further. Then deactivating the data structures for $i<j-2$ again.
This takes $O(\sum_{i=0}^{\log\rank(\mA)}P(2^i))=O(P(\rank(\mA)))$ time.

\paragraph{Update}
We pass the updates to all data structures, which takes $O(z \log n)$ time for the inactive ones, and $O(U(2^i, z)) = O(U(\rank(\mA, z))$ for the active data structures. Here $2^i = \Theta(\rank(\mA))$ holds for active data structures by invariant \eqref{eq:invariant:j} and \eqref{eq:invariant:inactive}.

To maintain the invariant, we proceed as follows:
Since we perform column updates, the rank of $\mA$ can change by at most 1 per update.
Thus we know after each update whether $2^j = \rank(\mA)$, because the rank is correctly maintained by the $j$th data structure.

If $\rank(\mA)=2^j$, then we set $j\gets j+1$ to maintain \eqref{eq:invariant:j}.
To maintain invariants \eqref{eq:invariant:inactive} and \eqref{eq:invariant:active}, we deactivate the $(j-2)$th data structure and activate the $(j+1)$th data structure.

If instead the rank decreases and we have $\rank(\mA) = 2^{j-1}/4$, then we set $j \gets j-2$ to maintain \eqref{eq:invariant:j}.
Then we deactivate the $i$th data structures for $i>j+1$, and active the $(j-2)$th and $(j-3)$th data structure to maintain \eqref{eq:invariant:inactive} and \eqref{eq:invariant:active}.

The amortized time for activating the data structures is $O(P(\rank(\mA))/\rank(\mA))$ because because it takes $2^j$ rank-increasing updates, or $2^j(3/4)$ rank-decreasing updates until at most two activations occur.

\paragraph{Worst-Case Time}

If we spread the activation cost of the $i$th data structure over $2^i/8$ future updates, we obtain $O(P(2^i)/2^i)$ worst-case time per update rather than spending $O(P(2^i))$ in a single iteration.

If we do this, then after $2^i/8$ updates, the $i$th data structure is active but ``outdated'' (ie. missing the past $2^i/8$ updates). So over the future $2^i/8$ updates, we perform two queued updates per iteration.
After these additional $2^i/8$ updates, the $i$th data structure is ``live,'' i.e., active and represents the current status of input matrix $\mA$.

Observe that this entire process to activate the $i$th data structures takes $2^i/4$ updates.
Since the rank of $\mA$ can change by at most 1 per update, the $i$th data structure has become active soon enough to still maintain invariant \eqref{eq:invariant:active}.

As activations only happen for $2^i = \Theta(\rank(\mA))$, the worst-case time per update becomes
$$
O(z \log n + U(\rank(\mA), z) + P(\rank(\mA))/\rank(\mA)).
$$

\end{proof}

\subsection{Maintaining Column-Basis}
\label{sec:rank:basis}

In this subsection we prove \Cref{thm:basis_reduction}.

\thmBasisReduction*

With the previous dynamic rank data structure of \Cref{thm:main_rank}, \Cref{thm:basis_reduction} implies \Cref{thm:main_basis}.
In particular, we can maintain the column basis in $\tilde O(\rank(\mA)^\columnUpdate + z)$ time.

Before proving the theorem, we first state and prove a small helper-lemma/data structure.

\begin{lemma}\label{lem:Av}
    Given $\mA\in\F^{n\times n}$ and $n$-dimensional vector $v\in\F^n$,
    let $v^{(\ell,k)}$ be the vector with
    $$
    v^{(\ell,k)}_i = \begin{cases}
        v_i & \text{if } i \in (k \cdot 2^\ell, (k+1) \cdot 2^\ell] \\
        0 & \text{otherwise}
    \end{cases}
    $$

    After $O(n^2)$ time preprocessing,
    we can maintain the products $\mA v^{(\ell,k)}$ for all $0 \le \ell \le \log n$ and $0 \le k < n/2^\ell$ per column update to $\mA$. A column update that changes $z$ entries of $\mA$ takes $O(z \log n)$ time.
\end{lemma}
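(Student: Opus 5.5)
We precompute every product explicitly and update them lazily by column. There are at most $\sum_{\ell=0}^{\log n} n/2^\ell = O(n)$ vectors $v^{(\ell,k)}$, so we store $O(n)$ vectors $w^{(\ell,k)} := \mA v^{(\ell,k)} \in \F^n$.

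\emph{Preprocessing.} For each fixed level $\ell$, the vectors $v^{(\ell,0)}, v^{(\ell,1)}, \ldots$ have disjoint supports that partition $[n]$. Hence
$$\sum_k \nnz(v^{(\ell,k)}) \le n.$$
Computing $\mA v^{(\ell,k)}$ naively costs $O(n\cdot \nnz(v^{(\ell,k)}))$, since it is a combination of $\nnz(v^{(\ell,k)})$ columns of $\mA$. Summed over $k$, one level therefore costs $O(n^2)$.

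Summing over the $O(\log n)$ levels would give $O(n^2\log n)$. To reach $O(n^2)$, compute bottom-up instead:
\begin{itemize}
\item at level $\ell=0$, each product is $w^{(0,k)} = v_{k+1}\,\mA_{:,k+1}$, for $O(n^2)$ total;
\item each parent product is obtained by adding its two children,
$$w^{(\ell+1,k)} = w^{(\ell,2k)} + w^{(\ell,2k+1)},$$
which costs $O(n)$ per node, and there are $O(n)$ nodes.
\end{itemize}
This works because $v^{(\ell+1,k)} = v^{(\ell,2k)} + v^{(\ell,2k+1)}$. When $n$ is not a power of two, the vector at the last index of a level is simply truncated, and the argument is unchanged.

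\emph{Update.} A column update has the form $\mA \gets \mA + u e_i^\top$ with $\nnz(u)=z$. Then
$$\mA v^{(\ell,k)} \gets \mA v^{(\ell,k)} + u\,\bigl(e_i^\top v^{(\ell,k)}\bigr) = \mA v^{(\ell,k)} + u\, v^{(\ell,k)}_i .$$
The scalar $v^{(\ell,k)}_i$ is nonzero only if $i \in (k2^\ell,(k+1)2^\ell]$. For each $\ell$ there is exactly one such $k$, namely $k=\lfloor (i-1)/2^\ell\rfloor$. So only $O(\log n)$ stored products change.

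Each of these changes is the sparse addition of $v_i u$, which costs $O(z)$. The total update time is therefore $O(z\log n)$.

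There is no real obstacle in this argument. The only care needed is in the preprocessing bound, where the bottom-up aggregation avoids the extra $\log n$ factor that naive per-level computation would incur.
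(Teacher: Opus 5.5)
Your proof is correct and follows the paper's argument: compute the level-$0$ products column by column in $O(n^2)$ time, then build the higher levels bottom-up by summing the two children, and on a column update $\mA \gets \mA + u e_i^\top$ add $u\,v_i$ to the single affected product on each of the $O(\log n)$ levels. You also handle non-power-of-two $n$ explicitly (a missing right child is the zero vector), which the paper leaves implicit.
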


\begin{proof}
    During initialization we compute $\mA v^{(\ell, k)}$ for $\ell=0$ and $k=0,..,n-1$. this takes $O(n^2)$ time.
    Then we compute $\mA v^{(\ell-1, k)} = \mA^{(\ell,2k)}+\mA^{(\ell,2k+1)}$ for all $\ell,k$ which takes
    $\sum_{\ell=0}^{\log n} 2^\ell \cdot O(n) = O(n^2)$ time.

    When a column update to $\mA$ happens, i.e., $\mA \gets \mA + u e_i^\top$, then for each $\ell=0,...,\log n$ there is exactly one $k$ where $\mA v^{(\ell,k)}$ changes.
    The value is given by $(\mA + u e_i^\top) v^{(\ell,k)} = \mA v^{(\ell,k)} + u v_i$. 
    Thus such an update takes $O(\nnz(u) \log n)$ time for all $\ell$.
\end{proof}

\begin{proof}[Proof of \Cref{thm:basis_reduction}]
    We assume existence of a dynamic rank data structure with update time $O(U(n))$ for $n\times n$ input matrix and column-update that changes $z$ entries.
    We start the proof by explaining how to maintain a column-basis of dynamic matrix $\mA$ in update time $O((U(n) + z) \log n)$. Then we improve the time complexity to $\tilde O(P(\rank(\mA)/\rank(\mA) + U(\rank(\mA)) + z)$.

    \paragraph{Initialization}
    During initialization we compute a column basis of input matrix $\mA$.
    Let $\mB \in \F^{n\times n}$ be the column basis, i.e., the first $\rank(\mA)$ columns are filled and the rest is 0 to guarantee square-ness of $\mB$.
    Initialize the assumed dynamic rank data structure on $\mB$.
    We also initialize the $\mA v$-data structure of \Cref{lem:Av} on $\mA$ and a random vector $v\in\F^n$, where each entry is i.i.d.~a uniformly sampled field element from $\F$.
    \paragraph{Column Update}
    We update the $\mA v$-data structure (\Cref{lem:Av}).
    If the updated column is part of the basis (i.e., $\mB$), then we update the $\rank(\mB)$-data structure.
    If the rank went down, we know the updated column has become linearly dependent, so we remove it from $\mB$ and update the $\rank(\mB)$-data structure again.

    Next, we must check if maybe another column of $\mA$ could be added to the basis.
    If there exists a column, then by Schwartz-Zippel-\Cref{lem:schwartzzippel} we have that w.h.p.\footnote{Assuming $\F$ is a field of size $\poly(n)$. If it isn't, then we use a poly-sized field extension of $\F$ instead.}~ $\mA v$ is linearly independent of $\mB$.
    If there exists no column, then $\mA v$ must be linearly dependent to $\mB$.
    Thus we can binary-search for the first linearly independent column in $\mA$ by adding column $\mA v^{(\ell,k)}$ (see definition in \Cref{lem:Av}) to $\mB$ and checking if the rank increased, then removing the column again.
    At the end, $\mB$ contains a new linearly independent column of $\mA$.

    In total, we will perform at most $O(\log n)$ column-updates to the $\rank(\mB)$-data structure, and 1 update to the $\mA v$-data structure.
    Thus the time complexity is $O((U(n) + z) \log n)$.

    \paragraph{Low-Rank Case}
    Consider a random $k\times n$ matrix $\mM$ from \Cref{lem:rankprojection}. Then $\rank(\mM^\top \mB) = \min(\rank(\mB), r)$ with some constant probability. By sampling multiple $\mM_1,...,\mM_{O(\log n)}$ and taking the maximum, we get $\max_i \rank(\mM_i^\top \mB) = \min(\rank(\mB, r)$.
    Thus, while the rank is at most $r$, we can run the assumed dynamic rank data structure on $\mM^\top_i \mB$, and the $\mA v$-data structure on $(\mM_i^\top \mA) v$, for $i=1,...,O(\log n)$.
    This leads to $\tilde O(U(r)+z)$ update time, because changing $z$ entries in $\mA$, changes at most $2z$ entries in each $\mM_i \mB$ and $\mM_i \mA$.

    At last, we repeat the above for $r=2^j$ and $1 \le j \le \log n$. Throughout all updates we maintain all products $\mM^\top \mA$ for all $r$. Of the rank data structures, only those with $\rank(\mA)/2 \le r \le 2\rank(\mA)$ are active, so we can amortize the initialization cost over $\rank(\mA)$ updates.
    For the matrix vector product data structures (\Cref{lem:Av}) we can keep all of them active.
    Thus the update time becomes
    $$\tilde O(P(\rank(\mA))/\rank(\mA) + U(\rank(\mA)) + z).$$

    \paragraph{Adaptivity}
    The output of our dynamic algorithm would w.h.p.~be the same if we always compute $\rank(\mB)$ deterministically and append to $\mB$ the first column from $\mA$ that is linearly independent from $\mB$. Thus the output of our data structure is pseudo-deterministic and works against an adaptive adversary. 
\end{proof}

\subsection{Dynamic Maximum Matching Weight \& Size}

In this subsection, we prove \Cref{thm:main_matching} via standard reductions.

\thmMatching*

The result for maximum matching follows from the well-known reduction to the rank of a Tutte-matrix.

\begin{lemma}[{\cite{tutte1947factorization}}]
    Given graph $G=(V,E)$, let $\mA$ be the symbolic matrix where for each edge $\{u,v\} \in E$,
    $\mA_{u,v} = x_{uv} = -\mA_{v,u}$.
    Then $\rank(\mA)$ is twice the size of the maximum matching in $G$.

    In particular, if $\mA' \in \F^{n\times n}$ is matrix $\mA$ where we replace each $x_{uv}$ with a random element of polynomially-sized field $\F$, then w.h.p.~$\rank(\mA')$ is twice the size of the maximum matching in $G$.
\end{lemma}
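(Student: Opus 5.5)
This is Tutte's classical theorem together with a Schwartz--Zippel argument. The plan has three parts: show that the rank of the symbolic Tutte matrix $\mA$ equals $2\nu(G)$, where $\nu(G)$ is the maximum matching size; then derive the random-substitution statement from \Cref{lem:schwartzzippel}.

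\textbf{Lower bound.} Let $M$ be a maximum matching and $V(M)$ its $2|M|$ covered vertices. Consider the principal submatrix $\mA_{V(M),V(M)}$. It is the Tutte matrix of $G[V(M)]$, which has a perfect matching. I will show its determinant is a nonzero polynomial. Expand the determinant as a sum over permutations. The permutation that swaps each matched pair $u\leftrightarrow v$ contributes $\pm\prod_{uv\in M}x_{uv}^2$. A monomial is determined by its multiset of variables. Any permutation producing exactly this monomial must use each edge $uv\in M$ twice and no other edges, so it is exactly this swap permutation. Hence the monomial cannot cancel, and $\rank(\mA)\ge 2|M|$.

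\textbf{Upper bound.} Suppose $\rank(\mA)=r$. Since $\mA$ is skew-symmetric, some principal $r\times r$ minor $\mA_{W,W}$ is nonzero. Also, $r$ is even, because odd skew-symmetric matrices have zero determinant. Here I use the standard fact that a skew-symmetric matrix of rank $r$ has a nonsingular principal $r\times r$ submatrix: pick $r$ linearly independent rows $W$ and argue that the columns $W$ are independent within them via skew-symmetry. Expand $\det(\mA_{W,W})$ over permutations of $W$. Permutations with an odd cycle cancel in pairs. The standard involution reverses the odd cycle containing the smallest element; since the cycle is odd, this flips the sign of the term, and the involution preserves the monomial. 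So some permutation consisting only of even cycles, over edges of $G$, contributes. Taking alternate edges of each even cycle, and each 2-cycle as one edge, gives a perfect matching of $G[W]$. Therefore $\nu(G)\ge r/2$.

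\textbf{Random substitution.} The nonzero minor $\det(\mA_{V(M),V(M)})$ has degree at most $n$. By \Cref{lem:schwartzzippel}, it stays nonzero after random substitution with probability at least $1-n/|\F|$, which is high probability for $|\F|=\poly(n)$. Substitution can only lower the rank, so $\rank(\mA')=2|M|$ w.h.p.

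The main obstacle is the cancellation argument for odd cycles, especially when the odd cycle is a fixed point or otherwise degenerate. Fixed points are harmless because the diagonal is zero. I would carefully verify the sign bookkeeping: reversing an odd cycle of length $\ell$ replaces each factor $x_{uv}$ by $x_{vu}=-x_{uv}$, giving a factor $(-1)^\ell=-1$, while the sign of the permutation is unchanged.
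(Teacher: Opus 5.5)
Your proof is correct. The paper itself gives no proof of this lemma, since it is simply quoted from Tutte. The closest point of comparison is the Tutte-matrix lemma in \Cref{app:basis}, which has the same skeleton as your upper bound. Skew-symmetry turns a row basis $I$ into a column basis, \Cref{lem:submatrix_equiv_basis} then makes $\mA_{I,I}$ nonsingular, and $\mA_{I,I}$ is the Tutte matrix of $G[I]$. There, however, the equivalence ``the determinant of a Tutte matrix is nonzero iff the graph has a perfect matching'' is used as a black box. You instead prove both directions of that equivalence from the permutation expansion. The swap permutation is the unique source of the monomial $\prod_{uv\in M}x_{uv}^2$, and reversing the first odd cycle is a sign-reversing, monomial-preserving involution. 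This makes your argument self-contained, and it works over a field of any characteristic: the cancellations only use $t+(-t)=0$, and the surviving coefficient is $\pm 1$.

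One small correction. Your aside that $r$ is even because ``odd skew-symmetric matrices have zero determinant'' is false in characteristic $2$ for general skew-symmetric matrices (e.g.\ the $1\times 1$ matrix $[1]$), though it holds for zero-diagonal ones. In any case you do not need it: the perfect matching of $G[W]$ that you extract already forces $|W|=r$ to be even.
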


The bound for maintaining the size of a matching in $\tilde O(|M|^\bns)$ thus follows directly by maintaining the rank of $\mA'$ via the data structure of \Cref{thm:main_rank} in $\tilde O(\rank(\mA'))$ time.
Each edge update changes 2 entries in $\mA'$, and thus we must perform only two entry-updates to the dynamic rank data structure.

The result for weighted matching follows from \Cref{lem:weightedmatchingreduction}.

\begin{lemma}[{\cite{KaoLST01}}]\label{lem:weightedmatchingreduction}
    Let $G$ be an integer weighted bipartite graph.
    Construct $H$ as follows:
    For $v \in V(G)$ let $w_v$ be the largest incident edge weight. Create $w_v$ many vertices $v_1,...,v_{w_v}$ in $H$.
    For each edge $\{u,v\}\in E(G)$, create edges $\{u_1, v_w\},\{u_2, v_{w-1}\},...,\{u_w, v_1\}$, where $w$ is the weight of edge $\{u,v\}$.

    Then the size of the maximum matching in $H$ is equal to the maximum weight matching in $G$.
\end{lemma}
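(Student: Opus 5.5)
The plan is to prove the two directions of the equality separately: first that every weighted matching in $G$ yields a matching in $H$ of the same cardinality, and then that every matching in $H$ yields a weighted matching in $G$ of at least the same weight. Throughout, the vertices $v_1,\dots,v_{w_v}$ of $H$ are viewed as "slots" of $v$, and for an edge $e=\{u,v\}$ of weight $w$ (with $u$ on the left side) the gadget edges $E_e=\{\{u_i,v_{w+1-i}\}: 1\le i\le w\}$ form a perfect matching between $\{u_1,\dots,u_w\}$ and $\{v_1,\dots,v_w\}$. These indices are valid because $w\le w_u,w_v$.

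For the lower bound, let $M$ be a matching in $G$ and take $\bigcup_{e\in M}E_e$. Distinct edges of $M$ share no endpoint, so their gadget edge sets touch disjoint vertex copies. The union is therefore a matching in $H$ of size $\sum_{e\in M}w(e)$.

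For the upper bound, let $N$ be a maximum matching in $H$. Each $H$-edge belongs to some $E_e$; if several parallel gadgets share an $H$-edge, assign it to the one of largest weight. For $e=\{u,v\}$ of weight $w$, the $H$-edge $\{u_i,v_j\}$ lies in $E_e$ only if $i+j=w+1$.

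\begin{itemize}
\item I would not try to show that $N$ decomposes into whole gadgets, since partial gadgets can genuinely occur.
\item Instead, I would run an LP-duality (K\H{o}nig) argument. Any vertex cover of $H$ has size at least $|N|$. By the Egerv\'ary theorem for bipartite weighted matching there are integer potentials $p\ge0$ on $V(G)$ with $p_u+p_v\ge w(uv)$ for every edge and $\sum_v p_v$ equal to the maximum weight of a matching in $G$.
\item Build a cover of $H$ with $\sum_v p_v$ vertices by taking $u_1,\dots,u_{p_u}$ for each left vertex $u$ and $v_1,\dots,v_{p_v}$ for each right vertex $v$. We may cap $p_v\le w_v$ without breaking any edge constraint, so these copies exist.
\item This set covers every gadget edge $\{u_i,v_j\}$ with $i+j=w+1$: if neither endpoint were chosen, then $i>p_u$ and $j>p_v$, giving $i+j\ge p_u+p_v+2\ge w+2$, a contradiction.
\end{itemize}

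Hence $|N|$ is at most the size of this cover, which equals $\sum_v p_v$, the maximum matching weight in $G$. Combined with the lower bound, the two quantities are equal.

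The main obstacle is this second direction, specifically making sure the dual cover argument is correct. The points to check are the side orientation (left vertices take prefixes and right vertices take prefixes, matching the reversed indexing in the construction) and that the Egerv\'ary integrality theorem applies because $G$ is bipartite with integer weights. Everything else is bookkeeping.
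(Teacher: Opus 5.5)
Your proof is correct and complete. The paper gives no argument of its own for this lemma; it imports it from \cite{KaoLST01}, so there is no in-text proof to compare against. Your two directions together form a self-contained proof:
\begin{itemize}
\item for the lower bound, you embed a matching of $G$ into $H$ gadget by gadget;
\item for the upper bound, you take an integral Egerv\'ary cover $p$, cap it at $p_v\le w_v$, and turn it into the prefix vertex cover $\bigcup_v\{v_1,\dots,v_{p_v}\}$ of $H$, then use only weak duality in $H$.
\end{itemize}
Bipartiteness of $G$ is used exactly once, for the integrality of the dual.

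Two minor remarks. First, the orientation issue you flag does not arise: the gadget edge set $\{\{u_i,v_j\}: i+j=w+1\}$ is symmetric in $u$ and $v$. Second, your rule for assigning shared $H$-edges to gadgets is never used afterwards, and in a simple graph no two gadgets share an edge anyway, so it can be dropped.
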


Let $\mA'$ be the randomized Tutte-matrix of graph $H$.
Then a weight $W$ edge-update to $G$ changes $W$ edges in $H$ and thus $W$ entries in matrix $\mA'$.
The rank of $\mA'$ is the size of the maximum matching in $H$ and thus the weight of the maximum weight matching in $G$.
In particular, we can maintain the maximum weight in $\tilde O(W\cdot \rank(\mA')^\bns) = \tilde O (W\cdot k^\bns)$ time per edge-update, where $W$ is the weight of the updated edge and $k$ is the weight of the maximum weight matching.



\section*{Acknowledgments}

Jan van den Brand is supported
by NSF Award CCF-2338816 and CCF-2504994. 
Daniel J. Zhang is supported by NSF Award CCF-2338816.

\bibliography{ref}
\bibliographystyle{alpha}

\clearpage

\appendix
\section{Deterministic Combinatorial Dynamic Matching}

In this section, we give a deterministic combinatorial algorithm for maintaining a maximum matching $|M|$ on a bipartite graph $G$ in $O(|M|^2)$ per update.

\thmCombinatorial

The algorithm is based on the following observation, which is a variation of the ``core-graphs'' of \cite{GuptaP13}: Let $W\subset V$ be the matched vertices, then any augmenting path of length $\ge 3$ uses (except for the first and last unmatched edge on the augmenting path) only edges from the induced subgraph $G[W]$.
Let $W' \supseteq W$ by adding for each $w\in W$ an unmatched neighbor to the set (if $w$ has an unmatched neighbor). 
Then there is an augmenting path of length $\ge 3$ in $G$ if and only if there is an augmenting path of length $\ge 3$ in $G[W']$, obtained by simply replacing the first and last vertex on the augmenting path. By $|W'|\le 2 |M|$, finding an augmenting path in $G[W']$ takes only $O(|M|^2)$ time.

To prove \Cref{thm:main_combinatorial} rigorously, we need to take care of how to maintain the set $W'$.

We describe a helper data structure that supports some basic operations on graphs.

\begin{lemma}\label{lem:graph_data_structure}
    There is a deterministic data structure that supports the following operations on a graph $G$ on $n$ vertices.
    \begin{itemize}
        \item $\texttt{Initialize}()$ the empty graph in $O(n^2)$.
        \item $\texttt{Test}(e)$ that returns if $e\in E(G)$.
        \item $\texttt{Insert}(e)$ that inserts an edge $e\notin E(G)$ in $O(1)$.
        \item $\texttt{Delete}(e)$ that removes an edge $e\in E(G)$ in $O(1)$.
        \item $\texttt{Subgraph}(S)$ that returns $G[S]$ induced by $S\subseteq V(G)$ in $O(|S|^2)$.
        \item $\texttt{ExternalNeighbors}(S)$ that returns a mapping $x:S\to V(G)\cup\{\bot\}$ such that for each vertex $v\in S$, $x(v)\in N_G(v)\setminus S$, or $x(v)=\bot$ if none exist, in $O(|S|^2)$.
    \end{itemize}
\end{lemma}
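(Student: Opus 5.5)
The plan is to realize the data structure directly with an adjacency matrix plus adjacency lists, with constant-time deletion support. \texttt{Initialize} allocates an $n\times n$ array $\mathrm{Adj}$ with all entries $\bot$, costing $O(n^2)$. For every vertex $v$ we also keep a doubly linked list $L_v$ of its current neighbors. The entry $\mathrm{Adj}[u][v]$ either stores $\bot$ (when $\{u,v\}\notin E$) or a pointer to the node of $v$ in $L_u$; the entry $\mathrm{Adj}[v][u]$ symmetrically points into $L_v$.

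The elementary operations then follow immediately.
\begin{itemize}
\item \texttt{Test}$(\{u,v\})$ checks whether $\mathrm{Adj}[u][v]\neq\bot$, which takes $O(1)$.
\item \texttt{Insert}$(\{u,v\})$ appends $v$ to $L_u$ and $u$ to $L_v$, then stores the two node pointers in the matrix, which takes $O(1)$.
\item \texttt{Delete}$(\{u,v\})$ follows the two stored pointers, unlinks those nodes from the doubly linked lists in $O(1)$, and resets both matrix entries to $\bot$.
\item \texttt{Subgraph}$(S)$ iterates over all pairs $u,v\in S$ and queries $\mathrm{Adj}[u][v]$, which outputs $G[S]$ in $O(|S|^2)$ time.
\end{itemize}

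The only step requiring care is \texttt{ExternalNeighbors}$(S)$. Scanning all of $N_G(v)$ is too slow, since degrees may be as large as $n$. To handle this, first mark every vertex of $S$ in a global boolean array. That array is allocated once during initialization, and the marks are cleared afterwards in $O(|S|)$. Next, for each $v\in S$, walk $L_v$ from its head and skip marked vertices. The key observation is that at most $|S|-1$ entries of $L_v$ can be marked, since marked entries are neighbors of $v$ lying in $S$. Hence within the first $|S|$ entries examined we either meet an unmarked vertex, which we return as $x(v)\in N_G(v)\setminus S$, or we exhaust the list, in which case we set $x(v)=\bot$. Each $v$ therefore costs $O(|S|)$, giving a total of $O(|S|^2)$.

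The main obstacle is exactly this bound for \texttt{ExternalNeighbors}: one must make sure the list scan stops after $O(|S|)$ steps rather than $O(\deg v)$. The counting argument above settles it, because the only entries that can be skipped are neighbors of $v$ that lie in $S$. Everything is deterministic, which completes the plan.
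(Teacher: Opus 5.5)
Your proposal is correct and matches the paper's construction: an $n\times n$ adjacency matrix whose nonempty entries are nodes of doubly linked adjacency lists. The paper additionally threads column lists, which are redundant for undirected graphs, and its \texttt{ExternalNeighbors} likewise scans each list while skipping vertices marked in an auxiliary array; your explicit count, that at most $|S|-1$ marked entries are skipped before an unmarked vertex or the end of the list is reached, makes precise the $O(|S|)$-per-vertex bound the paper only asserts.
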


The main difficulty is the $\texttt{ExternalNeighbors}$ operation. 
We would like to enumerate neighbors in $O(1)$ time each, which suggests using an adjacency list. However, we also wish to check if edges are present in $O(1)$ time, which suggests using an adjacency matrix instead.

\begin{proof}
We will use a hybrid of an adjacency matrix and an adjacency list. It will be convenient to store each edge as two arcs.

We assume the vertices are labeled $1,\ldots,n$ and so can be used to index into an array.

We will store an $n\times n$ matrix $A$ where each entry $A_{uv}$ is either empty or a node indicating the presence of an arc from vertex $u$ to vertex $v$.
We also have $n$ special row nodes and $n$ special column nodes, one for each row and column, respectively.

The nodes in each row will be linked in a circular doubly-linked list, along with their row node. The nodes in each column will be linked in a circular doubly-linked list, along with their column node. In other words, each arc stores the previous and next arc (not necessarily ordered by label) in its row list and the previous and next arc in its column list.

Testing if an edge $uv$ is present can be done by checking if $A_{uv}$ is nonempty.

To insert an edge $uv$, we fill $A_{uv}$ with a node, and insert it into the row list for $u$ using the special row node, and likewise for the column. We do the same for $A_{vu}$.

To delete an edge $uv$, we remove the node $A_{uv}$ from its row and column lists, and set it to empty.

To output $G[S]$, we just look at the $S\times S$ entries in $A$.

To find an entry of $N_G(v)\setminus S$ (if one exists) in $O(|S|)$, we iterate over the neighbors of $v$ (by iterating through the linked list starting at its row node) and skip ones that belong to $S$. We can determine what belongs to $S$ using an auxiliary array of size $n$ that gets temporarily marked with the entries in $S$.

\end{proof}

The following \Cref{lem:locally_augment} formalizes the observation that we can restrict our search to an induced subgraph.
\begin{lemma}\label{lem:locally_augment}
Let $G$ be a bipartite graph. Suppose $M$ is a maximal matching in $G$.

Let $x:S\to V(G)\cup\{\bot\}$ be any mapping such that for each $v\in S$, $x(v)\in N_G(v)\setminus S$, or $x(v)=\bot$ if none exist. Let $W=V(M)\cup\{x(v):v\in V(M)\}\setminus\{\bot\}$.

Then, either $M$ is a maximum matching in $G$, or there exists 
an augmenting path for $M$ in $G[W]$.
\end{lemma}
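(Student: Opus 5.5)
Assume $M$ is not maximum, so by Berge's theorem there is an augmenting path $P = v_0, v_1, \dots, v_{2t+1}$ whose endpoints $v_0, v_{2t+1}$ are unmatched and whose edges alternate, starting and ending with non-matching edges. The plan is to move the two endpoints of $P$ into $W$ while keeping the interior unchanged, so that the modified path is an augmenting path inside $G[W]$. Here $S = V(M)$ is the domain of $x$, as in the definition of $W$.

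First, consider $t = 0$. Then $P$ is a single edge $v_0v_1$ between two unmatched vertices, so $M \cup \{v_0v_1\}$ is a matching, contradicting maximality of $M$. Hence $t \ge 1$. Every interior vertex $v_1, \dots, v_{2t}$ is covered by a matching edge of $P$, so $v_1, \dots, v_{2t} \in V(M) \subseteq W$. In particular $v_1 \in V(M)$ has the neighbor $v_0 \notin V(M)$, so $N_G(v_1) \setminus V(M) \neq \emptyset$ and therefore $x(v_1) \neq \bot$. Set $a := x(v_1) \in W \setminus V(M)$; this is an unmatched neighbor of $v_1$. Symmetrically, set $b := x(v_{2t}) \in W \setminus V(M)$, an unmatched neighbor of $v_{2t}$.

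Now replace $P$ by $P' = a, v_1, \dots, v_{2t}, b$. All of its vertices lie in $W$, its edges are edges of $G$, the interior alternation is unchanged, and both endpoints are unmatched. The main obstacle is checking that $P'$ is a genuine path, i.e.\ that $a \neq b$. The interior vertices are distinct matched vertices, and $a, b$ are unmatched, so neither endpoint coincides with an interior vertex. To rule out $a = b$ I use bipartiteness. On $P$, the vertices $v_1$ and $v_{2t}$ have indices of different parity, so they lie on opposite sides of the bipartition. Their neighbors $a$ and $b$ therefore also lie on opposite sides, which forces $a \neq b$.

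Hence $P'$ is an augmenting path for $M$ in $G[W]$, proving the lemma. Bipartiteness is used exactly once, in the final step, to exclude the degenerate closed walk where $a = b$.
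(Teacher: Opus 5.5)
Your proof is correct and takes essentially the same route as the paper. You obtain an augmenting path (you cite Berge, while the paper extracts it from $M\Delta N$ for a maximum matching $N$), rule out length one by maximality, and replace the two endpoints by $x(v_1)$ and $x(v_{2t})$ while the interior stays inside $V(M)\subseteq W$. Your bipartiteness argument that $x(v_1)\neq x(v_{2t})$, via $v_1$ and $v_{2t}$ lying on opposite sides, is exactly what that step needs; the paper's text only states $v_1\neq v_k$ for the original endpoints, so your version is the more precise statement of it.
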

\begin{proof}
Suppose $M$ is a maximal matching in $G$ which is not a maximum matching.
Let $N$ be a maximum matching in $G$. Since each vertex has degree one or two, $G[M\Delta N]$ consists of a vertex-disjoint union of paths and cycles. Each path and cycle alternates between edges of $M$ and edges of $N$. Since $|N|>|M|$, there must be a component of $G[M\Delta N]$ that has more edges from $N$ than $M$, which is necessarily an augmenting path. Let $P=v_1v_2\ldots v_k$ be this augmenting path, where $k\ge 2$ is even. Note that $V(P)\setminus\{v_1,v_k\}\subseteq V(M)$ as $v_2v_3,\ldots,v_{k-2}v_{k-1}$ are edges in $M$. Also, $v_1,v_k\notin V(M)$.

If $k=2$, then $M+v_1v_2$ would be a larger matching than $M$ in $G$, contradicting maximality.

Therefore, $k\ge 4$ and $v_2,v_{k-1}\in V(M)$. Let $v_1'=x(v_2)$ and $v_k'=x(v_{k-1})$. They cannot be $\bot$ since $v_1\in N_G(v_2)\setminus V(M)$ and $v_k\in N_G(v_{k-1})\setminus V(M)$.
Since $v_1',v_k'\in W\setminus V(M)$, they are distinct from $v_2,\ldots,v_{k-1}\in V(M)$. Furthermore, $v_1\ne v_k$ as they are on different sides of the bipartition of $G$. Hence, $P':=v_1'v_2\cdots v_{k-1}v_k'$ is a path in $G[W]$. It is clearly an augmenting path for $M$ in $G[W]$.
\end{proof}

We prove \Cref{thm:main_combinatorial} as the following \Cref{lem:combinatorial}.

\begin{lemma}[\Cref{thm:main_combinatorial}]\label{lem:combinatorial}%
There is a deterministic data structure that maintains a maximum matching $M$ of a bipartite graph $G$ on $n$ vertices which supports the following operations.
    \begin{itemize}
        \item $\texttt{Initialize}()$ the empty graph in $O(n^2)$.
        \item $\texttt{Insert}(e)$ that inserts an edge $e\notin E(G)$ in $O(|M|^2)$.
        \item $\texttt{Delete}(e)$ that removes an edge $e\in E(G)$ in $O(|M|^2)$.
        \item $\texttt{Matching}()$ that returns a maximum matching of $G$ in $O(|M|)$.
    \end{itemize}
\end{lemma}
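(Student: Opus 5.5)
We maintain, in the data structure of \Cref{lem:graph_data_structure}, the graph $G$ together with the current maximum matching $M$, stored as a partner array $\mathrm{mate}:V(G)\to V(G)\cup\{\bot\}$ and a doubly-linked list of matched edges. \texttt{Initialize} sets up the empty graph and empty matching in $O(n^2)$. \texttt{Matching} walks the edge list in $O(|M|)$. The whole proof rests on one invariant: after every operation $M$ is a maximum matching. Since a single edge update changes the maximum matching size by at most one, each update needs only a constant number of augmentations or repairs. Each augmentation is carried out on a small induced subgraph supplied by \Cref{lem:locally_augment}.

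\texttt{Insert}$(e)$ with $e=uv$. First insert the edge into the graph structure. If both $u,v$ are unmatched, add $e$ to $M$ and stop. Otherwise $M$ is still maximal, since all other edges are unchanged, and the maximum size increased by at most one. Compute $x=\texttt{ExternalNeighbors}(V(M))$ in $O(|M|^2)$ and set $W=V(M)\cup x(V(M))$, so $|W|\le 4|M|$. Extract $G[W]$ via \texttt{Subgraph}$(W)$ in $O(|M|^2)$. Run one alternating BFS from all $M$-unmatched vertices of $W$; this takes time linear in the size of $G[W]$, i.e. $O(|M|^2)$. By \Cref{lem:locally_augment}, either an augmenting path exists there and we augment (updating $\mathrm{mate}$ in $O(|M|)$), or $M$ is already maximum. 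One augmentation suffices because the maximum grew by at most one.

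\texttt{Delete}$(e)$. Remove $e$ from the graph structure. If $e\notin M$, $M$ stays a maximum matching and we are done. Otherwise set $M\gets M-e$. Now $M$ may fail to be maximal, but only at the two freed endpoints $u,v$. To restore maximality, scan the neighbor lists of $u$ and $v$ for an unmatched neighbor; this is the costly part, since degrees may be large. Here the same trick works: compute \texttt{ExternalNeighbors}$(V(M)\cup\{u,v\})$ in $O((|M|+2)^2)$. The resulting map $x$ tells us whether $u$ or $v$ has a neighbor outside $V(M)\cup\{u,v\}$. If so, greedily add such an edge, and possibly one more for the other endpoint, or use the edge $uv$ itself if it is still present. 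After at most two such additions $M$ is maximal. We then run the augmenting-path step from the insertion case at most once. This is enough because after deleting a matching edge, the maximum drops by at most one, so $M-e$ is at most one below optimum.

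The main obstacle is justifying the time bound in terms of the \emph{current} $|M|$ and keeping maximality as a precondition of \Cref{lem:locally_augment}. Maximality after an insertion holds because the only new edge was checked directly. After a deletion, only $u,v$ can violate it, and their external neighbors are found within the $O(|M|^2)$ budget by the \texttt{ExternalNeighbors} query on $V(M)\cup\{u,v\}$. The sizes of $M$ before and after an update differ by at most one, so every bound above is $O(|M|^2)$ for the current maximum matching size.
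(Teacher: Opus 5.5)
Your proposal is correct and follows the paper's proof: the same graph data structure from \Cref{lem:graph_data_structure}, the invariant that $M$ is maximum, and one augmentation per update found in $G[W]$ via \Cref{lem:locally_augment}. Your explicit repair of maximality at the freed endpoints $u,v$ after deleting a matched edge is in fact necessary---the paper simply asserts that $M-e$ is still maximal, which fails e.g.\ for the path $w$--$u$--$v$ with $M=\{uv\}$, where $W=\emptyset$ and the augmenting edge $uw$ is missed---so your version closes a small gap in the paper's argument (and at most one greedy addition can actually occur, since $x(u),x(v)\neq\bot$ would make $x(u)\,u\,v\,x(v)$ an augmenting path for the old maximum matching).
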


\begin{proof}
We will maintain $G$ in an instance of the data structure of \Cref{lem:graph_data_structure}, along with a set of edges $M$.

We will keep the invariant that between operations, $M$ is a maximum matching of $G$.

Observe that if $M$ is a maximal matching in $G$ whose size is at most one fewer than a maximum matching in $G$, we can make it maximum matching in $O(|M|^2)$ as follows.
Let $x\gets \texttt{ExternalNeighbors}(V(M))$. %
Let $W\gets V(M)\cup\{x(v):v\in V(M)\}\setminus\{\bot\}$.
We then construct $G[W]$ explicitly.
By \Cref{lem:locally_augment}, either $M$ is a maximum matching, or it has an augmenting path in $G[W]$. We can find an augmenting path $P$ in $G[W]$ (if one exists) in $O(|W|^2)$ using breadth-first search, and set $M\gets M\Delta P$ to get a maximum matching. Each of these steps can be done in $O(|M|^2)$ time.

To delete an edge $e\notin M$, we simply remove it from $G$. The size of the maximum matching cannot increase, so $M$ remains a maximum matching.

To delete an edge $e\in M$, we remove it from $G$ and from $M$. Now, $M$ is a maximal matching in $G$ whose size is at most one fewer than a maximum matching in $G$.
We now augment as in the preceding discussion.

To insert an edge $uv$ where $u,v\notin V(M)$, we insert it in $G$ and in $M$.

To insert an edge $uv$ where $u\in V(M)$ or $v\in V(M)$, we insert it in $G$. Then, $M$ is a maximal matching in $G$ whose size is at most one fewer than a maximum matching in $G$. We now augment as in the preceding discussion.
\end{proof}

\section{Other Facts about Basis}
\label{app:basis}

The following lemma shows that maintaining a maximum nonsingular submatrix is equivalent to maintaining a row and column basis. Note that the requirement of having bases here is crucial: a linearly independent set of rows and a linearly independent set of columns do not necessarily index a full-rank submatrix (for example, the first row and second column of an identity matrix).
\begin{lemma}\label{lem:submatrix_equiv_basis}
    Let $\mA\in\mathbb{F}^{m\times n}$ be a matrix and $I\subseteq[m],J\subseteq[n]$. Then, $\mA_{I,J}$ is a nonsingular submatrix of size $|I|=|J|=\rank(\mA)$ iff the rows of $\mA$ indexed by $I$ form a row basis and the columns of $\mA$ indexed by $J$ form a column basis.
\end{lemma}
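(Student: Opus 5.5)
We prove the two directions separately. Throughout, let $r=\rank(\mA)$.

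\emph{($\Rightarrow$).} Suppose $\mA_{I,J}$ is nonsingular with $|I|=|J|=r$. Then the columns of $\mA_{I,J}$ are linearly independent. Any linear dependency $\sum_{j\in J}c_j \mA_{[m],j}=0$ among the full columns of $\mA$ indexed by $J$ restricts to a dependency among the columns of $\mA_{I,J}$. Hence the columns $\mA_{[m],J}$ are $r$ linearly independent vectors in the column space of $\mA$, which has dimension $r$, so they form a column basis. Applying the same argument to $\mA^\top$ shows that the rows indexed by $I$ form a row basis.

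\emph{($\Leftarrow$).} Suppose the rows indexed by $I$ form a row basis and the columns indexed by $J$ form a column basis, so $|I|=|J|=r$.

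First, every row of $\mA$ is a linear combination of the rows indexed by $I$. So there is a matrix $\mC\in\F^{m\times r}$ with $\mA=\mC\,\mA_{I,[n]}$. Restricting to the columns in $J$ gives $\mA_{[m],J}=\mC\,\mA_{I,J}$.

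Next, the columns of $\mA_{[m],J}$ are linearly independent, so $\rank(\mA_{[m],J})=r$. Then
\[
r=\rank(\mA_{[m],J})=\rank(\mC\,\mA_{I,J})\le\rank(\mA_{I,J})\le r.
\]
Hence $\rank(\mA_{I,J})=r$, and the $r\times r$ matrix $\mA_{I,J}$ is nonsingular.

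The only subtle point is the $(\Leftarrow)$ direction. There one must use that $I$ spans all rows, not merely that the rows in $I$ are independent; this is exactly where the identity-matrix counterexample in the remark fails. Factoring $\mA$ through its row basis, as $\mA=\mC\,\mA_{I,[n]}$, resolves this cleanly.
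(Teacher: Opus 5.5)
Your proof is correct and follows essentially the same argument as the paper. The forward direction is the same observation that a dependency among the full columns (or rows) restricts to one in $\mA_{I,J}$. In the converse, you carry the row relations $\mA=\mC\,\mA_{I,[n]}$ over to $\mA_{[m],J}$, whereas the paper carries the column relations of $J$ over to $\mA_{I,[n]}$; this is the same idea with rows and columns interchanged.
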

\begin{proof}
    If $\mA_{I,J}$ is nonsingular and $|I|=|J|=\rank(\mA)$, then
    \begin{align*}
        \rank(\mA)=|I|=\rank(\mA_{I,J})\le \rank(\mA_{I,[n]})\le \rank(\mA)
    \end{align*}
    Thus, the rows indexed by $I$ are $\rank(\mA)$ linearly independent rows of $\mA$, and are thus a basis. Similarly, the columns indexed by $J$ are $\rank(\mA)$ linearly independent columns of $\mA$, and are thus a basis.

    Conversely, suppose that $I$ indexes a row basis and $J$ indexes a column basis. Then, every row of $\mA$ with index not in $I$ can be written as a linear combination of rows of $\mA$ with index in $I$. In particular, $\rank(\mA_{I,[n]})=\rank(\mA)$. Similarly, every column of $\mA$ with index not in $J$ can be written as a linear combination of columns of $\mA$ with index in $J$. The same holds for $\mA_{I,[n]}$ in place of $\mA$, using the same linear combinations. Hence, $\rank(\mA_{I,J})=\rank(\mA_{I,[n]})=\rank(\mA)$. Since $|I|=|J|=\rank(\mA)$, $\mA_{I,J}$ is nonsingular.
\end{proof}

\begin{lemma}
    Let $\mA\in\F^{V\times V}$ be the Tutte matrix of a graph $G=(V,E)$. Then, the rows of $\mA$ indexed by $I\subseteq V$ form a row basis iff $I$ is the vertex set of a maximum matching in $G$.
\end{lemma}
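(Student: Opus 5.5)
The plan is to deduce the statement from Tutte's theorem applied to the induced subgraphs $G[I]$, combined with the skew-symmetry of $\mA$. Throughout, $\mA$ is the symbolic Tutte matrix, or its random instantiation, where w.h.p. all the statements below hold simultaneously by Schwartz--Zippel. The key fact I will use is that $\mA_{I,I}$ is exactly the Tutte matrix of $G[I]$. Hence $\det(\mA_{I,I})\neq 0$ if and only if $G[I]$ has a perfect matching, i.e.\ iff $I$ is the vertex set of some matching of $G$.

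For the direction ($\Leftarrow$), suppose $I=V(M)$ for a maximum matching $M$. Then $|I|=2|M|=\rank(\mA)$ by the Tutte lemma stated above. Moreover $\mA_{I,I}$ is nonsingular, since $M$ is a perfect matching of $G[I]$. Thus $\mA_{I,I}$ is a nonsingular submatrix of size $\rank(\mA)$, and \Cref{lem:submatrix_equiv_basis} (with $J=I$) gives that the rows indexed by $I$ form a row basis.

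For the direction ($\Rightarrow$), suppose the rows indexed by $I$ form a row basis. Since $\mA^\top=-\mA$, the columns indexed by $I$ are the negatives of the transposed rows. Linear (in)dependence is preserved under transpose and sign flip, so the columns indexed by $I$ form a column basis. \Cref{lem:submatrix_equiv_basis} then yields that $\mA_{I,I}$ is nonsingular with $|I|=\rank(\mA)$. Nonsingularity of the Tutte matrix of $G[I]$ means $G[I]$ has a perfect matching $M$. This $M$ is a matching in $G$ of size $|I|/2=\rank(\mA)/2$, which equals the maximum matching size, so $M$ is maximum and $I=V(M)$.

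The main subtlety is the symbolic-versus-random issue. For the symbolic matrix everything is exact. For a random instantiation, the ($\Leftarrow$) direction needs $\det(\mA_{I,I})\neq 0$ for the specific $I$, and the ($\Rightarrow$) direction needs that a nonzero evaluated determinant implies a nonzero symbolic one, which is automatic. The rank equality holds w.h.p. For the ($\Leftarrow$) direction I would state the claim for the symbolic matrix, or w.h.p.\ per fixed $I$, rather than union-bounding over all $I$.
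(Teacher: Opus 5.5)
Your proof is correct and follows the paper's argument. In both, skew-symmetry upgrades the row basis to a column basis, \Cref{lem:submatrix_equiv_basis} yields $\det(\mA_{I,I})\neq 0$, and Tutte's theorem on $G[I]$ (whose Tutte matrix is $\mA_{I,I}$) converts nonsingularity into a perfect matching on $I$ and back. The only variation is that you use the identity $\rank(\mA)=2\cdot(\text{maximum matching size})$ for the size comparisons, whereas the paper uses only the perfect-matching form of Tutte's theorem: it gets maximality by noting that a larger matching on $J$ would make $\mA_{J,J}$ nonsingular, and it proves ($\Leftarrow$) by extending $I$ to a row basis $J\supseteq I$ and applying ($\Rightarrow$).
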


\begin{proof}
    Suppose the rows of $\mA$ indexed by $I$
form a row basis. Since the Tutte matrix is skew-symmetric, the columns of $\mA$ indexed by $I$ form a column basis. By \Cref{lem:submatrix_equiv_basis}, $\mA_{I,I}$ is nonsingular. Note that $\mA_{I,I}$ is the Tutte matrix of $G[I]$ (the subgraph of $G$ induced by $I$). Since $\mA_{I,I}$ is nonsingular, $G[I]$ must have a perfect matching. That is, there is a matching in $G$ with vertex set $I$. This must be a maximum matching: if $J$ is the vertex set of a larger matching, then $\mA_{J,J}$ is nonsingular and so $\rank(\mA)\ge \rank(\mA_{J,J})=|J|>|I|$, contradicting $I$ indexing a row basis.

Conversely, suppose $I$ is the vertex set of a maximum matching in $G$. Then, $\mA_{I,I}$, being the Tutte matrix of $G[I]$, is nonsingular. Thus, the rows of $\mA$ indexed by $I$ are linearly independent. Let $J\supseteq I$ index a row basis in $\mA$. By the previous paragraph, $J$ is the vertex set of a maximum matching in $G$. Hence, $J=I$.

\end{proof}

\end{document}